\title{Global Type Inference for Featherweight Generic Java} %TODO Please add
\author{Andreas Stadelmeier}{DHBW Stuttgart, Campus Horb, Germany}{a.stadelmeier@hb.dhbw-stuttgart.de}{}{}%TODO mandatory, please use full name; only 1 author per \author macro; first two parameters are mandatory, other parameters can be empty. Please provide at least the name of the affiliation and the country. The full address is optional. Use additional curly braces to indicate the correct name splitting when the last name consists of multiple name parts.
\author{Martin Plümicke}{DHBW Stuttgart, Campus Horb, Germany}{pl@dhbw.de}{}{}
\author{Peter Thiemann}{Universität Freiburg, Institut für Informatik, Germany}{thiemann@informatik.uni-freiburg.de}{}{}
\authorrunning{A. Stadelmeier and M. Plümicke and P. Thiemann} %TODO mandatory. First: Use abbreviated first/middle names. Second (only in severe cases): Use first author plus 'et al.'
\keywords{type inference, Java, subtyping, generics} %TODO mandatory; please add comma-separated list of keywords
\ttfamily\fontsize{8}{9.6}\selectfont, %\footnotesize
\lstdefinestyle{fgj}{backgroundcolor=\color{lime!20}}
\lstdefinestyle{tfgj}{backgroundcolor=\color{lightgray!20}}
\newcommand\mv[1]{{\tt #1}}
\newcommand{\ol}[1]{\overline{\tt #1}}
\newcommand{\exptype}[2]{\mathtt{#1 \texttt{<} #2 \texttt{>} }}
\newcommand\ddfrac[2]{\frac{\displaystyle #1}{\displaystyle #2}}
\newcommand{\olsub}{\textrm{$<:$}\ }
\newcommand{\sub}\olsub%{\mbox{$<$}}
\newcommand{\set}[1]{\{ #1 \} }
\definecolor{red}{rgb}{1,0,0}
\newcommand{\red}[1]{\textcolor{red}{#1}}
\newcommand{\commentary}[1]{\marginpar[\tiny
  \red{#1}]{\tiny \red{#1}}}
\newcommand\Erase[1]{|#1|}
\newcommand\Angle[1]{\langle#1\rangle}
\newcommand\TFGJ{FGJ-GT\xspace}
\newcommand\FGJGT{FGJ-GT\xspace}
\newcommand\TVX{\mv X}
\newcommand\TVY{\mv Y}
\newcommand\TVZ{\mv Z}
\newcommand\TVW{\mv W}
\newcommand\CL[1]{\textit{Cl}$_{#1}$}
\newcommand\subconstr{\lessdot}
\newcommand\eqconstr{\doteq}
\newcommand\subeq{\mathbin{\texttt{<:}}}
\newcommand\extends{\ensuremath{\triangleleft}}
\newcommand\rulename[1]{\textup{\textrm{(#1)}}}
\newcommand{\environmentvdash}{\Pi;\Delta;\Gamma \vdash}
\newcommand{\fjtypeinference}{\textbf{FJTypeInference}}
\newcommand{\tv}[1]{\mathit{ #1 }}
\newcommand{\unifyGenerics}{\ensuremath{\gamma}}% {\ensuremath{\overline{\type{G}\triangleleft \type{H}}}}
\newcommand{\fjtype}{\textbf{FJType}}
\newcommand{\unify}{\textbf{Unify}}
\newcommand{\typeMethod}{\textbf{TYPEMethod}}
\newcommand{\typeExpr}{\textbf{TYPEExpr}}
\newcommand{\constraint}{\ensuremath{\mathit{c}}}%{\ensuremath{\mathtt{C}}}
\newcommand{\consSet}{C}%{\ensuremath{\overline{\mathtt{C}}}}
\newcommand{\orCons}{\textit{oc}}%{\ensuremath{\textbf{C}_{||}}}
\newcommand{\simpleCons}{\textit{sc}}
\newcommand{\typeAssumptionsSymbol}{\ensuremath{\Theta}}
\newcommand{\constraints}{\ensuremath{\mathit{\overline{c}}}}
\newcommand{\itype}[1]{\ensuremath{\mathit{#1}}}
\newcommand{\il}[1]{\ensuremath{\overline{\itype{#1}}}}
\newcommand{\type}[1]{\texttt{#1}}
\newcommand{\mtypeEnvironment}{\ensuremath{\Pi}}
\newcommand{\methodAssumption}{\ensuremath{\mathtt{\lambda}}}
\newcommand{\localVarAssumption}{\ensuremath{\mathtt{\eta}}}
\newcommand{\expandLB}{\textit{expandLB}}
\def\exptypett#1#2{\texttt{#1}\textrm{{\tt <#2}}\textrm{{\tt >}}\xspace}
\def\exp#1#2{#1(\,#2\,)\xspace}
\begin{document}

\maketitle

\begin{abstract}
  Java's type system mostly relies on type checking augmented with
  local type inference to improve programmer convenience.

  We study global type inference for Featherweight Generic Java (FGJ), a
  functional Java core language. Given generic class headers and field
  specifications, our inference algorithm infers all method types if
  classes do not make use of polymorphic recursion.
  The algorithm is constraint-based and improves on prior work in
  several respects. Despite the restricted setting, global type
  inference for FGJ is NP-complete.
\end{abstract}

\section{Introduction}
\label{sec:introduction}

Java is one of the most important programming languages. In 2019, Java
was the second most popular language according to a study
based on GitHub
data.\footnote{\url{https://www.businessinsider.de/international/the-10-most-popular-programming-languages-according-to-github-2018-10/}} Estimates
for the number of Java programmers range between 7.6 and 9 million.\footnote{\url{https://www.zdnet.com/article/programming-languages-python-developers-now-outnumber-java-ones/},
\url{http://infomory.com/numbers/number-of-java-developers/}} Java
has been around since 1995 and progressed through 16 versions.

Swarms of programmers have taken their first steps in Java. Many more
have been introduced to object-oriented programming through Java, as
it is among the first mainstream languages supporting
object-orientation. Java is a class-based language with static single inheritance among
classes, hence it has nominal types with a specified subtyping
hierarchy. Besides classes there are interfaces to characterize common 
traits independent of the inheritance hierarchy. Since version J2SE~5.0,
the Java language supports F-bounded polymorphism in the form of generics.

Java is generally explicitly typed with some amendments introduced in
recent versions. That is, 
variables, fields, method parameters, and method returns must be
adorned with their type. Figure~\ref{fig:intro-example-generic-fj}
contains a simple example with generics.
% from Featherweight Java \cite{DBLP:journals/toplas/IgarashiPW01} 
\begin{figure}[tp]
%   \begin{subfigure}[t]{0.55\textwidth}
% \begin{lstlisting}
% class Pair {
%   Object fst;
%   Object snd;
%   Pair(Object fst, Object snd) {
%    this.fst=fst; 
%    this.snd=snd;
%   }
%   Pair setfst(Object fst) {
%     return new Pair(fst, this.snd);
%   }
% }
% \end{lstlisting}
%   \end{subfigure}
  \begin{subfigure}[t]{0.49\linewidth}
\begin{lstlisting}[style=fgj]
class Pair<X,Y> {
  X fst;
  Y snd;
  Pair<X,Y>(X fst, Y snd) {
    this.fst=fst;
    this.snd=snd;
  }
  <Z> Pair<Z,Y> setfst(Z fst) {
    return new Pair(fst, this.snd);
  }
  Pair<Y,X> swap() {
    return new Pair(this.snd, this.fst);
  }
}  
\end{lstlisting}
    \caption{Featherweight Generic Java (FGJ)}
    \label{fig:intro-example-generic-fj}
  \end{subfigure}
  ~
  \begin{subfigure}[t]{0.49\linewidth}
\begin{lstlisting}[style=tfgj]
class Pair<X,Y> {
  X fst;
  Y snd;
  Pair(fst, snd) {
    this.fst=fst; 
    this.snd=snd;
  }
  setfst(fst) {
    return new Pair(fst, this.snd);
  }
  swap() {
    return new Pair(this.snd, this.fst);
  }
}  
\end{lstlisting}
    \caption{FGJ with global type inference (\TFGJ)}
    \label{fig:intro-example-generic-jtx}
  \end{subfigure}
  \caption{Example code}
  \label{fig:intro-example-code}
\end{figure}

Our type inference algorithm is able to infer generalized method types
as shown in figure \ref{fig:intro-example-generic-fj}.
Our algorithm deducts those types out of the input in \ref{fig:intro-example-generic-jtx}.

But this is only possible by processing each class by itself as explained in chapter \ref{chapter:syntax}.
Lets assume for example, inside the class \texttt{Pair} there would be a call to the method \texttt{setfst} like so:
\texttt{setfst(new Integer())}.
This would cause the method \texttt{setfst} to get the type \texttt{Pair<Integer, Y> setfst(Integer fst)}.
This is due to the type rule \texttt{GT-CLASS} in figure \ref{fig:expression-typing}.
Inside the same class methods cannot be used in a polymorphic way.
We have to make this restriction to combat polymorphic recursion, which would render type inference undecidable.

While the overhead of explicit types look reasonable in the example,
realistic programs often contain variable initializations like
the following:\footnote{Taken from
  \url{https://stackoverflow.com/questions/4120216/map-of-maps-how-to-keep-the-inner-maps-as-maps/4120268}.} 
\begin{lstlisting}[basicstyle=\ttfamily\fontsize{8}{9.6}\selectfont,style=fgj]
  HashMap<String, HashMap<String, Object>> outerMap =
    new HashMap<String, HashMap<String, Object>>();
\end{lstlisting}

Java's \emph{local variable type inference} (since version 10\footnote{\url{https://openjdk.java.net/jeps/286}}) deals
satisfactorily with examples like the initialization of
\lstinline{outerMap}. 
In many initialization scenarios for local variables, Java infers their type
if it is obvious from the context. In the
example, we can write
\begin{lstlisting}[basicstyle=\ttfamily\fontsize{8}{9.6}\selectfont,style=fgj]
var outerMap = new HashMap<String, HashMap<String, Object>>();
\end{lstlisting}
because the constructor of the map spells out the type in
full. More specifically, ``obvious'' means that the right side of the initialization is
\begin{itemize}
\item a constant of known type (e.g., a string),
\item a constructor call, or
\item a method call (the return type is known from the method
  signature).
\end{itemize}
The \lstinline{var} declaration can also be used for an iteration
variable where the type can be obtained from the elements of the
container or from the initializer.
Alternatively, if the variable is used as the method's return value,
its type can be obtained from the current method's signature.

However, there are still many places where the programmer must provide types. In
particular, an explicit type must be given for
\begin{itemize}
\item a field of a class,
\item a local variable without initializer or initialized to \lstinline{NULL},
\item a method parameter, or
\item a method return type.
\end{itemize}

In this paper, we study \emph{global type inference} for Java. Our aim
is to write code that omits most type annotations, except for class
headers and field types. Returning to the \lstinline{Pair} example, it
is sufficient to write the code in Figure~\ref{fig:intro-example-generic-jtx}
and global type inference fills in the rest so that the result is
equivalent to Figure~\ref{fig:intro-example-generic-fj}. Our
motivation to study global type inference is threefold.
\begin{itemize}
\item Programmers are relieved from writing down obvious types. 
\item Programmers may write types that leak implementation details. The
  \lstinline{outerMap} example provides a good example of this
  problem. From a software engineering
  perspective, it would be better to use a more general abstract type like
\begin{lstlisting}[basicstyle=\ttfamily\fontsize{8}{9.6}\selectfont,style=fgj]
Map<String, Map<String, Object>> outerMap = ...
\end{lstlisting}
  Global type inference finds most general types.
\item Programmers may write types that are more specific than
  necessary instead of using generic types. Here, type
  inference helps programmers to find the most general type. Suppose
  we wanted to add a static  method \texttt{eqPair} for pairs of integers to the
  \lstinline/Pair/ class.
\begin{lstlisting}[basicstyle=\ttfamily\fontsize{8}{9.6}\selectfont,style=fgj]
boolean eqPair (Pair<Integer,Integer> p) {
  return p.fst.equals(p.snd);
}
\end{lstlisting}
  With global type inference it is sufficient to write the code on the
  left of Figure~\ref{fig:equal-pair} and obtain the FGJ code with the most general type on the right.
\end{itemize}
  \begin{figure}[t]
    \begin{minipage}[t]{0.49\linewidth}
\begin{lstlisting}[style=tfgj]
eqPair (p) {
  return p.fst.equals(p.snd);
}
\end{lstlisting}
    \end{minipage}
    \begin{minipage}[t]{0.49\linewidth}
\begin{lstlisting}[style=fgj]
<T> boolean eqPair (Pair<T,T> p){
  return p.fst.equals<T>(p.snd);
}
\end{lstlisting}
    \end{minipage}
    \caption{\lstinline{eqPair} in \TFGJ and FGJ}
    \label{fig:equal-pair}
  \end{figure}
% \item Sometimes, it can be hard to find a correct typing at all.
% \todo[inline]{example?}

To make our investigation palatable, we focus on global type inference for Featherweight
Generic Java \cite{DBLP:journals/toplas/IgarashiPW01} (FGJ), a
functional Java core language with full support for generics. Our type inference algorithm
applies to FGJ programs that specify the full class header and all field types,
but omit all method signatures. 
Given this input, our algorithm
infers a set of most general method signatures (parameter types and return types).
Inferred types are generic as much as possible and may contain
recursive upper bounds.

The inferred signatures have the following round-trip property
(relative completeness). If we
start with an FGJ program that does not make use of polymorphic
recursion (see Section~\ref{sec:polym-recurs}), strip all types from
method signatures, and run the algorithm on the 
resulting stripped program, then at least one of the inferred typings is equivalent or more
general than the types in the original FGJ program.

\subsection*{Contributions}
\label{sec:contributions}

We specify syntax and type system of the language \FGJGT, which drops all method type
annotations from FGJ and the typing of which rules out polymorphic
recursion. This language is amenable to polymorphic type inference and
each \FGJGT program can be completed to an FGJ program (see Section~\ref{sec:preliminaries}). 

We characterize uses of polymorphic recursion in FGJ and their impact
on signatures of generic methods (Section~\ref{sec:polym-recurs-form}). 

We define a constraint-based algorithm that performs global type
inference for \FGJGT. This algorithm is sound and relatively complete
for FGJ programs without polymorphic recursion
(Sections~\ref{sec:type-infer-algor} and~\ref{sec:unify}). Our algorithm improves
on previous attempts at type inference for Java in the literature as detailed in
Section~\ref{sec:related-work}. 

We investigate the complexity of global type inference and show its NP-completeness (Section~\ref{sec:complexity}).

We implemented a prototype of the type inference algorithm, which we
plan to submit for artifact evaluation.

\if0
\commentary{PT what else do you want to say? Do you want to point to
  your implementation for full Java? Can we point to Andi's prototype
  (check conditions in CFP)?
  Submit as an artifact?}

Our contributions in this paper are an algorithm for global type inference for
FGJ. Therefore we redraft the typing rules of FGJ such that the programs without
type annotations could be correct and in this case polymorphic recursion is
excluded. We prove soundness and completeness of the algorithm about the rules.

The type inference algorithm is reduced to a constraint solving type unification
algorithm. We improved our type unification algorithm such that constraints
of the form $a \lessdot ty$, where $a$ is type variable and $ty$ is is a
non-type variable type,
are not resolved rather converted to bounded type parameters \texttt{a extends
  ty}. This implicates an enormes reduction of solutions of the type
unification algorithm without restricting the generality of typings of
FGJ-programs.

We show that Global type inference for FGJ is NP complete.

Finally we have done an implementation of  global type inference for  a reduced
set of full Java.
\fi

%%% Local Variables:
%%% mode: latex
%%% TeX-master: "TIforGFJ"
%%% End:

\section{Motivation}
\label{sec:motivation}

% Examples, examples, examples from simple to more advanced showing off
% the (difficult) features of inference.

This section presents a sequence of more and more challenging
examples for global type inference (GTI). To spice up our examples
somewhat, we assume some predefined utility classes with the following
interfaces.
\begin{lstlisting}[style=fgj]
class Bool {
  Bool not(); 
}
class Int {
  Int negate ();
  Int add (Int that);
  Int mult (Int that);
}
class Double {
  Double negate ();
  Double add (Double that);
  Double mult (Double that);
}
\end{lstlisting}

We generally use upper case single-letter identifiers like $\TVX,
\TVY, \dots$ for type variables.
Given a FGJ-GT class \CL 0, 
we call any FGJ class \CL i that can be transformed to \CL 0 by
erasing type annotations a \emph{completion of  \CL 0}.

\subsection{Multiplication}
\label{sec:multiplication}

Here is the \TFGJ code  for multiplying the components of a
pair.\footnote{We indicate \TFGJ code fragments by using a
  {gray background}.}
\begin{lstlisting}[style=tfgj]
class MultPair {
  mult (p) { return p.fst.mult(p.snd); }
}
\end{lstlisting}
Assuming the parameter typing $\mv p:\mv P$, result type $\mv R$, and that
\texttt{mult} in the body refers to \texttt{Int.mult}, we
obtain the following constraints.
\begin{itemize}
\item From \texttt{p.fst}: $\mv P \subconstr \mathtt{Pair}\Angle{\TVX,\TVY}$ and
  $\texttt{p.fst} : \TVX$.
\item From \texttt{p.snd}: $\mv P \subconstr \mathtt{Pair}\Angle{\TVZ,\TVW}$ and
  $\texttt{p.snd} : \TVW$.
\item The two constraints on $\mv P$ imply that $\TVX \eqconstr \TVZ$ and
  $\TVY \eqconstr \TVW$.
\item From \texttt{.mult (p.snd)}: $\TVX \subconstr \mathtt{Int}$, $\TVY \subconstr
  \mathtt{Int}$, and $\mathtt{Int} \subconstr \mv R$.
\end{itemize}
The return type $\mv R$ only occurs positively in the constraints, so we can
set $\mv R = \mathtt{Int}$.
The argument type $\mv P$ only occurs negatively in the constraints,
so $\mv P = \mathtt{Pair} \Angle{\TVX,\TVY}$.
This reasoning gives rise to the following completion.
% \begin{lstlisting}[style=fgj]
% class MultPair {
%   Int mult (Pair<Int,Int> p) { return p.fst.mult(p.snd); }
% }
% \end{lstlisting}
\begin{lstlisting}[style=fgj]
class MultPair {
  <X extends Int, Y extends Int>
  Int mult (Pair<X,Y> p) { return p.fst.mult(p.snd); }
}
\end{lstlisting}
We obtain a second completion if we assume that \texttt{mult} refers to
\texttt{Double.mult}.
\begin{lstlisting}[style=fgj]
class MultPair {
  <X extends Double, Y extends Double>
  Double mult (Pair<X,Y> p) { return p.fst.mult(p.snd); }
}
\end{lstlisting}

Finally, the definition of \texttt{mult} might be recursive, which
generates different constraints for the method invocation of \texttt{mult}.
\begin{itemize}
\item From \texttt{.mult (p.snd)}: $\TVX \subconstr \mathtt{MultPair}$, $\TVY \subconstr
  \mathtt{P}$, and $\mathtt{R} \subconstr \mv R$.
\end{itemize}
Transitivity of subtyping applied to $\TVY \subconstr \mathtt{P}$ and $\mv P \subconstr \mathtt{Pair}\Angle{\TVX,\TVY}$
yields the constraint $\TVY \subconstr \mathtt{Pair}\Angle{\TVX,\TVY}$, which triggers the
occurs-check in unification and is hence rejected. 

% The corresponding example in full Java would be the dot product. In
% full Java, we also have to deal with overloading of the multiplication
% operator, which amounts to considering \texttt{Int.mult} and
% \texttt{Double.mult}. Overloading is not allowed in FGJ, but method
% names in different classes may overlap.

The two solutions can be combined to
\begin{lstlisting}[style=fgj]
class MultPair {
  <X extends T1, Y extends T2>
  T0 mult (Pair<X,Y> p) { return p.fst.mult(p.snd); }
}
\end{lstlisting}
where
\begin{align*}
  (T_0, T_1, T_2) & \in \{ (\mv{Int}, \mv{Int}, \mv{Int}), (\mv{Double}, \mv{Double}, \mv{Double}) \}
\end{align*}

\subsection{Inheritance}
\label{sec:inheritance}

\begin{figure}[tp]
  \begin{subfigure}[t]{0.49\linewidth}
\begin{lstlisting}[style=tfgj]
class A1 {
  m(x) { return x.add(x); }
}
class B1 extends A1 {
  m(x) { return x; }
}
\end{lstlisting}
  \end{subfigure}
\begin{subfigure}[t]{0.49\linewidth}
\begin{lstlisting}[style=tfgj]
class A2 {
  m(x) { return x; }
}
class B2 extends A2 {
  m(x) { return x.add(x); }
}
\end{lstlisting}
  \end{subfigure}
  \caption{Method overriding}
  \label{fig:method-overriding}
\end{figure}

Let's start with the artificial example in the left
listing of Figure~\ref{fig:method-overriding} and ignore the \texttt{Double} class. Type
inference proceeds according 
to the inheritance hierarchy starting from the superclasses. In class
\texttt{A1}, the inferred method type is \texttt{Int A1.m (Int)}. Class \texttt{B1} is a
subclass of \texttt{A1} which must override \texttt{m} as there is no
overloading in FGJ. However, the inferred method 
type is \texttt{<T> T B1.m(T)}, which is not a correct
method override for \texttt{A1.m()}.
Hence, GTI must instantiate the method type in the subclass \texttt{B1} to
\texttt{Int B1.m(Int)}.

Conversely, for the right listing of
Figure~\ref{fig:method-overriding}, GTI infers the types
\texttt{<T> T A2.m (T)} and \texttt{Int B2.m (Int)}. Again, these
types do not give rise to a correct method override and 
GTI is now forced to instantiate the type in the superclass to
\texttt{Int A2.m (Int)}.

\todo[inline]{This example shows that GTI must first collect the constraints 
  for all methods \texttt{m{}} in a class hierarchy. Generalization can only happen after
  all constraints on \texttt{m}'s type have been considered.}

% Otherwise, we would get strange results. Consider extending the
% program with classes \texttt{A} and \texttt{B} by the following class.
% \begin{lstlisting}[style=tfgj]
% class C {
%   mbool(x) { return x.m(new Bool()); }
% }
% \end{lstlisting}
% Inferring the type of \texttt{C.mbool()} using \texttt{A.m()} would
% fail because \texttt{Int} and \texttt{Bool} are not
% unifiable. However, inferring the type of \texttt{C.mbool()} using
% \texttt{B.m()} (with the generic type) would succeed and yield the
% typing
% \begin{lstlisting}[style=fgj]
% class C {
%   Bool mbool(B x) { return x.m(new Bool()); }
% }
% \end{lstlisting}

In full Java, type inference would have to offer two alternative
results: either two different
overloaded methods (one inherited and one local) in \texttt{B1}/\texttt{B2} or
impose the typing \texttt{Int B1.m(Int)} or \texttt{Int A2.m(Int)} to enforce correct overriding.

\subsection{Inheritance and Generics}
\label{sec:inheritance-generics}

\begin{lstlisting}[float,caption={Function class}, label={lst:function-class},style=fgj]
class Function<S,T> {
  T apply(S arg) { return this.apply (arg); }
}
\end{lstlisting}
Suppose we are given a generic class for modeling functions in  FGJ (Listing~\ref{lst:function-class}).
This code is constructed to serve as an ``abstract'' super class to derive more
interesting subclasses.
The class \texttt{Function<S,T>} must be presented in this explicit
way. Its type annotations \textbf{cannot} be inferred by GTI because
the use of the generic class parameters in the method type cannot be inferred from the
implementation.

If we applied GTI to the type-erased version of
Listing~\ref{lst:function-class}, the \texttt{apply} method would be considered a generic method: 
\begin{center}
  \begin{minipage}{0.3\linewidth}
\begin{lstlisting}[style=tfgj]
apply (arg) { ... }
\end{lstlisting}
  \end{minipage}
  \hfill\texttt{ --GTI--> }\hfill
  \begin{minipage}{0.45\linewidth}
\begin{lstlisting}[style=fgj]
<A,B> B apply (A arg) { ... }
\end{lstlisting}
  \end{minipage}
\end{center}
The typing of \texttt{apply} in Listing~\ref{lst:function-class} is an
instance of this result, so that completeness of GTI is preserved!

% While it is possible to  force GTI to infer the intended typing, it requires some awkward coding.
% \begin{lstlisting}[style=tfgj]
% class Function<S,T> {
%   S in;
%   T out;
%   dummy (arg) { return this.dummy(in); }
%   apply (arg) { return new Pair<>(this.out, this.dummy (arg)).fst; }
% }
% \end{lstlisting}

Now that we have the abstract class \texttt{Function<S,T>} at our
disposal, let us apply GTI to a class of boxed values with a
\texttt{map} function:
\begin{lstlisting}[style=tfgj]
class Box<S> {
  S val;
  map(f) {
    return new Box<>(f.apply(this.val));
} }
\end{lstlisting}
GTI finds the following constraints
\begin{itemize}
\item the return value must be of type \texttt{Box<T>}, for some type
  \texttt{T},
\item \texttt{T} is a supertype of the type returned by
  \texttt{f.apply},
\item \texttt{apply} is defined in class \texttt{Function<S1,T1>} with
  type \texttt{T1 apply(S1 arg)}, 
\item hence \texttt{T1\,<:\,T} and \texttt{S\,<:\,S1} (because
  \texttt{this.val\,:\,S}),
\end{itemize}
and resolves them to the desired outcome where \texttt{T1=T} and
\texttt{S=S1} using the methods of
Simonet~\cite{DBLP:conf/aplas/Simonet03}. 
\begin{lstlisting}[style=fgj]
class Box<S> {
  S val;
  <T> Box<T> map(Function<S,T> f) {
    return new Box<T>(f.apply<S,T>(this.val));
} }
\end{lstlisting}
But what happens if we add subclasses of \texttt{Function}?
For example:
\begin{lstlisting}[style=tfgj]
class Not extends Function<Bool,Bool> {
  apply(b) { return b.not(); }
}
class Negate extends Function<Int,Int> {
  apply(x) { return x.negate(); }
}
\end{lstlisting}
\todo[inline]{Does this approach really make sense? If we have a
  subclass that overrides a method of the superclass, then the public
  interface should be the one of the superclass. Subsequently, the
  implementation of the method in the subclass should be checked
  against the type inferred for the superclass.}
If we rerun GTI with these classes, we now have additional
possibilities to invoke the \texttt{apply} method. With \texttt{Not}, we need to use the
generic type of \texttt{Function.apply()}, but instantiate it according to
\texttt{Function<Bool,Bool>}. Thus,
we obtain the constraints \texttt{Bool $\subconstr$ T} and \texttt{S $\subconstr$ Bool} for \texttt{T = Bool}
and \texttt{S = Bool}, which are both satisfiable. With
\texttt{Negate} we run into the same situation with the constraints
\texttt{Int $\subconstr$ Int} and \texttt{Int $\subconstr$ Int}.
% That means there is no way to
% \emph{directly} invoke \texttt{Not.apply} or \texttt{Negate.apply}
% from \texttt{Box.map}, but it may happen indirectly via inheritance.

Here is another subclass of \texttt{Function<S,T>} that we want
to consider.
\begin{lstlisting}[style=fgj]
class Identity<S> extends Function<S,S> {
  S apply(S arg) { return arg; }
}
\end{lstlisting}
Here, we obtain the following type constraints
\begin{itemize}
\item \texttt{apply} is defined in class \texttt{Identity<S1>} with
  type \texttt{S1 apply (S1 arg)},
\item hence \texttt{S1 $\subconstr$ T} and \texttt{S $\subconstr$ S1}.
\end{itemize}
Resolving the constraints yields \texttt{S = T} thus the typing
\begin{lstlisting}[style=fgj]
Box<S> map(Identity<S> f);
\end{lstlisting}
which is an instance of the previous typing.

\subsection{Multiple typings}
\label{sec:multiple-results}
% Our global type inference algorithm is able to infer every type annotation.
% For the sake of simplicity we will only consider method types in this
% paper and assume that field types are specified.
\begin{figure}[tp]
  \begin{minipage}{0.49\linewidth}
\begin{lstlisting}[style=fgj]
class List<A> {
  List<A> add(A item) {...}
  A get() { ... }
}
\end{lstlisting}
  \end{minipage}
  ~$\left|
  \begin{minipage}{0.49\linewidth}
\begin{lstlisting}[style=tfgj]
class Global{
  m(a){
    return a.add(this).get();
} }
\end{lstlisting}
  \end{minipage}\right.$
  \caption{Example for multiple inferred types}
  \label{fig:example-types-not-unique}
\end{figure}
Global type inference processes classes in order of
dependency.
To see why, consider the classes \texttt{List<A>} and \texttt{Global}
in Figure~\ref{fig:example-types-not-unique}. 
Class \texttt{Global} may depend on
class \texttt{List} because \texttt{Global} uses methods \texttt{add} and
\texttt{get} and \texttt{List} defines methods with the same names.
The dependency is only approximate because, in general, there may be additional classes
providing methods \texttt{add} and \texttt{get}.

In the example, it is safe to assume that the types for the methods of class \texttt{List}
are already available, either because they are given (as in the code
fragment) or because they were inferred before considering class \texttt{Global}.

The method \texttt{m} in class \texttt{Global} first invokes
\texttt{add} on \texttt{a}, so the type of \texttt{a} as well as the
return type of \texttt{a.add(this)} must be
\texttt{List<T>}, for some \texttt{T}. As \texttt{this} has
type \texttt{Global}, it must be that \texttt{Global} is a
subtype of \texttt{T}, which gives rise to the constraint
\texttt{Global $\subconstr$ T}. By the typing of \texttt{get()} we
find that the return type of method \texttt{m} is also \texttt{T}.

But now we are in a dilemma because FGJ only supports \emph{upper bounds} for
type variables,\footnote{Java has the same restriction. Lower bounds
  are only allowed for wildcards.} so that 
\texttt{Global $\subconstr$ T} is not a valid constraint in FGJ.
To stay compatible with this restriction, global type inference
expands the constraint by instantiating \texttt{T} with the (two) superclasses fulfilling
the constraint, \texttt{Global} and \texttt{Object}.  They give rise to two incomparable
types for 
\texttt{m}, \texttt{List<Global> -> Global} and
\texttt{List<Object> -> Object}. So there are two different FGJ
programs that are completions of the \texttt{Global} class.

GTI models these instances by inferring an \emph{intersection type}
\texttt{List<Global> -> Global \& List<Object> -> Object}
for method \texttt{m} and the different FGJ-completions of class \texttt{Global} are
instances of the intersection type:\footnote{The cognoscenti will be
  reminded of overloading. However, FGJ does not support overloading,
  so we rely on resolution by subsequent uses of the method. Moreover,
  this intersection type cannot be realized by overloading in a Java
  source program because it is resolved according to the raw classes
  of the arguments, in this case \lstinline{List}. It can be realized
  in bytecode which supports overloading on the return type, too.% 
}
\begin{center}
  \begin{minipage}{0.49\linewidth}
\begin{lstlisting}[style=fgj]
class Global {
  Global m(List<Global> a) {
    return a.add(this).get();
}
\end{lstlisting}
  \end{minipage}
  \begin{minipage}{0.49\linewidth}
\begin{lstlisting}[style=fgj]
class Global {
  Object m(List<Object> a) {
    return a.add(this).get();
}
\end{lstlisting}
  \end{minipage}
\end{center}
In this sense, the inferred intersection type represents a principal
typing for the class.
% \begin{lstlisting}[style=fgj]
% class Global {
%   <Global <: T> T m( List<T> a ) {
%     return a.add(this).get();
% }
% \end{lstlisting}
Additional classes in the program may further restrict the number of
viable types. Suppose we define a class \texttt{UseGlobal} as
follows:
\begin{lstlisting}[style=tfgj]
class UseGlobal {
  main() {
    return new Global().m((List<Object>) new List());
} }
\end{lstlisting}
Due to the dependency on \texttt{Global.m()}, type inference considers this class after class
\texttt{Global}. As it uses \texttt{m} at type
\texttt{List<Object> -> Object}, global type inference narrows the
type of \texttt{m} to just this alternative.

% \todo[inline]{If there are conflicting uses of a method with an
%   intersection type \texttt{T1\&T2}, one might consider splitting the
%   class into one providing the method with type \texttt{T1} and
%   another providing it with \texttt{T2}.}

\subsection{Polymorphic recursion}
\label{sec:polym-recurs}
\begin{figure}[tp]
  \begin{minipage}{0.49\linewidth}
\begin{lstlisting}[style=fgj]
class UsePair {
  <X,Y> Object prc(Pair<X,Y> p) {
    return this.prc<Y,X> (p.swap<X,Y>());
} }
\end{lstlisting}
  \end{minipage}
  ~$\left|
  \begin{minipage}{0.49\linewidth}
\begin{lstlisting}[style=tfgj]
class UsePair {
  prc(p) {
    return this.prc (p.swap());

} }
\end{lstlisting}
  \end{minipage}\right.$
  \caption{Example for polymorphic recursion}
  \label{fig:examples-poly-rec}
\end{figure}
A program uses \emph{polymorphic recursion} if there is a generic method that is invoked
recursively at a more specific type than its definition.
As a toy example for polymorphic recursion consider the FGJ class \texttt{UsePair} with a
generic method \texttt{prc} that invokes itself
recursively on a swapped version of its argument pair
(Figure~\ref{fig:examples-poly-rec}, left).
This method makes use of polymorphic recursion because the type of the
recursive call is different from the declared type of the method. More
precisely, the declared argument type is \texttt{Pair<X,Y>} whereas
the argument of the recursive call has type
\texttt{Pair<Y,X>}---an instance of the declared type.

For this particular example, global type inference succeeds on the
corresponding stripped program shown in
Figure~\ref{fig:examples-poly-rec}, right, but it yields a more restrictive
typing of \texttt{<X> Object prc (Pair<X,X> p)} for the method. 
A minor variation of the FGJ program with a non-variable instantiation makes type inference fail entirely:
\begin{lstlisting}[style=fgj]
class UsePair2 {
  <X,Y> Object prc(Pair<X,Y> p) {
    return this.prc<Y,Pair<X,Y>> (new Pair (p.snd, p));
  }
}
\end{lstlisting}

Polymorphic recursion is known to make type inference intractable
\cite{DBLP:journals/toplas/Henglein93,DBLP:journals/toplas/KfouryTU93}
because it can be reduced to an undecidable semi-unification problem
\cite{DBLP:journals/iandc/KfouryTU93}. However, \emph{type checking} with
polymorphic recursion is tractable and routinely used in languages like Haskell
and Java.

GTI does not infer method types with polymorphic recursion. Inference either fails or
returns a more restrictive type. Classes making use of polymorphic recursion need to supply
explicit typings for methods in question.

% Features:
% \begin{itemize}
% \item overloaded methods
% \item class header is complete in the form
%   $\mathtt{class}\ C\langle\overline X <: \overline N\rangle <: N \{
%   \overline T\ \overline f;\ K\ \overline M \}$
% \item constructor fully typed (as it is determined by the types of the
%   fields and the supertype)
% \item method signatures may be omitted, i.e., 
%   $M\ ::=\ m(\overline x) \{ \ \mathtt{return}\ e;\ \}$
% \item polymorphic recursive methods must be annotated.
% \end{itemize}

%%% Local Variables:
%%% mode: latex
%%% TeX-master: "TIforGFJ"
%%% End:

\section{Featherweight Generic Java with Global Type Inference}
\label{sec:preliminaries}

This section defines the syntax and type system of a modified version
of the language Featherweight Generic Java
(FGJ)~\cite{DBLP:journals/toplas/IgarashiPW01}, which we call \TFGJ
(with Global Type Inference). The main omissions with respect to FGJ are method types specifications
and polymorphic recursion. We finish the section by formally
connecting FGJ and \TFGJ and by establishing some properties about
polymorphic recursion in FGJ.

\subsection{Syntax}\label{chapter:syntax}
% \commentarymargin{Input assumptions}
% No overloaded methods
% No Or-Constraints
\begin{figure}[tp]
\begin{align*}
  \mv T &::= \mv X \mid \mv N \\
  \mv N &::= \exptype{C}{\ol{T}}\\
  \mv L &::= \mathtt{class} \ \exptype{C}{\ol{X} \triangleleft \ol{N}} \triangleleft \ \mv N\ \{ \ol{T} \ \ol{f}; \,\mv K \, \ol{M} \} \\
  \mv K &::= \mv C(\ol{f})\ \{\mathtt{super}(\ol{f}); \ \mathtt{this}.\ol{f}=\ol{f};\} \\
  \mv M &::= \mathtt{m}(\ol{x})\ \{ \mathtt{ return}\ \mv e; \} \\
  \mv e &::= \mv x \mid \mv e.\mv f \mid
             \mv e.\mathtt{m}(\ol{e}) \mid \mathtt{new}\ \mathtt{C}(\ol{e})
             \mid (\mv N)\ \mv e
\end{align*}
  \caption{Syntax of \TFGJ}
  \label{fig:syntax-tfgj}
\end{figure}
Figure~\ref{fig:syntax-tfgj} defines the syntax of \TFGJ.
Compared to FGJ,
type annotations for method parameters and method return types are omitted.
Object creation via \texttt{new} as well as method calls come do not
require instantiation of their generic parameters.
We keep the class constraints ${\ol{X} \triangleleft \ol{N}}$ as well as the types
for fields $\ol{T} \ \ol{f}$ as we consider them as part of the
specification of a class.

% \textbf{Prechecks:}
We make the following assumptions for the input program:
\begin{itemize}
\item All types $\mv N$ and $\mv T$ are well formed according to the
  rules of FGJ, which carry over to \TFGJ (see Fig.~\ref{fig:well-formedness-and-subtyping}).
  % This can be checked easily, because no types annotations are omitted in the class header.
  % The generic variables of each class as well as their bounds are given in the input.
  % We assume that the each written type in the input in bounds of
  % classes, super class types, field types and casts is already
  % checked for well-formedness.
\item The methods of a class call each other mutually recursively.
\item The classes in the input are topologically sorted so that later
  classes only call methods in classes that come earlier in the
  sorting order.
  % to comply with the  rules
  % \rulename{GT-METHOD} and \rulename{GT-CLASS} (see
  % Section~\ref{chapter:type-rules}) of \TFGJ. 
\end{itemize}
Our requirements on the method calls do not impose serious
restrictions as any class, say \mv{C}, can be transformed to meet them as
follows. A preliminary dependency analysis determines an 
approximate call graph. We cluster the methods of \mv{C} according to
the $n$ strongly
connected components of the call graph. Then we split the class into a
class hierarchy 
$\mv{C}_1 \extends \dots \extends \mv{C}_n$ such that each class $\mv{C}_i$ contains
exactly the methods of one strongly connected component and assign a
method cluster to $\mv{C}_i$ if all calls to methods of  $\mv{C}$ now
target methods assigned to $\mv{C}_j$, for some $j\ge i$. The class
$\mv{C}_1$ replaces $\mv{C}$ everywhere in the program: in subtype
bounds, in \texttt{new} expressions, and in casts. More precisely, if
\mv{C} is defined by $\mathtt{class\ \exptype{C}{\ol X \extends \ol N}
\extends N \dots} $, then the class headers for the $\mv{C}_i$ are
defined as follows:
\begin{itemize}
\item  $\mathtt{class\ \exptype{C_i}{\ol X \extends \ol N}
\extends \exptype{C_{i+1}}{\ol X} \dots} $, for $1\le i < n$ and
\item  $\mathtt{class\ \exptype{C_n}{\ol X \extends \ol N}
\extends N \dots} $.
\end{itemize}
It follows from this discussion that the resulting classes have to be
processed backwards starting with $\mv{C}_n, \mv{C}_{n-1}, \dots, \mv{C}_1$.
Figure \ref{fig:example-decluster} showcases this process with a short example.

\begin{figure}[tp]
    \begin{subfigure}[t]{0.49\linewidth}
\begin{lstlisting}[style=tfgj]
class C extends Object {
  m1(a){
    return a;
  }
  m2(b){
    return this.id(a);
  }
}
\end{lstlisting}
      \caption{The methods \texttt{m1} and \texttt{m2} can be separated}
    \end{subfigure}
    ~
    \begin{subfigure}[t]{0.49\linewidth}
\begin{lstlisting}[style=tfgj]
class C1 extends C2 {
  m2(b){
    return this.id(a);
  }
}
class C2 extends Object {
  m1(a){
    return a;
  }
}
\end{lstlisting}
      \caption{After the transformation}
    \end{subfigure}
    \caption{Example for splitting a class into its strongly connected components}
    \label{fig:example-decluster}
  \end{figure}

\subsection{Typing}
\label{chapter:type-rules}
We start with some notation.
An environment $\mv\Gamma$ is a finite mapping from variables to types,
written $\ol x:\ol T$; a type environment $\mv\Delta$ is a finite mapping
from type variables to nonvariable types, written $\ol X\subeq\ol
N$, which takes each type variable to its bound. As in FGJ, we do not
impose an ordering on environment entries to enable F-bounded
polymorphism.

There is a new method environment $\mv\Pi$ which maps pairs of a class header
$\exptype{C}{\ol X}$ and a method name $\mv m$ to a set of method
types of the form $\exptype{}{\ol{Y} \triangleleft  \ol{P}} \ol{T} \to \mv{T}$. 
It supports the \textit{mtype} function that relates a nonvariable type
$\mv N$ and a method name $\mv m$ to a method type.

The judgments for subtyping $\mathtt{\Delta \vdash S \subeq T}$ and
well-formedness of types $\mathtt{\Delta \vdash T\ \mathtt{ok}}$
(Figure~\ref{fig:well-formedness-and-subtyping}) stay the same as in
FGJ.

The overall approach to typing changes with respect to FGJ. In FGJ,
classes can be checked in any order as the method typings of all other
classes are available in the syntax. \TFGJ processes classes in
order such that early classes do not invoke methods in late classes.

The new program typing rule \rulename{GT-PROGRAM} for the judgment
$\vdash \ol{L} : \mv\Pi$ reflects this
approach. It starts with an empty
method environment and applies class typing to each class in the
sequence provided. Each processed class adds its method typings to the method
environment which is threaded through to constitute the program type
as the final method environment $\mv\Pi$.

Expression typing $\mathtt{\Pi; \Delta; \Gamma \vdash e
  : T}$ changes subtly (see Figure~\ref{fig:expression-typing}). As
\TFGJ omits some type annotations, we are forced to adapt some of  
FGJ's typing rules. The new rules infer omitted types and disable
polymorphic recursion. 

The new method 
environment $\mv\Pi$ is only used in the revised
rule for method invocation 
\rulename{GT-INVK}, where it is passed as an additional parameter to
\textit{mtype}. The revised definition of \textit{mtype} (Figure~\ref{fig:auxiliary-functions}) locates the class that contains
the method definition by traversing the subtype hierarchy and looks up the
method type in environment $\mv\Pi$, which contains the method types
that were already inferred. Our definition of \textit{mtype} does not support
overloading as $\mv\Pi$ relate at most one type to each method
definition (cf.\ rule \rulename{GR-CLASS}). The instantiation of the
method's type parameters is inferred in \TFGJ.

The rule \rulename{GT-NEW} changes to infer the
instantiation of the class's type parameters: the rule
simply assumes a suitable instantiation by some $\ol U$.

Finally, \rulename{GT-CAST} replaces the three rules
\rulename{GT-UCAST'}, \rulename{GT-DCAST'}, and \rulename{GT-SCAST'}
of FGJ. This is a slight simplification with respect to FGJ. While
the three original rules cover disjoint use cases (upcast, downcast,
and stupid cast that is sure to fail) of the cast
operation, they are not exhaustive! The rule \rulename{GT-DCAST'} only
admits downcasts that work the same in a type-passing semantics as in
a type erasure semantics. We elide this distinction for simplicity,
though it could be handled by introducing constraints analogous to the
\textit{dcast} function from FGJ. 

% The input for our type inference algorithm is based on Featherweight Generic Java (FGJ).
% FGJ is defined by syntax and typing rules.
% We already changed the syntax to allow typeless FGJ programs as input for our algorithm.
% Additionally we alter the typing rules slightly, which is presented in this chapter.

%Our type inference algorithm takes typeless FGJ classes as input.
%The generated output is correct FGJ, although we have to alter some rules.
%This chapter defines the typing rules for our version of FGJ,
%which our type inference algorithm is able to process.

The typing rule for a method $\mv m$, \rulename{GT-METHOD}, changes
significantly. By our assumption on the order, in which classes are
processed, the typing of $\mv m$ is already provided by the 
method environment $\mv\Pi$.  The type environment $\mv\Delta$ is
also provided as an input. Moreover, to rule out polymorphic
recursion, the assumptions about the local methods of class $\mv{C}$
are monomorphic at this stage. The rule type checks the body for the
inferred type of method $\mv m$.

All this information is provided and generated by the rule for class
typing, \rulename{GT-CLASS}. A class typing for \mv{C} receives an incoming
method type environment $\mv\Pi$ and generates an extended one
$\mv{\Pi''}$ which additionally contains the method types inferred for
\mv{C}.

In $\mv{\Pi'}$, we generate some monomorphic types for all
methods of class $\mv C$. We use these types to check the methods. Afterwards, we
return generalized versions of these same types in $\mv{\Pi''}$. All
method types use the same generic type variables $\ol Y$ with the
same constraints $\ol P$. It is safe to make this assumption in the absence of
polymorphic recursion as we will show in
Proposition~\ref{prop:polymorphi-recursion}.

\if0
\begin{itemize}
\item We omit the rules for valid downcasts.
\todo[inline]{Andi: Is this correct? We need another constraint to be correct:
$N \lessdot_{downcast} a$ for a cast of the form: \texttt{(N)a}.
This is a different constraint then the normal subtype relationship (see FJ paper)
}
\todo[inline]{PJT 20220202: we should have a single
  rule for all casts without any distinction between the three cases. It has to be
  split in three cases in building the completion, but this
  distinction is not necessary for type inference.}

%The downcast problem:

%This is wrong:
%m(List<SortedList> p, Object o){
%  p.add((SortedList) o);
%}

%This is correct:
%m(List<SortedList> p, List<Object> o){
%  p.add((SortedList) o);
%}
%Currently we do not generate constraints for casts
\end{itemize}
\fi

\begin{figure}[tp]
  \fbox{
    \begin{minipage}{0.9\textwidth }
      \begin{small}
        \textbf{Subtyping:}\\[1em]
        \begin{tabularx}{\linewidth}{X c X r}
          & $\mathtt{ \Delta \vdash T \subeq  T } $
          &   & \rulename{S-REFL} \\

          & \\
          &
          $\mathtt{\ddfrac{ \Delta \vdash S \subeq  T \quad \quad
              \Delta \vdash T \subeq  U }{ \Delta \vdash S \subeq  U
            }}$ & & \rulename{S-TRANS} \\

          & \\

          & $\mathtt{ \Delta \vdash X \subeq  \Delta(X)
          }$ & & \rulename{S-VAR} \\
          & \\
          &
          $\mathtt{\ddfrac{ \texttt{class}\ \exptype{C}{\ol{X}
                \triangleleft \ol{N}} \triangleleft \mv N \set{ \ldots
              } }{ \Delta \vdash \exptype{C}{\ol{T}} \subeq 
              [\ol{T}/\ol{X}]\mv N
            }}$ & & \rulename{S-CLASS} \\
          & & & \\
          \hline
          \multicolumn{1}{l}{\textbf{Well-formed types:}}\\
          & & & \\
          & $\mathtt{ \Delta \vdash \texttt{Object}\ \texttt{ok}
          }$ & & \rulename{WF-OBJECT}\\
          & \\
          &
          $\mathtt{\ddfrac{ X \in \textit{dom}(\Delta) }{ \Delta
              \vdash X \ \texttt{ok} } }
          $ & & \rulename{WF-VAR} \\
          & \\
          & $\mathtt{\ddfrac{\begin{array}{c}
                               \texttt{class}\ \exptype{C}{\ol{X}
                               \triangleleft \ol{N}} \triangleleft \mv N \{ \ldots \} \\
                               \mathtt{\Delta} \vdash \ol{T} \ \texttt{ok}
                               \quad \quad \mathtt{\Delta} \vdash \ol{T} \subeq 
                               [\ol{T}/\ol{X}]\ol{N}
                             \end{array}
                           }{ \Delta \vdash \exptype{C}{\ol{T}} \
                             \texttt{ok} } } $ & & \rulename{WF-CLASS}
                       \end{tabularx}
                     \end{small}
                   \end{minipage}
                 }
                 \caption{Well-formedness and subtyping}
                 \label{fig:well-formedness-and-subtyping}
               \end{figure}

\begin{figure}[tp]
\fbox{
\begin{minipage}{0.9\textwidth}
\begin{small}
\textbf{Expression typing:}\\
\begin{tabularx}{\textwidth}{c X r}
%\begin{tabular}{l@{\quad}l}
  $\mathtt{
\environmentvdash x : \Gamma(x)
}$ & & \rulename{GT-VAR} \\
& \\

$\mathtt{\ddfrac{
    \Pi; \Delta; \Gamma \vdash e_0:T_0 \qquad
    \mathit{fields}(\mathit{bound}_\Delta(T_0)) = \overline{T} \ \overline{f}}
  {\Pi; \Delta; \Gamma \vdash e_0.\mathtt{f}_i : T_i}
}
$ & & \rulename{GT-FIELD} \\
& \\

$\mathtt{\ddfrac{\begin{array}{c}
  \mathtt{\environmentvdash e_0 : T_0 } \quad \quad 
  \mathtt{\mathit{mtype}(m, \mathit{bound}_\Delta (T_0), \Pi) = \exptype{}{\ol{Y} \triangleleft \ol{P}} \ol{U} \to U } \\
                   \mathtt{\Delta \vdash \ol{V} \ \texttt{ok} } \quad \quad
                   \mathtt{\Delta \vdash \ol{V} \subeq  [\ol{V}/\ol{Y}]\ol{P} } \qquad
  \mathtt{\environmentvdash \ol{e} : \ol{S} } \quad \quad
  \mathtt{\Delta \vdash \ol{S} \subeq  [\ol{V}/\ol{Y}]\ol{U}}
\end{array}}
{\environmentvdash \mathtt{e_0.\mv{m}(\overline{e}) : [\ol{V}/\ol{Y}]U }}
}$ & & \rulename{GT-INVK} \\
& \\
$\mathtt{ \ddfrac{\Delta \vdash {\mv N} \ \texttt{ok} \quad 
  \mv N = \exptype{C}{\ol{U}} \quad
  \textit{fields}(\mv N) = \ol{T}\ \ol{f} \quad 
  \environmentvdash \ol{e} : \ol{S} \quad \Delta \vdash \ol{S} \subeq  \ol{T}
}{
  \environmentvdash \texttt{new C}(\ol{e}): \mv N
}
}$ & & \rulename{GT-NEW} \\

& \\

$\ddfrac{\mathtt{\environmentvdash e_0 : T_0}}
{\mathtt{\environmentvdash (N) e_0 : N}}
$ & & \rulename{GT-CAST} \\
%
% & \\
%
% $\ddfrac{\mathtt{\environmentvdash e_0 : T_0 \quad \quad \Delta \vdash \textit{bound}_\Delta(T_0) \subeq  N}}
% {\mathtt{\environmentvdash (N) e_0 : N}}
% $ & & \rulename{GT-UCAST} \\
%
% & \\
%
% $ \ddfrac{\begin{array}{c}
%   \mathtt{\environmentvdash e_0 : T_0 \quad \quad \Delta \vdash N\ \texttt{ok} \quad \quad \Delta \vdash N \subeq  \textit{bound}_\Delta(T_0) } \\
%   \mathtt{N = \exptype{C}{\ol{T}} \quad \quad \textit{bound}_\Delta(T_0) = \exptype{D}{\ol{U}}  \quad \quad \textit{dcast}(C,D)}
% \end{array}
% }{\mathtt{\environmentvdash (N) e_0 : N}}$ & & \rulename{GT-DCAST} \\
%
% & \\
%
% $\ddfrac{\begin{array}{c}
%   \mathtt{\environmentvdash e_0 : T_0 \quad \quad \Delta \vdash N\ \texttt{ok} \quad \quad N = \exptype{C}{\ol{T}}  } \\
%   \mathtt{\textit{bound}_\Delta(T_0) = \exptype{D}{\ol{U}} \quad \quad C \ntrianglelefteq D \quad \quad D \ntrianglelefteq C \quad \quad \textit{stupid warning}}
% \end{array}}
% {\mathtt{\environmentvdash (N) e_0 : N}}
% $ & & \rulename{GT-SCAST}
\end{tabularx}\\[1em]
% \end{small}
% \end{minipage}
% % }
% % \fbox{
% \begin{minipage}{\textwidth}
% \begin{small}
\textbf{Method typing:}\\[1em]
\begin{tabularx}{\textwidth}{c X r}
  $\ddfrac{
    \begin{array}{c}
      \mathtt{\forall  \ol{T}, T: \exptype{}{} \ol{T} \to T \in
      \Pi(\exptype{C}{\ol{X} \triangleleft \ol{N}}.m) } \quad \quad
      \mathtt{\Delta \vdash S \subeq  T }
      \\
      \mathtt{\Pi; \Delta ; \ol{x}:\ol{T},\ this : \exptype{C}{\ol{X}} \vdash e_0 : S} 
      \\
      \mathtt{
      \textit{override}(m, N, \exptype{}{\ol{Y} \triangleleft
      \ol{P}}\ol{T} \to T, \Pi) } 
    \end{array}} {
    \mathtt{\Pi, \Delta \vdash  \texttt{m}(\ol{\mathtt{x}}) \{\texttt{return}\ \mathtt{e}_0;\}
      \texttt{ OK in }\exptype{C}{\ol{X} \triangleleft \ol{N}} \extends N \texttt{
        with } \exptype{}{\ol{Y} \triangleleft  \ol{P}} }
  }$ & & \rulename{GT-METHOD}\\
\end{tabularx}\\[1em]
\textbf{Class typing:}\\[1em]
  \begin{tabularx}{\textwidth}{X c X r}
  %GT-CLASS: - This rule is modified by us
  & $\ddfrac{
    \begin{array}{c}
      \mathtt{\Pi' = \Pi \cup \set{\exptype{C}{\ol{X} \triangleleft \ol{N}}.m \mapsto \exptype{}{} \ol{T_m} \to T_m \mid m \in \ol{M}} } \\
      \mathtt{\Pi'' = \Pi \cup \set{\exptype{C}{\ol{X}\triangleleft \ol{N}}.m \mapsto
      \exptype{}{\ol{Y} \triangleleft  \ol{P}} \ol{T_m} \to T_m \mid m \in \ol{M}} } \\
      \mathtt{\Delta = \ol{X} \subeq  \ol{N}, \ol{Y} \subeq  \ol{P}}
      \qquad \mathtt{\Delta \vdash\ol{P} \
      \texttt{ok}}
      \qquad \forall \mv{m}: \mathtt{\Delta \vdash \ol{T_m}, T_m \ \texttt{ok}}\\
      \mathtt{\ol{X} \subeq  \ol{N} \vdash \ol{N}, N, \ol{T}\ \texttt{ok}
      \quad\quad \mathit{fields}(\mathtt{N}) = \ol{\mathtt{U}} \ \ol{\mathtt{g}}} \\\
      \mathtt{\Pi', \Delta \vdash \ol{M} \ \texttt{OK IN}\
      \exptype{C}{\ol{X} \triangleleft \ol{N}} \extends N  \texttt{
    with } \exptype{}{\ol{Y} \triangleleft  \ol{P}}}\\
      \mathtt{K = C(\overline{U} \ \overline{g}, \overline{T} \ \overline{f}) \{ \texttt{super}(\overline{g}); \ \texttt{this}.\overline{f}=\overline{f}; \} }\\
    %\quad \quad \overline{M} \ \texttt{OK IN C} \\
  \end{array}
    }
  {\mathtt{ \Pi \vdash \texttt{class}\ \exptype{C}{\ol{X}
        \triangleleft \ol{N}} \triangleleft N\ \{ \overline{T} \
      \overline{f}; \ K \ \overline{M} \} \ \texttt{OK} : \Pi''}}
  $ & & \rulename{GT-CLASS} \\
\end{tabularx}\\[1em]
\textbf{Program typing:}\\[1em]
  \begin{tabularx}{\textwidth}{X c X r}
  & $\ddfrac{
    \mathtt{\emptyset \vdash L_1 : \Pi_1} \quad
    \mathtt{\Pi_1 \vdash L_2 : \Pi_2} \quad \dots \quad
    \mathtt{\Pi_{n-1} \vdash L_n : \Pi_n}
  }{
    \mathtt{\vdash \ol L : \Pi_n}
  }$
  && \rulename{GT-PROGRAM}
\end{tabularx}
\end{small}
\end{minipage}
}
\caption{Typing rules}
  \label{fig:expression-typing}
\end{figure}

\begin{figure}[tp]
\fbox{
\begin{minipage}{0.9\textwidth}
  \begin{small}
  \textbf{Field lookup:} \\[1em]
  \begin{tabularx}{\textwidth}{cXr}
    $\mathit{fields}(\mv{Object}) = \bullet$
    && \rulename{F-OBJECT} \\
    && \\
    $\ddfrac{
      \mathtt{\texttt{class}\ \exptype{C}{\ol{X} \triangleleft \ol{N}}\triangleleft
        \ol{N}\ \{ \overline{S} \ \overline{f}; \ K \ \overline{M} \}
      } \qquad
      \mathtt{\mathit{fields} ([\ol T/\ol X]N) = \ol U\ \ol g}
    }{
      \mathit{fields}(\exptype{C}{\ol T}) = \ol U\ \ol g, [\ol T/\ol
      X]\ol S\ \ol f
    }$
    && \rulename{F-CLASS}
  \end{tabularx}\\[1em]
  \textbf{Method type lookup:} \\[1em]
\begin{tabularx}{\textwidth}{cXr}
  $\ddfrac{
    \begin{array}{c}
      \mathtt{\texttt{class}\ \exptype{C}{\ol{X} \triangleleft
      \ol{N}}\ \triangleleft \mv{N}\ \{ \overline{C} \ \overline{f}; \ K \ \overline{M} \} 
      \qquad m \in \ol{M}} \\
  % \mathtt{
  %     \texttt{m}(\ol{\mathtt{x}}) \{\texttt{return}\ \mathtt{e}_0;\} \in \ol{M}
  %     } \\
      \mathtt{
      \exptype{}{\ol{Y} \triangleleft  \ol{P}} \ol{U} \to U \in \Pi (\exptype{C}{\ol{X} \triangleleft \ol{N}}.m) }
\end{array}
   } {\mathtt{
     \textit{mtype}(m, \exptype{C}{\ol{T}}, \Pi) = [\ol{T}/\ol{X}]\exptype{}{\ol{Y} \triangleleft \ol{P}}\ol{U} \to U
    }}$ & & \rulename{MT-CLASS}\\
 & & \\
  $\ddfrac{\mathtt{\texttt{class}\ \exptype{C}{\ol{X} \triangleleft \ol{N}}\triangleleft
  \mv{N}\ \{ \overline{C} \ \overline{f}; \ K \ \overline{M} \} 
  \qquad m \notin \ol{M}} }{
    \mathtt{\textit{mtype}(m, \exptype{C}{\ol{T}}, \Pi) =
      \textit{mtype}(m, [\ol{T}/\ol{X}]N, \Pi)}
    }$ & & \rulename{MT-SUPER} \\
\end{tabularx}\\[1em]
\textbf{Valid method overriding:}\\[1em]
\begin{tabularx}{\textwidth}{c}
  $\ddfrac{\mathtt{\textit{mtype}(m, N, \Pi) = \exptype{}{\ol{Z} \triangleleft \ol{Q}} \ol{U} \to \mv U \
    \text{implies}\ \ol{P}, \ol{T} = [\ol{Y}/\ol{Z}](\ol{Q},\ol{U}) \ 
   \text{and}\ \ol{Y} \subeq  \ol{P} \vdash T_0 \subeq  [\ol{Y}/\ol{Z}]U_0} 
  } {
    \mathtt{\textit{override}(m, N, \exptype{}{\ol{Y} \triangleleft \ol{P}} \ol{T} \to T_0, \Pi)}
  }$ 
\end{tabularx}
\end{small}
\end{minipage}
}
\caption{Auxiliary functions}
  \label{fig:auxiliary-functions}
\end{figure}

\subsection{Soundness of Typing}
\label{sec:soundness-typing}
\begin{figure}[tp]
    \begin{align*}
      \Erase{\mv x} &= \mv x \\
      \Erase{\mv e.\mv f} &= \Erase{\mv e}.\mv f \\
      \Erase{\exptype{e}{\ol T}.\mathtt{m}(\ol{e})} &= \Erase{\mv e}. \mathtt{m} (\Erase{\ol e}) \\
      \Erase{\mathtt{new}\ \exptype{C}{\ol T}(\ol{e})} & = \mathtt{new}\ \mv{C}(\Erase{\ol{e}}) \\
      \Erase{(\mv N)\ \mv e} & = (\mv N)\ \Erase{\mv e} \\
      \Erase{\exptype{}{\ol{X} \triangleleft \ol{N}}\ \mv{T}\ \mathtt{m}(\ol T\ \ol{x})\ \{ \mathtt{
      return}\ \mv e; \}} & = \mathtt{m}(\ol{x})\ \{ \mathtt{ return}\ \Erase{\mv e}; \} \\
      \Erase{\mv C(\ol{U}\ \ol{g}, \ol{T}\ \ol{f})\ \{\mathtt{super}(\ol{g}); \ \mathtt{this}.\ol{f}=\ol{f};\}} & = \mv C(\ol{g}, \ol{f})\ \{\mathtt{super}(\ol{g}); \ \mathtt{this}.\ol{f}=\ol{f};\} \\
      \Erase{\mathtt{class} \ \exptype{C}{\ol{X} \triangleleft \ol{N}} \triangleleft \ \mv N\ \{ \ol{T} \ \ol{f}; \,\mv K \, \ol{M} \}} & = 
                                                                                                                                          \mathtt{class} \ \exptype{C}{\ol{X} \triangleleft \ol{N}} \triangleleft \ \mv N\ \{ \ol{T} \ \ol{f}; \,\Erase{\mv K} \, \Erase{\ol{M}} \}
    \end{align*}
    \caption{Erasure functions}
    \label{fig:erasure}
  \end{figure}

We show that every typing derived by the \TFGJ rules gives rise to a
completion, that is, a well-typed FGJ program with the same structure.
\begin{definition}[Erasure]\label{def:erasure}
  Let $\mv{e}'$, $\mv{M}'$, $\mv{K}'$, $\mv{L}'$ be expression, method definition, constructor definition, class definition for FGJ. Define erasure functions 
  $\Erase{\mv{e}'}$, $\Erase{\mv{M}'}$, $\Erase{\mv{K}'}$,
  $\Erase{\mv{L}'}$ that map to the corresponding syntactic categories
  of \TFGJ as shown in Figure~\ref{fig:erasure}.
  \end{definition}
\begin{definition}[Completion]\label{def:completion}
  An FGJ expression $\mv{e}'$ is a \emph{completion} of a \TFGJ expression $\mv{e}$ if $\mv{e} = \Erase{\mv{e}'}$. Completions for method definitions, constructor definitions, and class definitions
  are defined analogously.
\end{definition}
\begin{theorem}
  Suppose that $\mathtt{\vdash \ol L : \Pi}$ such that $|\mathtt{\Pi (\exptype{C}{\ol{X} \triangleleft \ol{N}}.m)}| = 1$, for all $\mv{C.m}$ defined in $\ol L$. Then there is a completion $\ol{L}'$ of $\ol L$ such that
  $\ol{L}'\ \mathtt{OK}$ is derivable in FGJ.
\end{theorem}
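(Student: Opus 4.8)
The plan is to build the completion $\ol L'$ directly from the successful \TFGJ derivation of $\vdash \ol L : \Pi$, and then verify by induction on that derivation that $\ol L'$ is FGJ-typable. The key observation is that a \TFGJ derivation carries strictly more information than the \TFGJ source: \rulename{GT-INVK} records an instantiation $\ol V$ of the callee's type parameters, \rulename{GT-NEW} records an instantiation $\ol U$ of the class parameters, and $\Pi$ records, for each $\mv{C}.m$, a generic signature $\exptype{}{\ol Y \triangleleft \ol P}\ol T \to T$. I would read all of these off the derivation to reinsert exactly the annotations that erasure (Figure~\ref{fig:erasure}) had removed: annotate each method $\mv m(\ol x)$ with the signature stored in $\Pi(\exptype{C}{\ol X \triangleleft \ol N}.m)$, decorate each call $\mv e_0.\mv m(\ol e)$ with its $\ol V$ and each $\mathtt{new}\ \mv C(\ol e)$ with its $\ol U$, and reconstruct the constructor $K$ from the field declarations precisely as \rulename{GT-CLASS} already prescribes. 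The uniqueness hypothesis $|\Pi(\exptype{C}{\ol X \triangleleft \ol N}.m)| = 1$ is what makes this well defined: it yields a single signature per method, so the result is an ordinary FGJ class rather than one carrying an inexpressible intersection type.

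The core of the argument is a lemma stating that whenever $\Pi;\Delta;\Gamma \vdash \mv e : \mv T$ in \TFGJ, the corresponding completion $\mv e'$ satisfies $\Delta;\Gamma \vdash \mv e' : \mv T$ in FGJ, proved by induction on the \TFGJ derivation. The cases \rulename{GT-VAR} and \rulename{GT-FIELD} are immediate, since variable lookup and $\mathit{fields}$ are shared verbatim. For \rulename{GT-INVK} I insert the recorded $\ol V$ as explicit type arguments; the subtyping premises $\Delta \vdash \ol V \subeq [\ol V/\ol Y]\ol P$ and $\Delta \vdash \ol S \subeq [\ol V/\ol Y]\ol U$ are exactly FGJ's premises, so the only gap is a \emph{method-type agreement lemma}: FGJ's syntactic $\mathit{mtype}$ on $\ol L'$ must return the same signature as \TFGJ's $\Pi$-based $\mathit{mtype}$. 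Because the completion annotates $m$ in class $\mv C$ with precisely the signature stored in $\Pi$, and because \rulename{MT-CLASS}/\rulename{MT-SUPER} traverse the subtype hierarchy identically in both systems, the two lookups coincide; here uniqueness is used again, both to fix which signature to insert and to make \rulename{GT-METHOD}'s $\mathit{override}$ premise transfer to FGJ's overriding condition. \rulename{GT-NEW} reuses the same premises with $\ol U$ inserted. The one rule that is not a direct transcription is \rulename{GT-CAST}, which \TFGJ states without any subtyping premise, whereas FGJ splits casts into \rulename{GT-UCAST'}, \rulename{GT-DCAST'}, and \rulename{GT-SCAST'}. Here I would case-split on the subtype relationship between the inferred type of $\mv e_0$ and the target $\mv N$ and select the matching FGJ rule; I expect this to be a minor obstacle, since (as the paper notes) the elided $\mathit{dcast}$ side condition of \rulename{GT-DCAST'} is not enforced, so one must argue that the arising casts fall into the upcast or stupid-cast cases, or that the downcast side condition can be met.

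Lifting the expression lemma to \rulename{GT-METHOD}, \rulename{GT-CLASS}, and \rulename{GT-PROGRAM} is then largely bookkeeping, except for the one step I expect to be the genuine crux. In \rulename{GT-CLASS} the bodies are checked against $\Pi'$, in which each self-method $m$ carries the \emph{ungeneralized} type $\exptype{}{}\ol{T_m} \to T_m$ (with $\ol{T_m}, T_m$ living under $\Delta = \ol X \subeq \ol N, \ol Y \subeq \ol P$), whereas the completion annotates $m$ with the \emph{generalized} type $\exptype{}{\ol Y \triangleleft \ol P}\ol{T_m} \to T_m$ that FGJ will consult. To close the induction I must show that a body which type-checked in \TFGJ with self-calls at the fixed parameters $\ol Y$ still type-checks in FGJ, where such a call goes through the generalized signature and must choose an instantiation. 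Choosing $\ol V = \ol Y$ for every self-call reproduces exactly the $\Pi'$-check, since $[\ol Y/\ol Y]\ol{T_m} = \ol{T_m}$ and the premises $\Delta \vdash \ol Y\ \texttt{ok}$ and $\Delta \vdash \ol Y \subeq \ol P$ hold by \rulename{S-VAR}. The justification that no self-call ever requires a more specific instantiation is precisely the content of Proposition~\ref{prop:polymorphi-recursion} (absence of polymorphic recursion), so reconciling the $\Pi'$-check with the $\Pi''$-signature is the hardest part of the whole argument and the expression lemma must be stated carefully enough to license the identity instantiation on self-calls. With that in hand, \rulename{GT-METHOD} and \rulename{GT-CLASS} follow, and \rulename{GT-PROGRAM} closes by induction on the class sequence, threading the growing method environment and assembling the per-class completions into $\ol L'$.
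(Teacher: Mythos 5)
Your proposal is correct and follows essentially the same route as the paper's proof: an outer induction over the class sequence, a per-expression completion lemma with the key cases being \rulename{GT-NEW}, \rulename{GT-CAST} (handled by assuming the liberal downcast rule), and \rulename{GT-INVK} split into external calls (where the $\Pi$-based and syntactic $\mathit{mtype}$ agree on the completion) and self-calls (where the identity instantiation $\ol V = \ol Y$ reconciles the ungeneralized $\Pi'$ check with the generalized $\Pi''$ signature, justified by Proposition~\ref{prop:polymorphi-recursion}). You correctly identify the same crux the paper does, so no further comparison is needed.
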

\begin{proof}
  The proof is by induction on the length of $\ol L$.

  Consider the class typing $\mathtt{ \Pi \vdash \texttt{class}\ \exptype{C}{\ol{X} 
        \triangleleft \ol{N}} \triangleleft N\ \{ \ol{T} \
      \ol{f}; \ K \ \ol{M} \} \ \texttt{OK} : \Pi''}$ for an element of $\ol
    L$.

    We assume that all classes before $\mv{L}$ are completed according to the incoming $\mathtt{\Pi}$:
    If $\mathtt{\Pi (\exptype{D}{\ol{X} \triangleleft \ol{N}}.n)} = \mathtt{\exptype{}{\ol{Y} \triangleleft  \ol{P}}
      \ol{T} \to T}$, then $\mathtt{\exptype{}{\ol{Y} \triangleleft  \ol{P}}\ T\ \mv{n}(\ol{T}\
      \ol{x}) \dots}$ is in the completion of $\mv{D}$.

    Clearly, we can construct a completion for the class, if we can do so for each method. So we
    have to construct $\ol{M}'$ such that $\mathtt{\ol{M}'\ \mathtt{OK\ IN}\ \exptype{C}{\ol{X} 
        \triangleleft \ol{N}}}$. 

    Inversion of \rulename{GT-CLASS} yields
    \begin{gather}
      \label{eq:3}
      \mathtt{\Pi' = \Pi \cup \set{\exptype{C}{\ol{X} \triangleleft \ol{N}}.m \mapsto \exptype{}{} \ol{T_m} \to T_m \mid m \in \ol{M}} } \\
      \mathtt{\Pi'' = \Pi \cup \set{\exptype{C}{\ol{X} \triangleleft \ol{N}}.m \mapsto
          \exptype{}{\ol{Y} \triangleleft  \ol{P}} \ol{T_m} \to T_m \mid m \in \ol{M}} } \\
      \label{eq:5}
      \mathtt{\Pi', \Delta \vdash \ol{M} \ \texttt{OK IN}\
        \exptype{C}{\ol{X} \triangleleft \ol{N}} \extends N  \texttt{
          with } \exptype{}{\ol{Y} \triangleleft  \ol{P}}} \\
      \label{eq:7}
      \mathtt{\Delta = \ol{X} \subeq  \ol{N}, \ol{Y} \subeq  \ol{P}}
    \end{gather}
    Given some $\mv{M} = \texttt{m}(\ol{\mathtt{x}}) \{\texttt{return}\ \mathtt{e}_0;\} \in \ol{M}$,
    we show that
    \begin{gather}
      \label{eq:4}
      \exptype{}{\ol{Y} \triangleleft  \ol{P}}\ \mathtt{T_m}\ \texttt{m}(\ol{T_m}\ \ol{{x}})
      \{\texttt{return}\ \mathtt{e}'_0;\} \texttt{ OK IN }\exptype{C}{\ol{X} \triangleleft \ol{N}}
    \end{gather}
    is derivable for such completion $\mathtt{e_0'}$ of $\mathtt{e_0}$.

    By inversion of \eqref{eq:5} for $\mv{M}$, we obtain
    \begin{gather}
      \label{eq:6}
      \mathtt{\textit{override}(m, N, \exptype{}{\ol{Y} \triangleleft \ol{P}}\ol{T_m} \to T_m,
        \Pi) } \\
      \label{eq:9}
      \mathtt{\Pi; \Delta ; \ol{x}:\ol{T_m},\ this : \exptype{C}{\ol{X}} \vdash e_0 : S} \\
      \label{eq:8}
      \mathtt{\Delta \vdash S \subeq  T_m }
    \end{gather}
    As $\mathtt{\Delta}$ in~\eqref{eq:7} is defined as in \rulename{GT-METHOD'}, the well-formedness
    judgments are all given, the subtyping judgment \eqref{eq:8} is given as well as the override
    \eqref{eq:9}, the rule \rulename{GT-METHOD'} applies if we can establish
    \begin{gather}
      \label{eq:10}
      \mathtt{\Delta ; \ol{x}:\ol{T_m},\ this : \exptype{C}{\ol{X}} \vdash e_0' : S}
    \end{gather}
    for a completion of $\mathtt{e_0}$.

    To see that, we need to consider the rules \rulename{GT-NEW},
    \rulename{GT-CAST}, and \rulename{GT-INVK}. The 
    \rulename{GT-NEW} rule poses the existence of some $\ol{U}$ such that $\mv{N} =
    \exptype{C}{\ol{U}}$ for checking $\mv{e} = \mv{new}\ \mv{C} (\ol{e}) : \mv{N}$. In the completion, we
    define $\mv{e'} = \mv{new}\ \mv{N} (\ol{e}') : \mv{N}$ to apply rule \rulename{GT-NEW'} to the completions
    of the arguments.

    The rule \rulename{GT-CAST} splits into three rules
    \rulename{GT-UCAST'}, \rulename{GT-DCAST'}, and
    \rulename{GT-SCAST'}. These rules are disjoint, so that at most one of
    them applies to each occurrence of a cast. Here we assume a more
    liberal version of \rulename{GT-DCAST'} that admits downcasts that
    are not stable under type erasure semantics.

    For the rule \rulename{GT-INVK}, we first consider calls to methods not defined in the current
    class. By our assumption on previously checked classes $\mv{D}$ and their methods $\mv{n}$,
    $\mathit{mtype} (\mv{n}, \mv{D}, \mv\Pi) = \{\mathit{mtype}' (\mv{n}, \mv{D}')\}$ where the right
    side lookup happens in the completion following the definitions for FGJ (i.e., $\mv{D'}$ is the
    completion for $\mv D$). The \rulename{GT-INVK} rule poses the existence of some $\ol{V}$ that
    satisfies the same conditions as in \rulename{GT-INVK'}. Hence, we define the completion of
    $\mathtt{e_0.\mv{n}(\ol{e}) : [\ol{V}/\ol{Y}]U }$ as
    $\mathtt{e_0'.\exptype{n}{\ol{V}}(\ol{e}') : [\ol{V}/\ol{Y}]U }$.

    Next we consider calls to methods $\mv{n}$ defined in the current class, say, $\mv{C}$. For those methods,
    $\mathit{mtype} (\mv{n}, \mv{C}, \mv\Pi) = \exptype{}{} \ol{U} \to \mv U$, a non-generic
    type. By the definition of $\mathtt{\Pi''}$, we know that the type of this method will be
    published in the completion as $\exptype{}{\ol{Y} \triangleleft  \ol{P}} \ol{U} \to \mv
    U$. Hence, $\mathit{mtype}' (\mv{n}, \mv{C}') = \exptype{}{\ol{Y} \triangleleft  \ol{P}} \ol{U}
    \to \mv U$. As methods in $\mv{C}$ are mutually recursive, the rule must pose that $\ol{V} = \ol{Y}$ (cf.\
    Proposition~\ref{prop:polymorphi-recursion}). This setting fulfills all assumptions:
    \begin{gather}
      \label{eq:11}
      \mathtt{\Delta \vdash \ol{Y} \ \texttt{ok} } \\
      \mathtt{\Delta \vdash \ol{Y} \subeq  [\ol{Y}/\ol{Y}]\ol{P} }       
    \end{gather}
    We set the completion of
    $\mathtt{e_0.\mv{n}(\ol{e}) : [\ol{Y}/\ol{Y}]U }$ to
    $\mathtt{e_0'.\exptype{n}{\ol{Y}}(\ol{e}') : [\ol{Y}/\ol{Y}]U }$, which is derivable in FGJ.

    The remaining expression typing rules are shared between FGJ and \TFGJ, so they do not
    affect completions.
\end{proof}

\subsection{Polymorphic Recursion, Formally}
\label{sec:polym-recurs-form}

Consider an FGJ class $\mv{C}$ with $n$ mutually recursive methods $\mv{m}_i :
\forall\ol{X}_i. \ol{A}_i \to \ol{A}_i$, for $1\le i\le n$. Define the \emph{instantiation
  multigraph $IG(\mv{C})$} as a directed multigraph with vertices $\{1,
\dots, n\}$.
Edges between $i$ and $j$ in this graph are labeled with a
substitution from $\ol{X}_j$ to types in $\mv{m}_i$, which may contain
type variables from $\ol{X}_i$.
In particular, if $\mv{m}_i$ invokes $\mv{m}_j$ where the generic type variables in
the type of $\mv{m}_j$ are instantiated with substitution
$\ol{U}/\ol{X}_j$ (see rule GT-INVK), then 
$
i \stackrel{\ol{U}/\ol{X}_j}{\longrightarrow} j
$
is an edge of $IG (\mv{C})$.

Define the \emph{closure of the instantiation multigraph $IG^*(\mv{C})$} as the multigraph
obtained from $IG(\mv{C})$ by applying the following rule, which
composes the instantiating substitutions, exhaustively:
\begin{gather}\label{eq:1}
  i \stackrel{\ol{U}/\ol{X}_j}{\longrightarrow} j
  \quad\wedge\quad
  j \stackrel{\ol{V}/\ol{X}_k}{\longrightarrow} k
  \qquad\Rightarrow\qquad
  i \stackrel{[\ol{U}/\ol{X}_j]\ol{V}/\ol{X}_k}{\longrightarrow} k
\end{gather}

\begin{definition}\label{def:method-in-poly-rec}
  Method $\mv{m}_i$ is \emph{involved in polymorphic recursion}
  if there is an edge
  \begin{gather}\label{eq:2}
    i \stackrel{\ol{W}/\ol{X}_i}{\longrightarrow} i \quad \in IG^*
    (\mv{C}) \qquad \text{such that} \qquad \ol{W} \ne \ol{X}_i
  \end{gather}
\end{definition}
For the toy example in Figure~\ref{fig:examples-poly-rec}, we obtain
the multigraph $IG^* (\mv{UsePair})$ which indicates that \mv{prc} is
involved in polymorphic recursion:
\begin{gather*}
  \begin{array}{l@{\quad}|@{\quad}l}
    IG(\mv{UsePair})& IG^* (\mv{UsePair}) \\\hline
    \mv{prc} \stackrel{\mv{Y,X}/\mv{X,Y}}{\longrightarrow} \mv{prc} &
    \mv{prc} \stackrel{\mv{Y,X}/\mv{X,Y}}{\longrightarrow} \mv{prc} \qquad
    \mv{prc} \stackrel{\mv{X,Y}/\mv{X,Y}}{\longrightarrow} \mv{prc}
  \end{array}
\end{gather*}
The call to \mv{swap} does not appear in the graph because
\mv{swap} is defined in a different class.

For \mv{UsePair2}, we obtain a multigraph $IG^* (\mv{UsePair2})$ with
infinitely many edges which is also clear indication for polymorphic recursion:
\begin{gather*}
  \begin{array}{l@{\quad}|@{\quad}l}
    IG(\mv{UsePair2})& IG^* (\mv{UsePair2}) \\\hline
    \mv{prc} \stackrel{\mv{Y,Pair<X,Y>}/\mv{XY}}{\longrightarrow} \mv{prc}
    &
    \mv{prc} \stackrel{\mv{Y,Pair<X,Y>}/\mv{XY}}{\longrightarrow}
      \mv{prc} \\
    &
      \mv{prc}
      \stackrel{\mv{Pair<X,Y>,Pair<Y,Pair<X,Y>>}/\mv{XY}}{\longrightarrow}
      \mv{prc}
      \\
                     & \dots
  \end{array}
\end{gather*}

Clearly, $IG (\mv{C})$ is finite and can be constructed effectively by
collecting the instantiating substitutions from all method call
sites.
Repeated application of the propagation rule~\eqref{eq:1} either
results in saturation where no edge of the resulting multigraph satisfies~\eqref{eq:2} or it
detects an instantiating edge as in condition~\eqref{eq:2}. 

The following condition is necessary for the absence of
polymorphic recursion.

\begin{proposition}\label{prop:polymorphi-recursion}
  Suppose an FGJ class \mv{C} has $n$ methods, which are mutually
  recursive.
  If \mv{C} does not exhibit
  polymorphic recursion, then
  \begin{itemize}
  \item all methods quantify over the same number of generic
    variables;
  \item if a method has generic variables $\ol{X}$, then each call to
    a method of \mv{C} instantiates with a permutation of the
    $\ol{X}$;
  \item $IG^* (\mv{C})$ is finite.
  \end{itemize}
\end{proposition}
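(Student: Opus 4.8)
The plan is to reduce all three bullets to a single structural fact: under the hypothesis that $\mv{C}$ exhibits no polymorphic recursion, every edge label of $IG(\mv{C})$ is a \emph{bijective renaming} of generic type variables. First I would record the consequences of the standing assumption from Section~\ref{chapter:syntax} that the methods of $\mv{C}$ are mutually recursive: the underlying directed graph of $IG(\mv{C})$ is strongly connected, so between any two vertices $i$ and $j$ there are directed paths $i \rightsquigarrow j$ and $j \rightsquigarrow i$. Composing the edge labels along any path via rule~\eqref{eq:1} produces an edge of $IG^*(\mv{C})$; in particular, every directed cycle through a vertex $i$ contributes a self-loop $i \to i$ to $IG^*(\mv{C})$, whose label is the composite of the substitutions traversed.

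The technical core is a substitution lemma: if a composite $\theta_1\theta_2\cdots\theta_m$ of type substitutions restricts to the identity renaming on a finite set of type variables, then each $\theta_\ell$ acts as a variable-to-variable renaming on the variables occurring in the relevant intermediate terms. This is an occurs-check argument. Substitution never erases a type constructor, so $\mathrm{size}(\theta(T)) \ge \mathrm{size}(T)$ for every type $T$, with equality exactly when $\theta$ renames the variables of $T$. Since the composite maps each variable back to a bare variable, the total size is unchanged, so every intermediate application must already preserve size and hence be a renaming. I would establish this by a routine induction on term size.

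Now I would fix an edge $e\colon i \to j$ with label $a$ and a return path $j \rightsquigarrow i$ with composite label $b$. Because $\mv{C}$ has no polymorphic recursion, condition~\eqref{eq:2} forces both induced self-loops---at $i$ along $e$ then the return path, and at $j$ along the return path then $e$---to carry the identity label. Applying the substitution lemma to these two cycles shows that $a$ and $b$ are genuine renamings between $\ol{X}_j$ and $\ol{X}_i$ with $a\circ b = \mathrm{id}_{\ol{X}_i}$ and $b\circ a = \mathrm{id}_{\ol{X}_j}$; thus $a$ and $b$ are mutually inverse bijections. Their existence forces $|\ol{X}_i| = |\ol{X}_j|$, and strong connectivity propagates this equality to all vertices, yielding the common variable count of the first bullet. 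Since $a$ is a bijection $\ol{X}_j \to \ol{X}_i$, the instantiation of $\mv{m}_j$'s variables at this call site lists each variable of the caller's $\ol{X}_i$ exactly once, i.e.\ a permutation, which is the second bullet. Finally, composites of bijective renamings are again bijective renamings, and for a fixed ordered pair of vertices there are at most $k!$ of them, where $k$ is the common variable count; with $n$ vertices this bounds the distinct labelled edges of $IG^*(\mv{C})$ by $n^2\,k!$, establishing the finiteness of the third bullet.

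The main obstacle is the substitution lemma together with the bookkeeping needed to upgrade ``renaming on the occurring variables'' to genuine bijectivity on the entire variable set; the device that makes this clean is to exploit \emph{both} self-loops, at $i$ and at $j$, simultaneously, so that each of $a$ and $b$ is pinned down on its whole domain rather than only on the image of the other. A secondary point demanding care is the direction of substitution composition in rule~\eqref{eq:1}: the label of a composed path applies the caller-side substitution to the callee-side one, and the argument must track this order consistently when identifying the two self-loop labels with $a\circ b$ and $b\circ a$.
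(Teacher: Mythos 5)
Your proof is correct and follows essentially the same route as the paper's: both exploit that composing instantiation labels around a cycle of $IG^*(\mv{C})$ must yield the identity, deduce from this that every instantiation is a variable-to-variable renaming (your size-monotonicity lemma makes explicit what the paper asserts inline as ``all components of $\ol{U}$ must be variables''), and then conclude the equal arity, permutation, and finiteness claims. The only notable presentational difference is that you obtain mutually inverse bijections directly from the two induced self-loops, whereas the paper derives the same facts by a pigeonhole contradiction on a repeated variable; your version additionally yields the explicit bound $n^2\,k!$ on the number of edges of $IG^*(\mv{C})$.
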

\begin{proof}
  Suppose for a contradiction that there are two distinct methods $\mv{m}_i$ and
  $\mv{m}_j$ with generic variables $\ol{X}_i$ and $\ol{X}_j$, respectively,
  where $|\ol{X}_i| < |\ol{X}_j|$. By mutual recursion, $\mv{m}_i$ invokes
  $\mv{m}_j$ directly or indirectly and vice versa. Hence, $IG^* (\mv{C})$
  contains edges from $i$ to $j$ and back:
  \begin{gather*}
    i \stackrel{\ol{U}/\ol{X}_j}{\longrightarrow} j
    \qquad
    j \stackrel{\ol{V}/\ol{X}_i}{\longrightarrow} i
  \end{gather*}
  As $IG^*(\mv{C})$ is closed under composition, it must also contain
  the edge
  \begin{gather*}
    j \stackrel{[\ol{V}/\ol{X}_i]\ol{U}/\ol{X}_j}{\longrightarrow} j
    \text{.}
  \end{gather*}
  By assumption $\mv{C}$ does not use polymorphic recursion, so it
  must be that $[\ol{V}/\ol{X}_i]\ol{U}/\ol{X}_j =
  \ol{X}_j/\ol{X}_j$. To fulfill this condition, all components of
  $\ol{U}$ must be variables $\in \ol{X}_i$. As  $|\ol{X}_i| <
  |\ol{X}_j| = |\ol{U}|$, there must be some variable $\mv{X} \in \ol{X}_i$ that
  occurs more than once in $\ol{U}$, say, at positions $j_1$ and $j_2$. 
  But that means the variables at positions $j_1$ and $j_2$ in
  $\ol{X}_j$ are mapped to the same component of $\ol{V}$. This is a
  contradiction because this substitution cannot be the identity
  substitution $\ol{X}_j/\ol{X}_j$.

  Hence, all methods have the same number of generic variables and all
  instantiations must use variables.

  Suppose now that there is a direct call from $\mv{m}_i$ to
  $\mv{m}_j$ where the instantiation $\ol{U}/\ol{X}_j$ is not a permutation. Hence,
  there is a variable that appears more than once in $\ol{U}$, which
  leads to a contradiction using similar reasoning as before.

  Hence, all instantiations must be permutations over a finite set of
  variables, so that $IG^*(\mv{C})$ is finite! 
\end{proof}

Moreover, if a class has only mutually recursive methods without
polymorphic recursion, we can assume that each method uses the same
generic variables, say $\ol{X}$, and each instantiation for
class-internal method calls is the identity $\ol X/\ol X$.

Using the same generic variables is achieved by $\alpha$ conversion.
By Proposition~\ref{prop:polymorphi-recursion}, we already know that
each instantiation is a permutation. Each self-recursive call must use
an identity instantiation already, otherwise it would constitute an
instance of polymorphic recursion. Suppose that method \mv{m} calls
method \mv{n} instantiated with a non-identity permutation, say
$\pi$ so that parameter $\mv{X_i}$ of \mv{n} gets instantiated with
$\mv{X_{\pi(i)}}$ of \mv{m}. In this case, we reorder the generic
parameters of \mv{n}  according to the inverse permutation
$\pi^{-1}$ and propagate this permutation to all call sites of
\mv{n}. For the call in \mv{m}, we obtain the identity permutation
$\pi \cdot \pi^{-1}$, for self-recursive calls inside \mv{n}, the
instantiation remains the identity (for the same reason), for a
call site in another method which instantiates \mv{n} with permutation
$\sigma$, we change that permutation to $\sigma \cdot \pi^{-1}$, which
is again a permutation.
This way, we can eliminate all non-identity instantiations from calls
inside \mv{m}.

We move our attention to \mv{n}. Each self-recursive call and each call to \mv{m} uses the
identity instantiation, the latter by construction. So we only need to
consider calls to $\mv{n'}\notin\{\mv{n}, \mv{m}\}$ with an
instantiation which is not the identity permutation. We can also
assume that $\mv{n'}$ is not called from $\mv{m}$: otherwise, \mv{n'}
would have the generic variables in the same order as \mv{m} and hence
as \mv{n}! But that means we can fix all calls to $\mv{n'}$ by
applying the inverse permutations as for \mv{n} \emph{without disturbing the already
  established identity instantiations}!

Each such step eliminates all non-identity instantiations for at least
one method without disturbing previous identity instantiations. Hence,
the procedure terminates after finitely many steps with a class with
all instantiations being identity permutations.

\if0 

\subsection{Principal Type}
\todo[inline]{PJT: is this subsumed by
  Section~\ref{sec:multiple-results}? Do we have principal types when
  we admit intersection types?}

\todo[inline]{MP: I think so. But we have to define an partial ording
  $\triangleright$ of princpality:

% \newpage
%   \begin{definition}[Partial Ordering of Princpality]
%      For a method $$\mathtt{m}(\ol{x})\ \{ \mathtt{ return}\ \mv e; \}$$ 
%      and a type $\mathtt{T}$ with $\environmentvdash x :  \mathtt{T}$.

%      $\mathtt{\vdash \ol L : \Pi_n}$

%      $\exptype{C}{\ol{X}}.{\tt m} \mapsto
%       \exptype{}{\ol{Y} \triangleleft  \ol{P}} \ol{T_m} \to {\tt T_m}$
%   \end{definition}

  $ty_1 \& \ldots \& ty_n \triangleright ty'_i$, if $ty'_i$ is a supertype of any
  $ty_i$ and\\ $ty'_i \triangleright ty'_i[\ol{U}/\ol{X}]$

  For this we have to define a rule for the derivation of an intersection type.
  Then there is minimal number of funtions types for any method.
}
\todo[inline]{PT: so should we add overloading to FGJ? Which style of
  overloading? (i.e., should it be compatible with Java's restrictions
  that overloading must be resolvable on raw types?)}

Featherweight Generic Java has no unique principal typing.
We can show this easily with an example.
We try to find the principal type for the method \texttt{method1}.
\begin{lstlisting}
class Global{
  method1(a){
    a.add(this);
    return a.get();
  }
}
class List<A> {
  add(A item){...}
  A get() ...
}
\end{lstlisting}
In \texttt{method1} neither the return type nor the type for the parameter \texttt{a} are specified.
The return type of the method depends on the type of \texttt{a}.
If we set in the type \texttt{List<Object>} here, then \texttt{method1} would return \texttt{Object}.
The type \texttt{List<Global>} would also be correct.
Then the return type of the method can also be the type \texttt{Global}.

The principal type would either be an intersection type or the method \texttt{method1} has to be overloaded.
FGJ neither supports intersection types nor overloading.
Therefore we cannot set in the principal type and have to stick with one of the possible solutions,
for example\\
\texttt{List<Global> method1(List<Global> a)}.

It is possible for a class to have multiple principal type solutions.
This can lead to a type error when compiling multiple classes.
\begin{lstlisting}
  class Global{
    method1(a){
      a.add(this);
      return a.get();
    }
  }
  class Class2{
    Object test(){
      return new Global().method1(new List<Object>());
    }
  }
\end{lstlisting}
Our type inference algorithm is able to infer all of the principal type solutions, but only one of them can be set in.
If we set in \texttt{List<Global>} as the parameter type for the \texttt{method1},
then the class \texttt{Class2} would lead to a type error.
In this case the type inference algorithm has to try another type solution for \texttt{method1}
to render the program type correct.

\fi

%%% Local Variables:
%%% mode: latex
%%% TeX-master: "TIforGFJ"
%%% End:

\section{Type inference algorithm}
\label{sec:type-infer-algor}
This section presents our type inference algorithm.
The algorithm is given method assumptions $\mv\Pi$ and applied to a
single class $\mv L$ at a time:
\begin{gather*}
\fjtypeinference(\mv{\Pi}, \texttt{class}\ \exptype{C}{\ol{X}
\triangleleft \ol{N}} \triangleleft \type{N}\ \{ \ldots \}) = \\
\quad \quad \begin{array}[t]{rll}
  \textbf{let}\ 
  (\overline{\methodAssumption}, \consSet) &= \fjtype{}(\mv{\Pi}, \texttt{class}\ \exptype{C}{\ol{X}
  \triangleleft \ol{N}} \triangleleft \type{N}\ \{ \ldots \}) &
                                                                     \text{// constraint generation}\\
              {(\sigma,  \ol{Y} \triangleleft \ol{P})} &= \unify{}(\consSet,\, \ol{X} <: \ol{N}) & \text{// constraint solving}\\
  %TODO: Describe whole algorithm (Insert types, try out every unify solution by backtracking (describe it as Non Deterministic algorithm))
\end{array}\\
\textbf{in}\ \mv\Pi \cup
\set{(\exptype{C}{\ol{X} \triangleleft \ol{N}}.\mv{m} : \exptype{}{\ol{Y} \triangleleft \ol{P}}\ \ol{\sigma(\tv{a})} \to \sigma(\tv{a})) \ |\ (\exptype{C}{\ol{X} \triangleleft \ol{N}}.\mv{m} : \ol{\tv{a}} \to \tv{a}) \in \overline{\methodAssumption}}
% \fjtypeInsert(\overline{\methodAssumption}, (\sigma, \unifyGenerics{}) )
\end{gather*}

The overall algorithm is nondeterministic. The function $\unify{}$ may
return finitely many times as there may be multiple solutions for a constraint
set.  A local solution for class $\mv C$ may not
be compatible with the constraints generated for a subsequent class. In this case, we have to backtrack to $\mv C$ and proceed to the next
local solution; if thats fail we have to backtrack further to an earlier class.

%This set inherits the principal type solution.
%The last step simplifies the calculated solutions and only inserts the principal typing.
%The last step chooses the principal type out of those possibilities.

\if0
The solver \unify{} returns a set of all possible type solutions for the currently processed class.
We generate the $\mv{\Pi}$ environment used in the next iteration of the \fjtype{} algorithm as follows:
\begin{align*}
  \fjtypeInsert{}(\overline{\methodAssumption}, (\sigma, \unifyGenerics{})) & = \begin{array}{l}
    \set{(\exptype{C}{\ol{X} \triangleleft \ol{N}}.\mv{m} : \exptype{}{\unifyGenerics{}}\ \ol{\sigma(\tv{a})} \to \sigma(\tv{a})) \ |\ (\exptype{C}{\ol{X} \triangleleft \ol{N}}.\mv{m} : \ol{\tv{a}} \to \tv{a}) \in \overline{\methodAssumption}}
  \end{array} 
\end{align*}
Here $\overline{\methodAssumption}$ are the newly generated method assumptions of the \fjtype{} function.
$(\sigma, \unifyGenerics{})$ is one of the solutions returned by \unify{}.
\fi

\subsection{Type inference for a program }
\label{sec:proc-mult-class}

% \commentarymargintext
{Type inference processes a program one class at a time.
To do so, it must be possible to order the classes such that early
classes never call methods in later classes.
As an example, Figure~\ref{fig:invalidinput} shows a program that is
acceptable in FGJ, but rejected by \TFGJ because the methods \mv{m1}
and \mv{m2} are mutually recursive across class boundaries. There is
no order in which classes \mv{C1} and \mv{C2} can be processed.

Figure~\ref{fig:correctinput} contains a program
acceptable to both \TFGJ and FGJ because the mutual recursion of
methods \mv{m1} and \mv{m2} is taking place inside class \mv{D2}. As
\mv{D2} invokes method \mv{m} of \mv{D1}, type inference must process
\mv{D1} before \mv{D2}, which corresponds to the constraints imposed
by the typing of \TFGJ in Section~\ref{chapter:type-rules}.}
%{Das gehört doch zu den Typregeln oder willst Du es hier nochals wiederholen?}

\begin{figure}[tp]
    \centering
    \begin{minipage}{.48\textwidth}
\begin{lstlisting}[style=tfgj]
class C1 extends Object {
  m1(){ return new C2().m2(); }
}
class C2 extends Object{
  m2(){ return new C1().m1(); }
}
\end{lstlisting}
      \caption{Invalid \TFGJ program}
      \label{fig:invalidinput}
    \end{minipage}%
  ~$\left|~
    \begin{minipage}{.48\textwidth}
\begin{lstlisting}[style=tfgj]
class D1 extends Object {
  m(){ return ...; }
}
class D2 extends Object{
  m1(x){ return new D2().m2(); }
  m2( ){ return new D2().m1(
                new D1().m()); }
}
\end{lstlisting}
      \caption{Valid \TFGJ program}
      \label{fig:correctinput}
    \end{minipage}\right.$
\end{figure}

We obtain a viable order for processing the class declarations by
computing an approximate call graph based solely on method names. That
is, if method \mv{m} is used in \mv{C3} and defined both in \mv{C1}
and \mv{C2}, then \mv{C1} and \mv{C2} must both be processed before
\mv{C3}. In such a case, the use of \mv{m} might be ambiguous so that
type inference for class \mv{C3} proposes more than one solution. Global
type inference attempts to extend each partial solution to a solution for the
whole program and backtracks if that fails.

\subsection{Constraint generation}
\label{sec:fjtype}
\begin{figure}[tp]
  \begin{align*}
    % Type
    \itype{T}, \itype{U} &::= \tv{a} \mid \mv{X} \mid {\itype{N}} && \text{type variable, bounded type parameter, or type}\\
    \itype{N} &::= \exptype{C}{\il{T}} && \text{class type (with type variables)} \\
    % Constraints
    \simpleCons &::= \itype{T} \lessdot \itype{U} \mid \itype{T} \doteq \itype{U} && \text{simple constraint: subtype or equality}\\
    \orCons{} &::= \set{\set{\overline{\simpleCons_1}}, \ldots, \set{\overline{\simpleCons_n}}} && \text{or-constraint}\\
    \constraint &::= \simpleCons \mid \orCons && \text{constraint}\\
    \consSet &::= \set{\constraints} && \text{constraint set}\\
    % Method assumptions:
    \methodAssumption &::= \exptype{C}{\ol{X} \triangleleft \ol{N}}.\texttt{m} : \exptype{}{\ol{Y}
                        \triangleleft \ol{P}}\ \ol{\itype{T}} \to \itype{T}  &&
                                                                \text{method
                                                                type assumption}\\
    \localVarAssumption &::= \texttt{x} : \itype{T} && \text{parameter
                                                       assumption}\\
    \mtypeEnvironment & ::= \mv\Pi \cup \overline{\methodAssumption} &
                & \text{method type environment} \\
    \typeAssumptionsSymbol &::= ({\mtypeEnvironment} ; \overline{\localVarAssumption}) 
  \end{align*}
  \caption{Syntax of constraints and type assumptions}
  \label{fig:syntax-constraints}
\end{figure}

Figure~\ref{fig:syntax-constraints} defines the syntax of
constraints. We extend types with \emph{type variables} ranged over by
$\tv{a}$. A constraint is either a simple constraint $\simpleCons$ or
an or-constraint $\orCons$, which is a set of sets of simple
constraints. An or-constraint represents different alternatives,
similar to an intersection type, and
cannot be nested. The output of constraint generation is a set of
constraints $\consSet$, which can hold simple constraints as well as or-constraints.

Figure~\ref{fig:constraints-for-classes} contains the algorithm
{\fjtype} to generate constraints for classes. Its input
consists of the method type environment $\mv\Pi$ of the previously
checked classes. It distinguishes between overriding and
non-overriding method definitions. The former are recognized by
successful lookup of their type using \textit{mtype}. We set up the
method type assumptions accordingly and generate a constraint between
the inferred return type $\tv{a}_{\mv m}$ and the one of the
overridden method to allow for covariant overriding.
Constraints for the latter methods are generated with all fresh type
variables for the argument and result types.

\begin{figure}[tp]
  \begin{gather*}
    % \fjtype : \typeAssumptionsSymbol \times \texttt{L} \rightarrow \ol{\constraint}\\
    \begin{array}{@{}l@{}l}
      \fjtype & ({\mv{\Pi}}, \mathtt{class } \ \exptype{C}{\ol{X} \triangleleft \ol{N}} \ \mathtt{ extends } \ \mathtt{N \{ \overline{T} \ \overline{f}; \, K \, \overline{M} \}}) =\\
              & \begin{array}{ll@{}l}
                  \textbf{let} & \overline{\tv{a}}_{\mv m} \text{ be fresh type variables
                                 for each }\mv{m}\in\ol{M} \\
                               & \ol{\methodAssumption}_o = \set{\exptype{C}{\ol{X} \triangleleft \ol{N}}.\mv{m} :
                                 \exptype{}{\ol{Y} \triangleleft \ol{P}}\ \ol{T} \to \tv{a}_{\mv m} \mid \mv{m} \in \ol{M},
                                 \textit{mtype}(\mv{m}, \type{N}, \mv\Pi) = \exptype{}{\ol{Y} \triangleleft \ol{P}} \ol{\type{T}} \to \type{T}}\\
                               & \consSet_o = \set{\tv{a}_{\mv m} \lessdot \type{T} \ |\ \texttt{m} \in \ol{M}, 
                                 \textit{mtype}(\mv{m}, \mv{N}, \mv\Pi) = \exptype{}{\ol{Y} \triangleleft \ol{P}} \ol{\type{T}} \to \type{T}}\\
                               & \ol{\methodAssumption}' = 
                                 \set{ (\exptype{C}{\ol{X} \triangleleft \ol{N}}.\mv{m} : \ol{\tv{a}} \to \tv{a}_{\mv
                                 m}) \mid \mv{m} \in \ol{M},
                                 \textit{mtype} (\mv m, \mv N,
                                 \mv\Pi) \text{ not defined}, \ol{\tv{a}}\ \text{fresh} } \\
                               & \consSet_m = \set{ \set{\tv{a}_{\mv m} \lessdot \type{Object}, \, \ol{\tv{a}} \lessdot \ol{\type{Object}}}
                                \ |\ (\exptype{C}{\ol{X} \triangleleft \ol{N}}.\mv{m} : \ol{\tv{a}} \to \tv{a}_{\mv m}) \in \ol{\methodAssumption}'} \\
                               & \Pi = \mv{\Pi} \cup
                                 \ol{\methodAssumption}' \cup
                                 \ol{\methodAssumption}_o \\
                               % \cup \\
                               % \typeAssumptions \cup m_{override}
                  \textbf{in} 
                               & { ( \Pi, 
                                 \consSet_o \cup \consSet_m \cup
                                 \bigcup_{\texttt{m} \in \ol{M}}
                                 \typeMethod(\Pi, \exptype{C}{\ol{X}},  \mathtt{m}))
                                 } 
                \end{array}
    \end{array}
  \end{gather*}
  \caption{Constraint generation for classes}
  \label{fig:constraints-for-classes}
\end{figure}

\if0
The \fjtype function gets called for every class in the input.
This function accumulates all the constraints generated from calling the
typeMethod function for each method declared in the given class.
It also generares constraints for overriden methods.
If a method \texttt{m} in \texttt{C} overrides a method in its superclass,
then we make sure that the return type of \texttt{m} in \texttt{C} is
a subtype of the overriden method.
\fi

Constraint generation alternates with constraint solving: After
generating constraints with {\fjtype}, we solve them to obtain one or
more candidate extensions for the method type environment
$\mv\Pi$. Next, we pick a candidate and continue with the next class
until all classes are checked and we have an overall method type
environment.  Otherwise, we backtrack to check the next candidate. 

\begin{gather*}
%%%%%   \typeMethod: \typeAssumptionsSymbol \times \texttt{M} \rightarrow \ol{\constraint}\\
\begin{array}{@{}l@{}l@{}l}
  \typeMethod & (\ensuremath{\mv{\Pi} }, \exptype{C}{\ol{X}},\ &\  \mathtt{m}(\ol{x})\{ \mathtt{ return }\ \texttt{e}; \}) =\\
              & \textbf{let} & \exptype{}{\ol{Y} \triangleleft \ol{P}}\ \overline{T} \to T  = \mv{\Pi} (\exptype{C}{\ol{X} \triangleleft \ol{N}}.\mv{m})
  \\
              & %\textbf{let}
                                                                                              & ({R}, \consSet) =
                                                                                                \typeExpr(\mv\Pi ; \set{\mv{this} :
                                                                                                \exptype{C}{\ol{X}} } \cup \set{ \ol{x} : \overline{{T}} }), \texttt{e})\\
              & \mathbf{in}
                                                                                              & \consSet \cup \set{{R} \lessdot {T}}\\
\end{array}
\end{gather*}

The \typeMethod{} function for methods calls the \typeExpr{} function with the
return expression. It adds the assumptions for \texttt{this} and for the method parameters to the global assumptions before passing them to \typeExpr.
%and the global assumptions plus the assumptions for the method parameters.

\smallskip

In the following we define the \typeExpr{} function for every possible expression:

\smallskip

\noindent
$\typeExpr: \typeAssumptionsSymbol \times
\texttt{Expression} \rightarrow {T} \times \consSet
$
\begin{gather*}
  \typeExpr (({\mtypeEnvironment} ; 
  \overline{\localVarAssumption}), \mathtt{x}) =
  (\overline{\localVarAssumption} (\mv x) , \emptyset)
\end{gather*}
When we encounter a field $\mv e.\mv f$, we consider all classes
$\mv C$
that define field $\mv f$ and impose an or-constraint that covers all
alternatives: the type $R$ of the expression $\mv e$ must be a subtype of a generic instance
of $\mv C$ and the return type must be the corresponding field type. 
\begin{gather*}
  \begin{array}{@{}l@{}l}
    \typeExpr{} &(({\mtypeEnvironment} ;
                  \overline{\localVarAssumption}), \texttt{e}.\texttt{f}) = \\
                & \begin{array}{ll}
                    \textbf{let} 
                    & ({R}, \consSet_R) = \typeExpr(({\mtypeEnvironment} ;
                      \overline{\localVarAssumption}), \texttt{e})\\
                    & \tv{a} \text{ fresh} \\
                    & \constraint = \begin{array}[t]{@{}l@{}l}
                      \orCons\set{
                      \set{ &
                      {R} \lessdot \exptype{C}{\ol{\tv{a}}} , \tv{a} \doteq
                      [\overline{\tv{a}}/\ol{X}]\type{T} , \ol{\tv{a}} \lessdot [\overline{\tv{a}}/\ol{X}]\ol{N}
                      \mid \overline{\tv{a}} \text{ fresh}
                      } \\
                      & \quad \mid \mv{T}\ \mv{f} \in \texttt{class}\ \exptype{C}{\ol{X} \triangleleft \ol{N}} \set{ \ol{T}\ \ol{f}; \ldots}
                      }\end{array}\\
                    {\mathbf{in}} & {
                    (\tv{a}, (\consSet_R \cup \set{\constraint}))}
                  \end{array} 
  \end{array}
\end{gather*}
We treat method calls in a similar way. We impose an or-constraint
that considers a generic instance of a method type in a class
providing that method (with the same number of parameters). Each
choice imposes a subtyping constraint on the receiver type  $R$ as
well as subtyping constraints on the argument types
$\overline{R}$. Moreover, we need to check that the subtyping
constraints of the method type are obeyed by instantiating them
accordingly.

\noindent
$\begin{array}{@{}l@{}l}
\typeExpr{} & (({\mtypeEnvironment} ;
  \overline{\localVarAssumption}), \texttt{e}.\mathtt{m}(\overline{\texttt{e}}) ) = \\
& \begin{array}{ll}
\textbf{let}& ({R}, \consSet_R) = \typeExpr(({\mtypeEnvironment} ;
  \overline{\localVarAssumption}), \texttt{e})\\
& \forall \texttt{e}_i \in \ol{e} : (R_i, \consSet_i) = \typeExpr(({\mtypeEnvironment} ;
  \overline{\localVarAssumption}), \texttt{e}_i)  \\
                    & \tv{a} \text{ fresh} \\
& \begin{array}{@{}l@{}l}
  \constraint = \orCons\set{ & \{
            \begin{array}[t]{l}
              {R} \lessdot \exptype{C}{\ol{\tv{a}}}, \tv{a} \doteq [\ol{\tv{b}}/\ol{Y}][\ol{\tv{a}}/\ol{X}]{T},
              \overline{R} \lessdot [\ol{\tv{b}}/\ol{Y}][\ol{\tv{a}}/\ol{X}]\overline{T} , \\
              \ol{\tv{b}} \lessdot [\ol{\tv{b}}/\ol{Y}][\ol{\tv{a}}/\ol{X}]\ol{P} ,
              \ol{\tv{a}} \lessdot [\ol{\tv{a}}/\ol{X}]\ol{N}
              \mid \overline{\tv{a}}, \overline{\tv{b}} \text{ fresh} \}
            \end{array}\\
      & \ |\ (\exptype{C}{\ol{X} \triangleleft \ol{N}}.\texttt{m} : \exptype{}{\ol{Y} \triangleleft \ol{P}} \ \overline{T} \to {T}) \in {\mtypeEnvironment} }
  \end{array}\\
\mathbf{in} & (\tv{a},(\consSet_R \cup \bigcup_i \consSet_i \cup \set{\constraint}))
\end{array}
\end{array}
$

The \texttt{new}-expression is comparatively simple. Starting from a
generic instance of the class type, we calculate the types $\overline
T$ of the fields, impose subtyping constraints on the constructor
argument $\overline R$, and check the subtyping  constraints of the class.
\begin{gather*}
  \begin{array}{@{}l@{}l}
    \typeExpr &(({\mtypeEnvironment} ; 
                \overline{\localVarAssumption}), \mathtt{new }\ \texttt{C}(\ol{e}) ) = \\
              &\begin{array}{ll}
                 \textbf{let} % \\
                 & \forall \texttt{e}_i \in \overline{\texttt{e}} : (R_i, \consSet_i) = \typeExpr(({\mtypeEnvironment} ;
                   \overline{\localVarAssumption}), \texttt{e}_i)  \\
                 & \overline{\tv{a}} \text{ fresh} \\
                 & \textit{fields} (\exptype{C}{\overline{\tv{a}}}) = \overline{T}\ \ol{f} \\
                 & \consSet = \set{\overline{R} \lessdot \overline{T}}
                   \cup \set{\overline{\tv{a}} \lessdot
                   [\overline{\tv{a}}/\ol{X}]\ol{N}} 
                   \qquad\text{where}\ \texttt{class}\ \exptype{C}{\ol{X} \triangleleft \ol{N}} \set{\ldots} \\
                 \mathbf{in}& (\exptype{C}{\ol{\tv{a}}}, \consSet \cup \bigcup_i \consSet_i)
               \end{array}
  \end{array}
\end{gather*}

For cast expressions, we ignore the return type and pass on the
constraints for the subexpression. We return the target type of the cast.
\begin{gather*}
  \begin{array}{@{}l@{}l}
    \typeExpr &(({\mtypeEnvironment} ;
                \overline{\localVarAssumption}), (\texttt{N})\mv{e} ) = \\
              &\begin{array}{ll}
                 \textbf{let} % \\
                 & (R, \consSet) = \typeExpr(({\mtypeEnvironment} ;
                   \overline{\localVarAssumption}), \texttt{e})  \\
                 \mathbf{in} & (\mv{N}, \consSet)
               \end{array}
  \end{array}
\end{gather*}

\begin{example}\label{example:typeExpr}
To illustrate the constraint generation step we will apply it to the program depicted in
figure \ref{fig:intro-example-generic-jtx}.
First the \fjtype{} function assigns the fresh type variable $\tv{f}$ to the parameter \texttt{fst}.
Afterwards the \typeExpr{} function is called on the return expression of the \texttt{setfst} method.
The local variable \texttt{fst} does not emit any constraints.
For the \texttt{this.snd} part of the expression the \typeExpr{} function returns an or-constraint:\\
\noindent
$\begin{array}{ll}
\constraints_{1} &= \typeExpr(({\mtypeEnvironment} ;
\overline{\localVarAssumption}), \texttt{this.snd}) \\
&= (\tv{b}, \orCons(\set{
(\exptype{Pair}{\type{X}, \type{Y}} \lessdot \exptype{Pair}{\tv{w}, \tv{y}}),
(\tv{b} \doteq \tv{y}), (\tv{w} \lessdot \type{Object}), (\tv{y} \lessdot \type{Object}) 
}) )
\end{array}$\\
\noindent
This constraint is merged with the constraints generated by the \texttt{new Pair} constructor call:
\\
\noindent
$\begin{array}{l}
\typeExpr(({\mtypeEnvironment} ;
  \overline{\localVarAssumption}), \texttt{new Pair(fst, this.snd)})\\
  \quad \quad = (\exptype{Pair}{\tv{d},\tv{e}}, \set{
  (\tv{f} \lessdot \tv{d}), (\tv{b} \lessdot \tv{e}),
  (\tv{d} \lessdot \texttt{Object}), (\tv{e} \lessdot \texttt{Object})} \cup \constraints_1)
\end{array}$
\end{example}

%\clearpage
\section{Constraint Solving}
\label{sec:unify}

This section describes the \unify{} algorithm
which is used to find solutions for the constraints generated by \fjtype{}.

It first attempts to transforms a constraint set into solved form and
reads off a solution in the form of a substitution.

\begin{definition}[Solved form]\label{def:solved-form}
  A set $C$ of constraints is in solved form if it only contains
  constraints of  the following form:
  \begin{enumerate}
  \item $\tv{a} \lessdot \tv{b}$ %, with $a$ and $b$ both isolated type variables
  \item $\tv{a} \doteq \tv{b}$,
  \item\label{item:1} $\tv{a} \lessdot \exptype{C}{\ol{\itype{T}}}$, 
  \item\label{item:2} $\tv{a} \doteq \exptype{C}{\ol{\itype{T}}}$, with $\tv{a} \notin \ol{\itype{T}}$.
  \end{enumerate}
  In case~\ref{item:1} and~\ref{item:2} the type variable $\tv{a}$ does not appear on the left of another constraint of the
  form~\ref{item:1} or~\ref{item:2}.
\end{definition}  

For brevity, we write $\tv{a}_0 \lessdot^* \tv{a}_n$ for a non-empty chain of subtyping constraints between type variables $\tv{a}_0 \lessdot
\tv{a}_1, \tv{a}_1 \lessdot \tv{a}_2, \dots, \tv{a}_{n-1} \lessdot \tv{a}_n$ where $n>0$.

\subsection{{Algorithm $\unify{} (C, \Delta)$}}

%\begin{description}
%  \item[Input] A set of constraints $\consSet_{in}$ and a type environment $\Delta$.
%  \item[Output] %TODO
%\end{description}

The input of the algorithm is a set of constraints $\consSet$ and a
type environment $\Delta$. The type environment binds the generic type
variables $\ol{X}$ to their upper bounds. It is used in invocations of
the subtyping judgment.

The treatment of the generic class variables $\ol{X} \extends \ol{N}$
deserves some explanation. The algorithm must not substitute
for these variables. Instead it treats them like parameterless
abstract classes $\exptype{X_i}{}$ which are subtypes of their
respective $\mv{N}_i$ (where the variable name $\mv{X}_i$ is now
treated like a class name). Example~\ref{item:3} illustrates this
approach.  

The first step of the algorithm eliminates or-constraints from constraint set $\consSet$. To do so,
we consider all combinations of selecting simple constraints from
or-constraints in $\consSet$. In general, we have that $\consSet =
\set{\overline{\simpleCons}, \orCons_1, \dots, \orCons_n}$ and we
execute the remaining steps for all $\consSet' = \set{\overline{\simpleCons}}
\cup \set{\overline{\simpleCons}_1} \cup \dots \cup
\set{\overline{\simpleCons}_n}$ where $\overline{\simpleCons}_i \in
\orCons_i$.

\textbf{Step 1.} We apply the rules in Figures~\ref{fig:fgjreduce-rules}
and~\ref{fig:fgjerase-rules} exhaustively to $C'$.

\textbf{Step 2.} At this point, all
constraints $ \simpleCons \in C'$ are either in solved form or one of the following
cases applies:
\begin{enumerate}
\item $\set{\exptype{C}{\il{T}} \lessdot \exptype{D}{\il{U}}}
  \subseteq C'$ where $\forall\ol{X},\ol{N}\colon \Delta \not\vdash \exptype{C}{\ol{X}} <:
  \exptype{D}{\ol{N}}$ (roughly, $\mv C$ cannot be a subtype of $\mv D$) --- in this case $C'$ has no solution;
\item $\set{\tv{a} \lessdot \exptype{C}{\il{T}}, \tv{a} \lessdot
    \exptype{D}{\il{V}}} \subseteq C'$ where
  $\forall\ol{X},\ol{N}\colon \Delta \not\vdash \exptype{C}{\ol{X}} <:
  \exptype{D}{\ol{N}} $ and $\forall\ol{X},\ol{N}\colon \Delta \not\vdash \exptype{D}{\ol{X}} <:
  \exptype{C}{\ol{N}}$ (roughly, $\mv C$ and $\mv D$ are not
  subtype-related) --- in this case $C'$ has no solution; or
\item $\set{\exptype{C}{\il{T}} \lessdot \tv{b}} \subseteq C'$.
\end{enumerate}
The last case is a lower bound constraint which is embraced by Scala,
but which is not legal in FGJ (nor in Java). As we insist on inferring
a type, we have to find a concrete instance for
$\exptype{C}{\il{T}}$. To do so, we generate an or-constraint from
each lower bound constraint and its corresponding upper bound constraint (using upper bound \mv{Object} if no such constraint exists) as follows:
\begin{align*}
  \expandLB (\exptype{C}{\il{T}} \lessdot \tv{b}, \tv{b} \lessdot \exptype{D}{\il{U}})
  & = \set{ \set{ \tv{b} \doteq [\overline{\itype{T}}/\ol{X}]\type{N} } \mid
    \Delta \vdash \exptype{C}{\overline{\type{X}}} <: \type{N},
    \Delta \vdash \type{N} <:  \exptype{D}{\overline{P}}}
  \\ \text{where }&\ol{P}\text{ is determined by }\Delta \vdash \exptype{C}{\overline{\type{X}}} <: \exptype{D}{\ol{P}} \text{ and
  } [\overline{\itype{T}}/\ol{X}]\ol{P} = \il{U}
\end{align*}
This constraint replaces the lower and upper bound constraint from which it was generated.

A lower bound may also be implied by a constraint set with constraints of the form
$C_{ab} = \tv{a} \lessdot \exptype{C}{\il{T}}, \tv{a} \lessdot^* \tv{b}$.
%In this case, $\exptype{C}{\il{T}} $ can serve as a lower bound for $\tv{b}$.
In this case $\exptype{C}{\il{T}}$ must either be a upper or lower bound for $\tv{b}$.
We implement it by $\expandLB$, which adds a lower bound constraint for $\tv{b}$
and also adding a upper bound to $\tv{b}$. While $C_{ab}$ remains in the constraint
set:
$\expandLB (\exptype{C}{\il{T}} \lessdot \tv{b}, \tv{b} \lessdot \exptype{D}{\il{U}})\ \cup \ 
\set{b \lessdot \exptype{C}{\il{T}}}$

Now we are in a similar situation as before. Our current constraint set $C'$ is a mix of simple constraints and or-constraints and, again, we
consider all (simple) constraint sets $C''$ that arise as combinations of selecting simple constraints from $C'$.

\textbf{Step 3.} We apply the rule (subst) exhaustively to $C''$:
\begin{gather*}
  \begin{array}[c]{lll}
    \text{(subst)} &
                     \begin{array}[c]{l}
                       C \cup \set{\tv{a} \doteq \itype{T}}\\
                       \hline
                       [\itype{T}/\tv{a}]C \cup \set{\tv{a} \doteq \itype{T}}
                     \end{array}
                   & \tv{a} \text{ occurs in } C \text{ but not in } \itype{T}
  \end{array}
\end{gather*}
We fail if we find any $\tv{a} \doteq \itype{T}$ such that $\tv{a}$ occurs in $\itype{T}$.

\textbf{Step 4.} If $C''$ has changed from applying (subst), we continue with $C''$ from step~1.

\textbf{Step 5.} Otherwise, $C''$ is in solved form and it remains to eliminate subtyping constraints between variables by exhaustive application of rule
(sub-elim) and (erase) (see Figure~\ref{fig:fgjerase-rules}). Applying this rule does not affect the solve form property.
\begin{gather*}
  \begin{array}[c]{lll}
    \text{(sub-elim)} &
                     \begin{array}[c]{l}
                       C \cup \set{\tv{a} \lessdot \tv{b}}\\
                       \hline
                       [\tv{a}/\tv{b}]C \cup \set{\tv{b} \doteq \tv{a}}
                     \end{array}
  \end{array}
\end{gather*}
\textbf{Step 6.} We finish by generating a solving substitution from the remaining $\doteq$-constraints
and generic variable declarations from the remaining $\lessdot$-constraints. Let $C'' = C_{\doteq} \cup C_{\lessdot}$ such that $C_{\doteq}$
contains only $\doteq$-constraints and $C_{\lessdot}$ contains only $\lessdot$-constraints. Now $C_{\lessdot} = \set{ \il{\tv{a}} \lessdot
  \il{N}}$ and choose some fresh generic variables $\ol{Y}$ of the same length as $\il{\tv{a}}$. We can read off the substitution $\sigma$
from $C_{\doteq}$ where we need to substitute the generic variables for the type variables. We obtain the generic variable declarations
directly from $C_{\lessdot}$ using the same generic variable substitution. We need not apply $\sigma$ here because
we applied (subst) exhaustively in Step~3.
\begin{align*}
  \sigma &=
           \set{\tv{b} \mapsto [\ol{Y}/\ol{\tv{a}}]\itype{T} \mid
           (\tv{b} \doteq \itype{T}) \in C_{\doteq}}
           \cup 
           \set{\il{\tv{a}} \mapsto \ol{Y}}
           \cup 
           \set{{\tv{b}} \mapsto \mv{X} \mid (\tv{b} \lessdot \mv{X}) \in C_{\lessdot}}, \\
  \unifyGenerics{} & =
                     \set{\type{Y} \triangleleft [\ol{Y}/\ol{\tv{a}}]\itype{N} \mid (\tv{a} \lessdot \itype{N}) \in C_{\lessdot}}
\end{align*}
We return the pair $(\sigma, \gamma)$.

\if0
\todo[inline]{PT: checkpoint}

\begin{enumerate}
\item Repeated application of the rules depicted in figure \ref{fig:fgjreduce-rules} and \ref{fig:fgjerase-rules},
starting with the \texttt{equals} rule.
The end configuration of $Eq$ is reached if for each element no rule is applicable.
\item
\begin{align*}
Eq_1 =& \text{Subset of pairs where both type terms are type variables}\\
Eq_2 =& Eq / Eq_1 \\
Eq_{set}
    = 
    & \times\, 
    (\displaystyle{\bigotimes_{(\exptype{C}{\overline{\itype{T}}} \lessdot   \tv{a}) \in Eq_2}}\!\!
    \set{\tv{a} \doteq [\overline{\itype{T}}/\ol{X}]\type{N} \ | \ \exptype{C}{\overline{\type{X}}} <: \type{N} })\\
    & (\displaystyle{\bigotimes_{\set{a \lessdot \exptype{C}{\overline{\itype{T}}}} \in Eq_2, \set{ \tv{a} \lessdot^* \tv{b}} \in Eq_1}}\!\!
    \set{ b \lessdot \exptype{C}{\ol{\itype{T}}}} \cup \set{b \doteq [\ol{\itype{T}}/\ol{X}]\type{N} \ | \ \exptype{C}{\overline{X}} <: \type{N} })\\
    & \times\, \set{[\tv{a} \doteq \type{N} \ | \  (\tv{a} \doteq \type{N}) \in Eq_2]} \\
    & \times\, \set{[\tv{a} \lessdot \type{N} \ | \  (\tv{a} \lessdot \type{N}) \in Eq_2]}  \\
    & \times\, \set{[\type{N} \doteq \tv{a} \ | \  (\type{N} \doteq \tv{a}) \in Eq_2]}  \\
    & \times\, \set{[\type{N} \doteq \type{N} \ | \  (\type{N} \doteq \type{N} \in Eq_2]} \\
    & \times\, \set{[\type{N} \lessdot \type{N} \ | \  (\type{N} \lessdot \type{N}) \in Eq_2]} \times Eq_1 \\
\end{align*}
\item \label{subst-step}  Application of the following \emph{subst} rule for every $Eq'' \in Eq_{set}$
    %\begin{enumerate}
    %\item Apply the following subst\_eq rule
      $$\begin{array}[c]{lll}
        (\mathrm{subst}) &
        \begin{array}[c]{l}
          Eq'' \cup \set{\tv{a} \doteq \itype{T}}\\
          \hline
          [\itype{T}/\tv{a}]Eq'' \cup \set{a \doteq \itype{T}}
        \end{array}
        & \tv{a} \textrm{ occurs in } Eq'' \textrm{ but not in } \itype{T}
      \end{array}$$
      
      for each $\tv{a} \doteq \itype{T}$, where $\tv{a}$ has not been substituted in a previous round.

\item 
    \begin{enumerate}
    \item Foreach $Eq \in Eq_{set}$ which has changed in the last step
      start again with the first step.
    \item Build the union $Eq_{set}$ of all results of (a) and all $Eq' \in
      Eq'_{set}$ which has not changed in the last step.
    \end{enumerate}
\item
\begin{enumerate}
  \item Filter all constraint sets which are in solved
  form:\\
$Eq_{solved} = \set{ Eq \ | \ Eq \in Eq_{set}, Eq \ \text{is in solved form}}$
\item We apply the following rule to every constraint set in $Eq_{solved}$:
\begin{align*}
\ddfrac{
  Eq \cup \set{ \tv{a} \lessdot \tv{b} } %There are only Type variables left at this point
}{
  [\tv{a}/\tv{b}]Eq \cup \set{ \tv{a} \doteq \tv{a}, \tv{b} \doteq \tv{a} }
}
\end{align*}
%TODO: Do not return unifiers. return the simplified constraint set:
\item Finally we generate the unifiers and generic variable declarations:
\begin{align*}
  \set{ 
    (\sigma = &
    \set{\tv{a} \mapsto \type{N} \ | \ (\tv{a} \doteq \type{N}) \in Eq} \ \cup \  \set{\tv{b} \mapsto \type{A} \ | \ (\tv{b} \doteq \tv{a}) \in Eq} \ \cup \ 
    \set{\tv{a} \mapsto \type{A} \ | \ (\tv{a} \lessdot \type{N}) \in Eq }, \\
   \unifyGenerics{} = &
  \set{\type{A} \triangleleft \type{N} \ | \ (\tv{a} \lessdot \type{N}) \in Eq}) 
   \ | \ Eq \in Eq_{set}  }
\end{align*}
The \unify{} algorithm returns all possible solutions.
Each solution consists out of a pair of unifiers $\sigma$ and generic bounds \unifyGenerics{}.
\end{enumerate}
\end{enumerate}
\fi

\begin{figure}
\begin{center}
    \leavevmode
    \fbox{
    \begin{tabular}[t]{l@{~}l}
      (match)
      & $
      \begin{array}[c]{@{}ll}
      \begin{array}[c]{l}
         C \cup \, \set{\tv{a} \lessdot
         \exptype{C}{\il{T}},
         \tv{a} \lessdot
          \exptype{D}{\il{V}}} \\ 
        \hline
        \vspace*{-0.4cm}\\
        C \cup \set{\tv{a} \lessdot \exptype{C}{\il{T}}
        , \exptype{C}{\il{T}} \lessdot \exptype{D}{\il{V}}}
      %C \cup \set{\theta_1 \doteq \lambda'_1 \ldo \theta_n \doteq \lambda'_n}
      \end{array}
      & %\exists\ol{X},\ol{N}.\ 
        \Delta \vdash \exptype{C}{\ol{X}} <: \exptype{D}{\ol{N}} 
      \end{array}
      $
    \\\\
    (adopt)
    & $
    \begin{array}[c]{@{}ll}
    \begin{array}[c]{l}
       C \cup \, \set{\tv{a} \lessdot
       \exptype{C}{\il{T}},
       \tv{b} \lessdot^* \tv{a}, \tv{b} \lessdot \exptype{D}{\il{U}}} \\ 
      \hline
      \vspace*{-0.4cm}\\
      C \cup \set{
        \tv{a} \lessdot
       \exptype{C}{\il{T}},
       \tv{b} \lessdot^* \tv{a}
      , \tv{b} \lessdot  \exptype{D}{\il{U}}
      , \tv{b} \lessdot \exptype{C}{\il{T}}
       }
    %C \cup \set{\theta_1 \doteq \lambda'_1 \ldo \theta_n \doteq \lambda'_n}
    \end{array}
    \end{array}
    $
  \\\\
      (adapt)
      & $
      \begin{array}[c]{@{}ll}
      \begin{array}[c]{l}
         C \cup \, \set{\exptype{C}{\il{T}} \lessdot
          \exptype{D}{\il{U}}} \\ 
        \hline
        \vspace*{-0.4cm}\\
        C \cup \set{\exptype{D}{[ \il{T} / \ol{X} ]\ol{N}}
        \doteq \exptype{D}{\il{U}}}
      %C \cup \set{\theta_1 \doteq \lambda'_1 \ldo \theta_n \doteq \lambda'_n}
      \end{array}
        & %\exists\ol{X},\ol{N}.\
          \Delta \vdash \exptype{C}{\ol{X}} <: \exptype{D}{\ol{N}} 
        % \exptype{C}{\ol{X}} <:\ \exptype{D}{\ol{Y}}
      \end{array}
      $
    \\\\
(reduce) & $
\begin{array}[c]{l}
  C \cup \set{\exptype{D}{\il{T}} \doteq
    \exptype{D}{\il{U}}}\\
  \hline
  C \cup \set{\il{T} \doteq \il{U}}
\end{array}
      $ \\\\
(equals) & $
\begin{array}[c]{l}
  C \cup \set{\tv{a}_1 \lessdot
  \tv{a}_2, \tv{a}_2 \lessdot \tv{a}_3, \dots, \tv{a}_n \lessdot \tv{a}_1}\\
  \hline
  C \cup \set{\tv{a}_1 \doteq \tv{a}_2, \tv{a}_2 \doteq \tv{a}_3, \dots}
\end{array} \quad n>0
      $ \\\\
    \end{tabular}}
  \end{center}
\caption{Reduce and adapt rules}\label{fig:fgjreduce-rules}
\end{figure}

\begin{figure}
\begin{center}
\fbox{\begin{tabular}[t]{ll}
  %     (erase1)  & $ 
  %     \begin{array}[c]{@{}ll}
  %       \begin{array}[c]{l}
  %         C \cup \set{\exptype{C}{\il{T}} \lessdot \exptype{D}{\il{U}}}\\
  %         \hline
  %         C
  %       \end{array}
  %       & \exptype{C}{\il{T}} <: \exptype{D}{\il{U}}
  %     \end{array}$
  % \\
      (erase)  & $ 
      \begin{array}[c]{@{}ll}
        \begin{array}[c]{l}
          C \cup \set{\tv{a} \doteq \tv{a}}\\
          \hline
          C
        \end{array}
      \end{array}$
  \\
       (swap) & $
            \begin{array}[c]{@{}ll}
              \begin{array}[c]{l}
                C \cup \set{\itype{N} \doteq \tv{a}}\\
                \hline
                C \cup \set{\tv{a} \doteq \itype{N}}
              \end{array}
            \end{array}$
          \end{tabular}}
        \end{center}
\caption{Erase and swap rules}\label{fig:fgjerase-rules}
\end{figure}

\begin{example}\label{item:3}
  To illustrate our treatment of generic variables, we consider a
  typical case involving the (adapt) rule from
  Figure~\ref{fig:fgjreduce-rules}.

  Consider $C = \set{ \mv{X} \lessdot \exptype{D}{\il{U}} }$ and let $
  \mv{X} \sub \exptype{C}{\ol{T}} \in \Delta$ be the bound for
  $\mv{X}$.

  The side condition of the rule (adapt) asks for some $\ol{N}$ such
  that
  $\Delta \vdash \mv{X} \sub \exptype{D}{\ol{N}}$, i.e., ``is there a
  way that $\mv{X}$ can be a subtype of $\mv{D}$?''

  By inversion of subtyping and transitivity, this judgment holds if
  $\Delta \vdash \exptype{C}{\ol{T}} \sub \exptype{D}{\ol{N}}$ holds.

  Hence, applying (adapt) to $C$ yields $\set{ \exptype{D}{\ol{N}} \doteq
    \exptype{D}{\il{U}} }$. The substitution in the rule is empty
  because $\mv{X}$ is considered a parameterless type.

  The remaining rules work similarly. In particular, different
  variables $\mv{X} \ne \mv{Y}$ give rise to different (abstract)
  classes. For example, the (reduce) rule removes the constraint $\mv{X} \doteq
  \mv{X}$, but it does \emph{not} apply to
  $\mv{X} \doteq \mv{Y}$. Rather, an equation like this renders the
  constraint set unsolvable.
\end{example}

%%% Local Variables:
%%% mode: latex
%%% TeX-master: "TIforGFJ"
%%% End:

\subsection{Properties of \unify{}}
\label{sec:properties-unify}

First we give some definitions and results.
For the complete proofs see appendix \ref{chapter:unifySoundnessProof}, \ref{chapter:unifyCompletenessProof} and \ref{chapter:unifyTerminationProof}.
%We proof Soundness and Completeness in appendix \ref{chapter:unifySoundnessProof} and \ref{chapter:unifyCompletenessProof}.
%We proof termination of the \unify{} algorithm in appendix \ref{chapter:unifyTerminationProof}.

\begin{definition}[Unifier]
  Let $C$ be a set of constraints and $\Delta$ a type environment.
  A substitution $\sigma$  is a \emph{unifier} of $(C,\Delta)$ if
  \begin{itemize}
  \item for each $(\itype{T} \lessdot \itype{U}) \in C$ it holds that
    $\Delta \vdash \exp{\sigma}{T} \olsub \exp{\sigma}{U}$;
  \item  for each $(T \doteq U) \in C$ it holds that
    $\exp{\sigma}{T} = \exp{\sigma}{U}$; and
  \item for each or-constraint $\set{\set{\overline{\simpleCons_1}},
      \ldots, \set{\overline{\simpleCons_n}}} \in C$, there exists $1
    \le i \le n$ such that $\sigma$ is a unifier of
    $(\set{\overline{\simpleCons_i}}, \Delta)$. 
  \end{itemize}
\end{definition}

A set of general unifiers can provide any unifier as a substitution
instance of one of its members.
\begin{definition}[Set of general unifiers]
  Let $C$ be a set of constraints and $\Delta$ a type environment. 

  A set of unifiers $M$ for $(C, \Delta)$
  is called \emph{set of general unifiers} if for any unifier $\omega$
  for $(C, \Delta)$ there is some unifier $\sigma \in M$ and a substitution
  $\lambda$ such that $\omega = \lambda   \circ \sigma$.
\end{definition}

A unification problem is \emph{finitary} if there is a finite set of
general unifiers for each constraint set $C$ and type environment $\Delta$.

\begin{theorem}[Soundness]
  \label{theo:unifySoundness}
  % \replaced{If the \unify{} algorithm finds a solution it does not contradict any of the input constraints:
  % $\nexists (a \lessdot b) \in \consSet_{in}$ where $\sigma(a) \nless :
  % \sigma(b)$}{}
   If $\unify{}(\consSet, \Delta) = {(\sigma,  \ol{Y} \triangleleft
     \ol{P})}$, then $\sigma$ is a unifier of $(\consSet,\Delta \cup \set{\ol{Y} <: \ol{P}})$. 
\end{theorem}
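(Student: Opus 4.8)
The plan is to show that every rewriting step performed by $\unify{}$ preserves unifiers in the \emph{backward} direction, and then to observe that the substitution $\sigma$ read off in Step~6 unifies the final solved-form constraint set by construction. Chaining these two facts along the particular sequence of rewrites that produced the returned pair $(\sigma, \ol{Y} \triangleleft \ol{P})$ yields that $\sigma$ unifies the original $\consSet$ under $\Delta' = \Delta \cup \set{\ol{Y} <: \ol{P}}$.

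First I would establish the key lemma: for each rewrite used by the algorithm---the or-elimination that opens Step~0, each rule in Figures~\ref{fig:fgjreduce-rules} and~\ref{fig:fgjerase-rules}, the $\expandLB$ step, and the rules \rulename{subst} and \rulename{sub-elim}---if a substitution $\rho$ is a unifier of the \emph{resulting} constraint set under some environment $\Delta'$, then $\rho$ is also a unifier of the \emph{input} constraint set under $\Delta'$. The proof is a case analysis. The purely structural rules are immediate: \rulename{match} follows from transitivity \rulename{S-TRANS}, \rulename{equals} and \rulename{sub-elim} from reflexivity \rulename{S-REFL}, \rulename{erase} and \rulename{swap} from reflexivity and symmetry of $\doteq$, \rulename{reduce} from congruence of equality, and \rulename{adopt} because its conclusion merely adds a constraint, so the input set is a subset. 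For or-elimination, a unifier of the selected disjunct witnesses satisfaction of the entire or-constraint and hence unifies the input. For \rulename{subst} and \rulename{sub-elim} the argument instead uses that $\rho(\tv{a}) = \rho(\itype{T})$ implies $\rho \circ [\itype{T}/\tv{a}] = \rho$, so unifying the substituted set is the same as unifying the original.

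The two interesting cases are \rulename{adapt} and $\expandLB$, both of which rely on the FGJ substitution lemma for subtyping: if $\Delta, \ol{X} <: \ol{N} \vdash S \olsub T$ and the instances $\ol{U}$ respect the bounds, then $\Delta \vdash [\ol{U}/\ol{X}]S \olsub [\ol{U}/\ol{X}]T$. For \rulename{adapt}, given that $\rho$ equates $\exptype{D}{[\il{T}/\ol{X}]\ol{N}}$ with $\exptype{D}{\il{U}}$, I would instantiate the side condition $\Delta \vdash \exptype{C}{\ol{X}} \olsub \exptype{D}{\ol{N}}$ at $\rho(\il{T})$ to obtain $\Delta' \vdash \exptype{C}{\rho(\il{T})} \olsub \exptype{D}{[\rho(\il{T})/\ol{X}]\ol{N}}$, and then rewrite the right-hand side through the equation to $\rho(\exptype{D}{\il{U}})$, giving the required subtyping. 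For $\expandLB$, the chosen disjunct fixes $\rho(\tv{b}) = [\il{T}/\ol{X}]\type{N}$ with $\Delta \vdash \exptype{C}{\ol{X}} \olsub \type{N} \olsub \exptype{D}{\ol{P}}$; substitutivity then recovers both the lower-bound constraint $\rho(\exptype{C}{\il{T}}) \olsub \rho(\tv{b})$ and the upper-bound constraint $\rho(\tv{b}) \olsub \rho(\exptype{D}{\il{U}})$ that were replaced. A small auxiliary fact used throughout is that side conditions derived under $\Delta$ remain valid under the enlarged $\Delta'$, since the generic variables $\ol{Y}$ are fresh and subtyping is monotone in the type environment.

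Finally I would discharge the base case. By the read-off in Step~6 together with the exhaustive application of \rulename{subst} in Step~3, the right-hand sides of the equalities in $C_{\doteq}$ contain only the leftover variables $\il{\tv{a}}$ that $\sigma$ sends to $\ol{Y}$, so $\sigma(\tv{b}) = [\ol{Y}/\ol{\tv{a}}]\itype{T} = \sigma(\itype{T})$ for each $(\tv{b} \doteq \itype{T}) \in C_{\doteq}$; each remaining lower bound $(\tv{a} \lessdot \itype{N}) \in C_{\lessdot}$ sends $\tv{a}$ to a $\ol{Y}$ whose declared bound in $\set{\ol{Y} <: \ol{P}}$ makes $\Delta' \vdash \sigma(\tv{a}) \olsub \sigma(\itype{N})$ hold by \rulename{S-VAR}; and residual constraints $\tv{b} \lessdot \mv{X}$ hold by reflexivity. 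Thus $\sigma$ unifies the solved form $C''$ under $\Delta'$. Applying the backward-preservation lemma along the finite rewrite chain $\consSet \to^* C''$---resolving the branching at or-elimination and $\expandLB$ by following the branch that produced the returned solution---propagates this to $\consSet$, which is exactly the claim. I expect the substitutivity bookkeeping in the \rulename{adapt} and $\expandLB$ cases to be the main technical obstacle, in particular verifying that the instances involved genuinely respect the class bounds so that the FGJ substitution lemma applies.
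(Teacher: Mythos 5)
Your proposal is correct and follows essentially the same route as the paper's own proof: a backward induction over the rewrite steps of \unify{}, showing per rule that any unifier of the output constraint set is a unifier of the input set, resolving the branching at or-elimination by following the branch that produced the returned solution. If anything, your treatment is more complete than the paper's, since you explicitly discharge the base case that $\sigma$ unifies the final solved form under $\Delta \cup \set{\ol{Y} <: \ol{P}}$ (via the Step~6 read-off and \rulename{S-VAR} for the residual bounds) and explicitly invoke the substitution lemma for the \rulename{adapt} and $\expandLB$ cases, both of which the paper leaves implicit.
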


\begin{theorem}[Completeness]\label{theo:unifyCompleteness}
  $\unify{} (\consSet, \Delta)$ calculates {the set} of general
  unifiers for $(\consSet, \Delta)$. 
\end{theorem}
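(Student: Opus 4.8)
The plan is to prove completeness through a single invariant that is preserved by every step of $\unify$: \emph{if $\omega$ is a unifier of the constraint set entering a step, then $\omega$ is a unifier of at least one of the constraint sets leaving that step.} Since $\unify$ executes all subsequent steps for every branch produced by or-constraint elimination, by $\expandLB$, and by the (subst)-loop, this invariant guarantees that no unifier is lost. Combined with the termination result (proved in the appendix) and Theorem~\ref{theo:unifySoundness}, it shows that the collected outputs form a \emph{set} of general unifiers. The bulk of the argument is therefore a case analysis over the transformation rules of Figures~\ref{fig:fgjreduce-rules} and~\ref{fig:fgjerase-rules}.

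First I would dispatch the non-branching reduction rules. For (match) and (adopt) I use that FGJ has single inheritance, so the supertypes of a fixed type $\omega(\tv{a})$ form a chain; hence two upper bounds $\exptype{C}{\il{T}}$ and $\exptype{D}{\il{V}}$ of the same variable are subtype-comparable, and the added constraint $\exptype{C}{\il{T}} \lessdot \exptype{D}{\il{V}}$ (respectively the propagated bound) is satisfied by $\omega$. For (adapt) and (reduce) I use that each class has a unique named supertype instantiation, so $\omega(\exptype{C}{\il{T}}) \olsub \omega(\exptype{D}{\il{U}})$ forces $\exptype{D}{[\il{T}/\ol{X}]\ol{N}}$ and $\exptype{D}{\il{U}}$ to receive equal type arguments under $\omega$; injectivity of the type constructor then validates (reduce). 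For (equals) I use antisymmetry of subtyping to turn a cycle of variable constraints into equalities, and (erase) and (swap) are immediate. The branching steps are equally direct: a unifier of an or-constraint satisfies one disjunct by definition, and for $\expandLB$ the bound $\omega(\tv{b})$ must lie on the subtype chain between $\exptype{C}{\il{T}}$ and the relevant upper bound, hence coincides with one of the enumerated instantiations $[\il{T}/\ol{X}]\itype{N}$.

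For the substitution-producing steps I would track the factorisation explicitly. Rule (subst) preserves every unifier because $\omega(\tv{a}) = \omega(\itype{T})$ makes $\omega$ and $\omega \circ [\itype{T}/\tv{a}]$ agree on all variables. At termination on a branch, $C'' = C_{\doteq} \cup C_{\lessdot}$ is in solved form; given a unifier $\omega$ I would define $\lambda$ to agree with $\omega$ on all pre-existing variables and to send each fresh generic variable $Y_i$ (introduced for $\tv{a}_i$ in Step~6) to $\omega(\tv{a}_i)$. Then $\lambda \circ \sigma = \omega$ follows constraint-by-constraint from $C_{\doteq}$, and $\lambda$ respects the generated declarations $\unifyGenerics{}$ precisely because each surviving constraint $(\tv{a} \lessdot \itype{N}) \in C_{\lessdot}$ gives $\Delta \vdash \omega(\tv{a}) \olsub \omega(\itype{N})$.

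The main obstacle is the rule (sub-elim), which replaces a subtyping constraint $\tv{a} \lessdot \tv{b}$ between two type variables by the equality $\tv{b} \doteq \tv{a}$. This step does \emph{not} preserve unifiers literally: a unifier $\omega$ for which $\omega(\tv{a})$ is a proper subtype of $\omega(\tv{b})$ is not a unifier of $\tv{b} \doteq \tv{a}$, so the naive invariant fails exactly here. I would resolve this by stating the invariant modulo the subtype ordering induced by $\Delta$ together with the generated bounds $\unifyGenerics{}$: collapsing $\tv{a}$ and $\tv{b}$ into a single generic variable carrying the residual upper bounds is the most general choice, and every unifier that keeps them distinct is recovered as an instance $\lambda \circ \sigma$ in which $\lambda$ specialises that generic variable through its bound. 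Making this precise---that is, showing that $\sigma$ together with the declaration $\unifyGenerics{}$ still generalises every unifier once variable-variable subtyping has been erased---is the delicate core of the proof and the place where the upper-bounds-as-generic-parameters design of $\unify$ is essential.
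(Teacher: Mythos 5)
Your overall strategy is the same as the paper's: assume a unifier $\omega$ of the input, walk through every transformation step, and show that at least one branch of the nondeterministic computation still admits $\omega$, so that nothing is lost before the solved forms are read off. Your per-rule arguments are in places sharper than the paper's: for (match) the paper appeals only to transitivity of subtyping, whereas your appeal to single inheritance (the supertypes of $\omega(\tv{a})$ form a chain, so two upper bounds of the same variable are comparable) is what actually justifies the added constraint $\exptype{C}{\il{T}} \lessdot \exptype{D}{\il{V}}$; likewise, your explicit construction of $\lambda$ with $\omega = \lambda \circ \sigma$ in Step~6 is spelled out more carefully than in the paper, which never exhibits the factorisation. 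The one place where you stop short---rule (sub-elim) in Step~5---is exactly where the paper is weakest. Its argument there is purely syntactic: FGJ does not admit a type parameter whose bound is another type parameter, so a residual constraint $\tv{a} \lessdot \tv{b}$ surviving into solved form cannot be realised in a generic signature except by identifying the two variables, and the paper then asserts that the general unifier is preserved ``as $\tv{a}$ and $\tv{b}$ are not substituted in $\sigma$.'' This is essentially your upper-bounds-as-generic-parameters observation, but be aware that it silently restricts the class of unifiers against which completeness is measured: a unifier with $\omega(\tv{a})$ a \emph{proper} subtype of $\omega(\tv{b})$ is genuinely not an instance of the returned $\sigma$, so the statement only holds relative to unifiers realisable as FGJ signatures. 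When you make your ``invariant modulo the subtype ordering'' precise, you should state this as a restriction of the completeness claim rather than as a lossless step; the paper's own proof does not discharge this obligation either.
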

% A unifier $\sigma$ is a general unifier for $\consSet_{in}$ if it unifies $\consSet_{in}$
% and for every other unifier $\omega$ there is a substitution $\lambda$ so
% that $\omega(x) = \lambda(\sigma(x))$.

\begin{theorem}[Termination]\label{theo:unifyTermination}
  The \unify{} algorithm terminates on every finite input set.
\end{theorem}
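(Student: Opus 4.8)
The plan is to prove termination by a nested, lexicographically ordered measure that mirrors the loop structure of $\unify{}$. First I would dispose of the two sources of branching. Eliminating or-constraints at the start replaces $\consSet$ by finitely many simple constraint sets $C'$, since there are finitely many or-constraints and each is a finite set of alternatives; likewise the $\expandLB$ step in Step~2 turns each lower-bound constraint into finitely many alternatives, bounded by the finite number of classes lying between $\exptype{C}{\il{T}}$ and $\exptype{D}{\il{U}}$ in the subtype hierarchy. Hence $\unify{}$ explores only finitely many branches and returns a finite result, so it suffices to show that processing one branch terminates.

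The central invariant I would establish is that \emph{no rule introduces a fresh type variable} and that the set of nonvariable types that can ever occur is finite. Inspecting (match), (adopt), (adapt), (reduce), (equals), (erase), (swap), (subst) and $\expandLB$ shows that every newly produced type is a substitution instance of a type already present, or is drawn from a class declaration with its parameters replaced by present types; since (adapt) only climbs to superclasses and (reduce) only descends to subterms, the closure of the initial types under ``superclass'' and ``subterm'' is finite. This yields a finite type universe $U$ and a fixed finite variable set $V$. The \emph{outer} loop (Steps~1--4) is then controlled by $V$: a return from Step~4 to Step~1 happens only after (subst) fired in Step~3, and each (subst) removes every active occurrence of some variable $\tv{a}$, leaving it only in its solved equation $\tv{a} \doteq \itype{T}$ with $\tv{a}\notin\itype{T}$. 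As no rule reinserts an eliminated variable into an active position, the number of variables occurring outside a solved equation strictly decreases, bounding the outer iterations by $|V|$.

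The hard part will be termination of the exhaustive rewriting in Step~1, because (adopt) \emph{adds} constraints whereas (match), (adapt), (reduce), (equals) transform or remove them, and naively the pair (adopt)/(match) can oscillate: (adopt) propagates an upper bound $\tv{a}\lessdot\exptype{D}{\il{V}}$ down a chain while (match) deletes it again, so a single ``apply any changing rule'' strategy need not converge. The resolution I would pursue rests on the finite-universe lemma, which confines every reachable constraint to the finite set built from $U$, $V$, and the relations $\lessdot,\doteq$, so that Step~1 is a forward-closure over a finite state space; termination then reduces to exhibiting a strict well-founded measure. I would make its top component the number of $\lessdot$-edges between type variables: this is strictly decreased by (equals), which collapses a $\lessdot$-cycle into equalities, and is increased by no other Step~1 rule, since none creates a new variable-to-variable subtyping edge. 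Once (equals) has fired maximally the variable $\lessdot^{*}$-graph is acyclic, so (adopt) propagates upper bounds only strictly downward and its additions are bounded through $U$; beneath this I would place the number of subtype-comparable multiple upper bounds (decreased permanently by (match) after the upper bound it removes has been turned, via (adapt) and (reduce), into equalities consumed by (subst)), then a term-size measure for (adapt) and (reduce), and finally trivial counts for (swap) and (erase). Verifying that each Step~1 rule decreases this tuple --- in particular reconciling the constraint-adding (adopt) with the consuming (match) without an infinite interplay --- is the delicate technical core, and the acyclicity secured by the top component is what I expect to make it go through.

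Finally I would check that the remaining steps are unproblematic: Step~5 applies (sub-elim) and (erase), each eliminating a variable-to-variable $\lessdot$-edge by substituting $\tv{b}\mapsto\tv{a}$, and hence terminates by the same edge count, while Step~6 is a single read-off of $(\sigma,\gamma)$. Composing these observations, the whole algorithm is measured by the lexicographic tuple consisting of the number of not-yet-fixed branches, the number of active variables, the Step~1 tuple, and the Step~5 edge count, which is well-founded; therefore $\unify{}$ terminates on every finite input set.
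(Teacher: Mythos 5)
Your proposal is essentially sound and shares its skeleton with the paper's proof: both first reduce to finitely many simple constraint sets (or-constraint selection and $\expandLB$), both rest on the invariant that no rule introduces a fresh type variable, both bound the outer Step~1--4 loop by the fact that each (subst) application permanently retires one variable from active positions, and both must then argue separately that the exhaustive rewriting of Step~1 terminates. Where you diverge is in that last, hardest piece. The paper argues syntactically: every Step~1 rule either removes a $\lessdot$-constraint or strips a constructor $\exptype{C}{\il{T}}$ down to its arguments, and the two apparent exceptions are discharged by observing that a (match)-generated constraint is immediately consumed by (adapt) and (reduce), and that (adopt) combined with (match) and (adapt) nets out to adding only a $\doteq$-constraint. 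You instead introduce a finite type universe (closure of the initial types under superclass instantiation and subterms) and a lexicographic measure headed by the count of variable-to-variable $\lessdot$-edges. Your finite-universe lemma is a genuine improvement in rigor --- the paper's claim that no rule adds ``a new $\exptype{C}{\ol{X}}$ type'' is literally false for (adapt), which manufactures $\exptype{D}{[\il{T}/\ol{X}]\ol{N}}$ from the class table, and only the finiteness of the reachable type universe makes the counting honest. Conversely, the point you flag as unverified --- that (adopt) and (match) cannot interact forever --- is exactly where the paper's argument is also at its most informal; to close it you could import the paper's observation that (adopt) followed by (match), (adapt), and (reduce) only ever contributes $\doteq$-constraints, note that (adopt) is monotone and can fire at most once per pair $(\tv{b}, \exptype{C}{\il{T}})$ drawn from your finite universe, and let your remaining tuple components handle the consuming rules in between. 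With that addition your measure argument goes through and is, if anything, more defensible than the published one.
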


%%% Local Variables:
%%% mode: latex
%%% TeX-master: "TIforGFJ"
%%% End:

\section{Soundness, Completeness and Complexity of Type Inference}
\label{sec:soundn-compl-type}

After showing that type unification is sound and complete, we
can now show that type inference \textbf{FJTypeInference} also is
sound and complete.
For the proofs of these theorems see appendix \ref{chapter:soundness-completenessProof}.

\begin{theorem}[Soundness] For all $\mv{\Pi}$, $\mv L$, $\mv{\Pi'}$, 
  $\fjtypeinference(\mv{\Pi},\texttt{L}) = \mv{\Pi}'$ implies $\mv{\Pi} \vdash \texttt{L} : \mv{\Pi}'$.
\end{theorem}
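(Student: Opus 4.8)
The plan is to construct a derivation of rule \rulename{GT-CLASS} for the judgment $\mv\Pi \vdash \mv L : \mv\Pi'$, reading off its side conditions from the intermediate data produced by $\fjtypeinference$. Writing $\mv L = \mathtt{class}\ \exptype{C}{\ol X \triangleleft \ol N} \triangleleft \mv N\ \{\ol T\ \ol f; \mv K\ \ol M\}$ and unfolding the algorithm, let $(\overline{\methodAssumption}, \consSet) = \fjtype(\mv\Pi, \mv L)$ and $(\sigma, \ol Y \triangleleft \ol P) = \unify(\consSet, \ol X \sub \ol N)$. For each $\mv m \in \ol M$ I take $T_m = \sigma(\tv{a}_m)$ and $\ol{T_m} = \ol{\sigma(\tv{a})}$ from the corresponding method assumption; these are exactly the types that $\fjtypeinference$ inserts, so its output $\mv\Pi'$ coincides with the environment $\mv\Pi''$ of \rulename{GT-CLASS} carrying the generic types $\exptype{}{\ol Y \triangleleft \ol P}\ol{T_m} \to T_m$, while the rule's internal $\mv\Pi'$ is the same environment carrying the monomorphic types $\exptype{}{}\ol{T_m} \to T_m$. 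Setting $\mv\Delta = \ol X \sub \ol N, \ol Y \sub \ol P$ then fixes every component of the rule, and it remains to discharge its premises.

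The heart of the argument is an expression-level soundness lemma proved by induction on the recursion of $\typeExpr$: if $\typeExpr((\mv\Pi_0; \overline{\localVarAssumption}), \mv e) = (R, \consSet_e)$ and $\sigma$ is a unifier of $(\consSet_e, \mv\Delta)$, then $\sigma(\mv\Pi_0); \mv\Delta; \sigma(\overline{\localVarAssumption}) \vdash \mv e : \sigma(R)$. By Theorem~\ref{theo:unifySoundness} the substitution returned by $\unify$ is a unifier of $(\consSet, \mv\Delta)$, so the lemma applies to the constraints of each method body. The inductive cases mirror the expression typing rules one for one: the variable case is immediate from \rulename{GT-VAR}; for a field access or method call, the generated or-constraint contains --- for the class that actually supplies the field or method --- a disjunct whose simple constraints become, after applying $\sigma$, precisely the subtyping and well-formedness premises of \rulename{GT-FIELD} respectively \rulename{GT-INVK}, where the receiver constraint $R \subconstr \exptype{C}{\ol{\tv{a}}}$ guarantees that $\mathit{bound}_{\mv\Delta}(\sigma(R))$ resolves $\mathit{fields}$ or $\mathit{mtype}$ to the anticipated type; the cases for $\mathtt{new}$ and for casts follow likewise from \rulename{GT-NEW} and \rulename{GT-CAST}.

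Granting the lemma, I would discharge the premises of \rulename{GT-CLASS} in turn. The well-formedness obligations $\mv\Delta \vdash \ol P\ \mathtt{ok}$ and $\mv\Delta \vdash \ol{T_m}, T_m\ \mathtt{ok}$ hold because $\unify$ forms $\ol Y \triangleleft \ol P$ from the surviving upper-bound constraints and $\sigma$ maps inference variables to types over $\ol X, \ol Y$ whose bounds are recorded in $\mv\Delta$; the remaining obligations $\ol X \sub \ol N \vdash \ol N, \mv N, \ol T\ \mathtt{ok}$ and the field lookup come directly from the standing assumptions on the input class. For each $\mv m \in \ol M$ I instantiate \rulename{GT-METHOD}: the monomorphic membership $\exptype{}{}\ol{T_m} \to T_m \in \mv\Pi'(\exptype{C}{\ol X \triangleleft \ol N}.\mv m)$ holds by the definition of the rule-internal $\mv\Pi'$; the body judgment follows from the expression lemma applied to the constraints emitted by $\typeMethod$, noting that $\sigma(\overline{\methodAssumption})$ for the local methods is exactly this monomorphic $\mv\Pi'$; the subtyping premise $\mv\Delta \vdash S \sub T_m$ is obtained by applying $\sigma$ to the constraint $R \subconstr T_m$ that $\typeMethod$ adds and taking $S = \sigma(R)$; and the \textit{override} obligation follows from the extra constraint $\tv{a}_m \subconstr \type{T}$ that $\fjtype$ emits for an overriding method together with the fact that its parameter types and bounds $\ol Y \triangleleft \ol P$ are copied verbatim from the superclass method (and is vacuous for non-overriding methods, where $\mathit{mtype}$ is undefined).

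The step I expect to be the main obstacle is the field and method-call case of the expression lemma: one must show that the disjunct selected by the unifier names the same class through which the typing rule resolves its $\mathit{fields}$ or $\mathit{mtype}$ lookup, and that the receiver subtyping constraint is exactly strong enough to license that resolution via $\mathit{bound}_{\mv\Delta}$. A secondary technical point is keeping the two environments written $\mv\Pi'$ apart --- the rule checks method bodies against monomorphic method types in order to exclude polymorphic recursion, whereas the constraint generator threaded the inference-variable assumptions --- and verifying that they agree after substitution; this is exactly where Proposition~\ref{prop:polymorphi-recursion} is implicitly needed, to justify that class-internal recursive calls are typed at the identity instantiation and hence monomorphically.
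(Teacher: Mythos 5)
Your proposal takes essentially the same route as the paper: the paper's proof also reduces soundness of $\fjtypeinference$ to the soundness of \unify{} (Theorem~\ref{theo:unifySoundness}) together with a case analysis showing that the constraints emitted by \typeExpr{} for each expression form mirror exactly the premises of the corresponding \TFGJ{} typing rule (\rulename{GT-VAR}, \rulename{GT-INVK}, \rulename{GT-FIELD}, \rulename{GT-NEW}, \rulename{GT-CAST}), with $\consSet_o$ handling the \textit{override} obligation and $\consSet_m$ the bound/well-formedness obligations. Your write-up is more explicit about the overall structure (an induction over the recursion of \typeExpr{} and an explicit discharge of the \rulename{GT-CLASS}/\rulename{GT-METHOD} premises, including the monomorphic-versus-generalized $\mv\Pi$ distinction), but the underlying argument is the one the paper gives.
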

\begin{theorem}[Completeness]  For all $\mv{\Pi}$, $\mv L$, $\mv{\Pi'}$,
  $\mv{\Pi} \vdash \texttt{L} : \mv{\Pi}'$ implies there is a
    $\mv{\Pi}''$ with $\fjtypeinference(\mv{\Pi},\texttt{L}) = \mv{\Pi}''$,
  $\mv{\Pi} \vdash \texttt{L} : \mv{\Pi}''$, and the types of $\mv{\Pi}'$ are
  instances of $\mv{\Pi}''$.
\end{theorem}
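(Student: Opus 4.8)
The plan is to factor the argument through the already-established completeness of $\unify$ (Theorem~\ref{theo:unifyCompleteness}) together with the companion soundness theorem for type inference. Given a derivation of $\mv\Pi \vdash \mv L : \mv{\Pi'}$ via \rulename{GT-CLASS}, I first read off from it a unifier $\omega$ of the constraint set $\consSet$ produced by $\fjtype(\mv\Pi,\mv L)$, then transport that unifier through $\unify$ to obtain a particular algorithmic result $\mv{\Pi''}$, and finally show that $\mv{\Pi'}$ is an instance of $\mv{\Pi''}$.

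First I would invert \rulename{GT-CLASS} to expose, for each method $\mv m$, the declared type $\exptype{}{\ol Y \triangleleft \ol P}\ol{T_m}\to T_m$ recorded in $\mv{\Pi'}$, the monomorphic environment used to check the bodies, and the body derivations $\mv{\Pi'}, \mv\Delta \vdash \ol M\ \mathtt{OK}$. The central lemma, proved by induction on the expression typing derivation in lockstep with the clauses of $\typeExpr$, states that any such derivation induces a substitution $\omega$ that is a unifier of $(\consSet, \mv\Delta_X \cup \set{\ol Y <: \ol P})$, where $\mv\Delta_X = \ol X <: \ol N$. Concretely, $\omega$ sends the fresh variables $\ol{\tv{a}}, \tv{a}_{\mv m}$ introduced for parameters and results to the declared $\ol{T_m}, T_m$, and the intermediate variables created inside $\typeExpr$ to the types chosen at each node of the derivation. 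Each constraint-generation clause mirrors a typing rule: the instantiations $\ol V$ guessed in \rulename{GT-INVK} and $\ol U$ in \rulename{GT-NEW} become the values of the fresh variables $\ol{\tv{b}}, \ol{\tv{a}}$, the subtyping side conditions become the $\lessdot$-constraints, and, crucially, the specific class or method selected by the derivation picks out exactly one branch of the corresponding or-constraint, which $\omega$ then satisfies. For class-internal calls, Proposition~\ref{prop:polymorphi-recursion} guarantees the instantiation is the identity $\ol Y/\ol Y$, matching the monomorphic assumptions that $\typeMethod$ uses.

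Next I would apply completeness of $\unify$. Treating the declared generic variables $\ol Y$ as fresh constants with bounds $\ol P$, the unifier $\omega$ is captured by the set of general unifiers that $\unify(\consSet, \mv\Delta_X)$ computes; hence there is a computed solution $(\sigma, \ol Y \triangleleft \ol P)$ and a substitution $\lambda$ with $\omega = \lambda \circ \sigma$. This $\sigma$ is exactly what the nondeterministic $\fjtypeinference$ may return, yielding a particular $\mv{\Pi''}$, which discharges the first conjunct. By the soundness theorem for type inference we already have $\mv\Pi \vdash \mv L : \mv{\Pi''}$, which discharges the second. For the third, the equation $\omega = \lambda \circ \sigma$ shows that the declared method types in $\mv{\Pi'}$, obtained by applying $\omega$ to the method assumptions, arise from the algorithmic types in $\mv{\Pi''}$, obtained by applying $\sigma$, by the further substitution $\lambda$; since $\sigma$ leaves the generic variables uninstantiated in $\mv{\Pi''}$, reconciling $\lambda$ with them amounts to instantiating those generics, so each type of $\mv{\Pi'}$ is an instance of the corresponding type in $\mv{\Pi''}$.

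I expect the main obstacle to be the interface between generalization in the declarative system and the discovery of generic variables inside $\unify$. The rule \rulename{GT-CLASS} guesses $\ol Y \triangleleft \ol P$ up front, whereas $\unify$ invents its generic variables only in its final step from the leftover $\lessdot$-constraints; making the correspondence precise requires treating $\ol Y$ as constants while extracting $\omega$, matching them against the generic variables that $\sigma$ leaves free, and verifying that $\lambda$ respects their bounds. A secondary difficulty is the bookkeeping for or-constraints: I must argue that the branch selected by $\omega$ is among those generated by $\typeExpr$, i.e., that the class or method chosen declaratively really does occur in the method environment with matching arity, so that the general-unifier property of $\unify$ genuinely applies to that branch.
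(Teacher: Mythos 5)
Your proposal follows essentially the same route as the paper: both reduce the claim to the completeness of \unify{} via a rule-by-rule correspondence between the \TFGJ{} typing derivation and the constraints emitted by \fjtype{} and \typeExpr{}, and both obtain the instance property from the fact that \unify{} returns a set of general unifiers (composing $\omega = \lambda \circ \sigma$). Your write-up is in fact more explicit than the paper's own proof, which only tabulates the premise-to-constraint correspondence and asserts the conclusion, whereas you spell out the construction of the witnessing unifier $\omega$ by induction on the derivation, the or-constraint branch selection, and the identity instantiation for class-internal calls.
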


%See the proof in appendix \label{chapter:soundness-completeness}

\label{sec:complexity}

\begin{theorem}[NP-Hardness]
  \label{theo:np-hardness}
  The type inference algorithm for typeless Featherweight Java is NP-hard.
\end{theorem}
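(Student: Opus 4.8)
The plan is to establish NP-hardness by a polynomial-time reduction from a known NP-complete problem to the type inference problem for typeless Featherweight Java. The natural candidate is boolean satisfiability (\textsc{SAT}) or one of its combinatorial cousins, exploiting the fact---demonstrated repeatedly in Section~\ref{sec:motivation}---that global type inference must explore branching alternatives whenever a method or field is provided by several classes, or whenever a lower-bound constraint $\exptype{C}{\il{T}} \lessdot \tv{b}$ forces an or-constraint via $\expandLB$ enumerating all intermediate supertypes. These or-constraints are precisely the combinatorial disjunctions that let us encode the clauses of a boolean formula, so the intended source problem is most likely \textsc{(3-)SAT} or \textsc{Monotone One-in-Three SAT}.

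The key steps I would carry out, in order, are as follows. First, I would fix the source problem and describe a uniform encoding that turns a formula $\varphi$ with variables $x_1,\dots,x_k$ into an \TFGJ program $\ol{L}_\varphi$ of size polynomial in $|\varphi|$. The encoding should introduce, for each boolean variable $x_i$, a small class hierarchy whose two incomparable completions represent the truth values \emph{true} and \emph{false}; the subtyping-driven or-constraints of Step~1 and Step~2 of \unify{} then force the algorithm to commit to one value per variable. Second, for each clause I would design a method or field access whose constraint set (generated by \typeExpr{}) is satisfiable exactly when at least one literal in the clause is assigned the value satisfying it; the ambiguity of method lookup across multiple defining classes (as in the \texttt{List}/\texttt{Global} example of Section~\ref{sec:multiple-results}) supplies the disjunction. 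Third, I would argue the forward direction: any satisfying assignment of $\varphi$ yields a consistent choice of completions, hence a successful run of \fjtypeinference producing some $\mv\Pi'$. Fourth, the backward direction: by Soundness and Completeness of \unify{} (Theorems~\ref{theo:unifySoundness} and~\ref{theo:unifyCompleteness}), any successful typing of $\ol{L}_\varphi$ must resolve every variable gadget consistently and satisfy every clause gadget, so it reads back off a satisfying assignment. Finally, I would confirm that both the construction of $\ol{L}_\varphi$ and the extraction of an assignment from a typing run in polynomial time, and that $\ol{L}_\varphi$ respects the structural requirements of \TFGJ (well-formed headers, acyclic approximate call graph, no cross-class mutual recursion).

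The main obstacle I anticipate is the \emph{faithfulness} of the clause gadgets, that is, ensuring that the constraint sets encode exactly the intended boolean semantics with no spurious solutions and no accidental unsolvability. Because \TFGJ subtyping is nominal and the or-constraints produced by \expandLB range over \emph{all} intermediate supertypes between a lower and an upper bound, I must engineer the class hierarchy so that these enumerations correspond precisely to the literal choices of a clause and do not collapse (e.g.\ via a shared common supertype such as \texttt{Object}) into unintended satisfying combinations. A secondary subtlety is the per-class, topologically-sorted processing discipline together with the prohibition on polymorphic recursion: the gadgets must avoid inadvertently triggering the occurs-check or demanding cross-class mutual recursion, so the variable and clause classes have to be laid out in a valid dependency order with all interaction funneled through fields and monomorphic calls. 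Once the gadgets are shown faithful, the reduction's correctness and polynomiality follow routinely, and NP-hardness is established; combined with membership in NP (sketched via the finitary, guess-and-check nature of \unify{}), this yields the NP-completeness claimed in the abstract.
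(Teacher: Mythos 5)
Your plan matches the paper's proof: both reduce \textsc{SAT} to type inference by exploiting the or-constraints that arise when a method name is defined in several classes, so that choosing a completion amounts to choosing a truth assignment. The paper's concrete gadget resolves exactly the ``faithfulness'' worry you flag: it encodes the formula in NAND form, introduces four classes \texttt{Nand1}--\texttt{Nand4} whose \texttt{nand} methods enumerate the truth table over classes \texttt{True}/\texttt{False}, passes each operator occurrence as an extra parameter whose inferred type selects a table row, and forces the overall result to \texttt{True} via a constructor argument of that type.
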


%For the proof see appendix \ref{section:np-complete}.

\begin{theorem}[NP-Completeness]
  \label{theo:np-completeness}
  The type inference algorithm for typeless Featherweight Java is NP-Complete.
\end{theorem}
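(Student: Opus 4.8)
The plan is to lean on the already-established NP-hardness (Theorem~\ref{theo:np-hardness}) so that the only remaining obligation is \emph{membership in NP}. To that end I would exhibit a polynomial-size certificate, together with a deterministic polynomial-time verifier, witnessing that \fjtypeinference{} succeeds on a given typeless FJ program (equivalently, that the program admits a completion). The overall algorithm is nondeterministic in three places, so the certificate simply records the choice made at each such point: (i) for every or-constraint produced by \fjtype{}/\typeExpr{}---one per field access $\mv e.\mv f$ and one per method call $\mv e.\mathtt{m}(\ol e)$---which class/method alternative is selected; (ii) for every or-constraint eliminated in \unify{}, which disjunct $\set{\overline{\simpleCons}_i}$ is chosen; and (iii) for every lower-bound expansion, which alternative of \expandLB{} is taken. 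Since the program is non-generic, each such choice point is anchored either to a syntactic subexpression or to a type variable, of which there are only polynomially many, and each choice ranges over at most the set of declared classes/methods. Hence the certificate has polynomial size.

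The verifier then replays these choices to drive \unify{} deterministically. After the recorded disjuncts are fixed, the only rules left to apply are those of Figures~\ref{fig:fgjreduce-rules} and~\ref{fig:fgjerase-rules} together with (subst). I would argue each of these is executable in polynomial time and that the whole reduction saturates in polynomially many steps: (match) and (adopt) only add subtyping facts between the finitely many variable/class pairs already present, so their exhaustive application closes after polynomially many additions; (reduce), (sub-elim), (erase), (swap) strictly decrease a well-founded measure; and (subst) eliminates one variable per application. Crucially, because FJ has no generics, every type occurring in a constraint is an \emph{atomic} class name rather than a nested term, so (subst) cannot enlarge term sizes and the constraint set stays polynomially bounded throughout---the usual source of exponential blowup in unification is absent here. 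Reading off the substitution $\sigma$ in Step~6 and checking that the induced method typings satisfy well-formedness and \overridesFunc{} (using the FGJ rules of Figures~\ref{fig:well-formedness-and-subtyping} and~\ref{fig:auxiliary-functions}) is likewise polynomial. This establishes that the problem lies in NP, and combined with Theorem~\ref{theo:np-hardness} yields NP-completeness.

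The step I expect to be the main obstacle is pinning down a genuine \emph{polynomial} bound on the deterministic core of \unify{}, as opposed to mere termination. Theorem~\ref{theo:unifyTermination} guarantees that \unify{} halts on finite input, but for NP membership I must refine this into a polynomial step count once the disjunctive choices are fixed. Two sub-points need care: first, that the (adopt)/(match) saturation does not generate a super-polynomial number of derived constraints---here I would observe that any constraint it introduces relates existing variables or their known bounds, so the reachable set of simple constraints is bounded by a polynomial in the number of variables and classes; and second, that the \expandLB{} expansions do not cascade uncontrollably---each expansion resolves a lower-bounded variable to a concrete class, strictly decreasing the number of unresolved variables, so at most polynomially many expansions occur, each branching over the finitely many superclasses of a fixed class. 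Making these two bounds precise, and confirming that the per-class backtracking search does not inflate the certificate beyond polynomial size across the whole (topologically sorted) program, is where the real work lies.
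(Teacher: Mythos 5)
Your proposal reaches NP membership by a genuinely different route than the paper. The paper's certificate is the \emph{solution itself} --- the inferred typing, i.e.\ the completed program --- and its verifier is simply the FJ type check: the paper argues that the expression typing rules are checkable in polynomial time provided subtyping is, and that subtyping is polynomial because \rulename{S-TRANS} need only be applied a number of times bounded by the number of classes, with each \rulename{S-CLASS} step a polynomial substitution. Your certificate is instead the \emph{trace of nondeterministic choices} (which disjunct of each or-constraint, which \expandLB{} alternative), and your verifier replays \unify{} deterministically. The practical difference is significant: the paper's route entirely sidesteps the obligation you yourself flag as the main obstacle, namely a polynomial bound on the deterministic core of \unify{}. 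Theorem~\ref{theo:unifyTermination} gives only termination, and nothing in the paper upgrades that to a polynomial step count; your route cannot be closed without such a bound, and your sketch of it (saturation of (match)/(adopt) adds only polynomially many derived constraints, (subst) cannot enlarge terms in the absence of generics) is plausible but explicitly left unproven. By verifying the \emph{answer} rather than the \emph{computation}, the paper needs only the (much easier) polynomial-time type check. If you wish to salvage your approach without proving the missing bound, the simplest repair is to let the choice trace determine the candidate completion and then hand that completion to the FJ type checker --- which collapses your argument into the paper's. As it stands, your proof is a valid strategy with an acknowledged gap at its load-bearing step, whereas the paper's argument carries no such obligation.
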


%The proof is described in appendix \ref{section:np-complete}.

% \section{Examples}
% \label{sec:examples}
%\input{examples}

\section{Related Work}
\label{sec:related-work}

\subsection{Formal models for Java}
\label{sec:formal-models-java}

There is a range of formal models for Java. Flatt et al
\cite{DBLP:conf/java/FlattKF99} define an elaborate model with
interfaces and classes and prove a type soundness result. They do not
address generics. Igarashi et al
\cite{DBLP:journals/toplas/IgarashiPW01} define Featherweight Java
and its generic sibling, Featherweight Generic Java. Their language is
a functional calculus reduced to the bare essentials, they develop the full metatheory, they
support generics, and study the type erasing transformation used by
the Java compiler. MJ \cite{UCAM-CL-TR-563} is a core calculus that
embraces imperative programming as it is targeted towards reasoning
about effects. It does not consider generics. Welterweight Java
\cite{DBLP:conf/tools/OstlundW10} and OOlong
\cite{DBLP:conf/sac/CastegrenW18} are different sketches for a core
language that includes concurrency, which none of the other core
languages considers. 

We chose to base our development on FGJ because it embraces a relevant
subset of Java without including too much complexity (e.g., no imperative
features, no interfaces, no concurrency). It seems that results for
FGJ are easily scalable to full Java. We leave the addition of these
feature to future work, as we see our results on FGJ as a first step
towards a formalized basis for global type inference for Java.

\subsection{Type inference}

Some object-oriented languages like Scala, C\#, and Java perform
\emph{local} type inference \cite{PT98,OZZ01}. Local type 
inference means that missing type annotations are recovered using only
information from adjacent nodes in the syntax tree without long distance
constraints. For instance, the type of a variable initialized with a
non-functional expression or the return type of a method can be
inferred. However, method argument types, in particular for recursive
methods, cannot be inferred by local type inference.

Milner's algorithm $\mathcal{W}$ \cite{DBLP:journals/jcss/Milner78} is
the gold standard for global type inference for languages with 
parametric polymorphism, which is used by ML-style languages. The fundamental idea
of the algorithm is to enforce type equality by many-sorted type
unification \cite{Rob65,MM82}. This approach is effective and results
in so-called principal types because many-sorted unification is
unitary, which means that there is at most one most general result.

Pl\"umicke \cite{Plue07_3} presents a first attempt to adopt Milner's
approach to Java. However, the presence of subtyping means that type
unification is no longer unitary, but still finitary. Thus, there is
no longer a single most general type, but any type is an instance of a
finite set of maximal types (for more details see Section
\ref{sec:unification}). Further work by the same author
\cite{plue15_2,plue17_2}, 
refines this approach by moving to a constraint-based algorithm and by
considering lambda expressions and Scale-like function types.
In Pluemicke's work there is no formal definition of its type system as a basis
of the type inference algorithm. Our contribution in this paper is a formal defined
type system. 

We rule out polymorphic recursion because its presence makes type
inference (but not type checking: see FGJ) undecidable. Henglein
\cite{DBLP:journals/toplas/Henglein93} as well as Kfoury et al
\cite{DBLP:journals/toplas/KfouryTU93} investigate type inference in
the presence of polymorphic recursion. They show that type inference
is reducible to semi-unification, which is undecidable
\cite{DBLP:journals/iandc/KfouryTU93}. However, the undecidability of
this problem apparently does not matter much in practice
\cite{DBLP:journals/tcs/EmmsL99}. 

Ancona, Damiani, Drossopoulou, Zucca \cite{ADDZ05} considered polymophic byte
code. For a type inference system this means that structural types have to be
inferred. As \textsf{Java} allows no structural types in \textsf{Java} this could be simulated by
generated interfaces. Pluemicke follows this approach in
\cite{plue16_1}. Furthermore Ancona et.al. considers only classes without
generics.

\subsection{Unification}
\label{sec:unification}

We reduce the type inference problem to constraint solving with
equality and subtype constraints.
The procedure presented in Section~\ref{sec:unify} is inspired by
polymorphic order-sorted unification which is used in logic 
programming languages with polymorphic order-sorted types
\cite{GS89,MH91,HiTo92,CB95}.

Smolka's thesis \cite{GS89} mentions the type unification problem
as an open problem. He gives  an incomplete type inference algorithm
for the logical language \textsf{TEL}. The reason for incompleteness
is the admission of subtype relationships between polymorphic types of
different arities as in  $\texttt{List(a)} \sub
\texttt{myLi(a,b)}$. The result is that the subtyping relation does
not fulfill the ascending chain condition.
For example, given  $\texttt{List(a)} \sub \texttt{myLi(a,b)}$, we obtain:
\begin{gather*}
  \texttt{List(a)} \sub \texttt{myLi(a,List(a))} \sub \texttt{myLi(a,myLi(a,List(a)))}  \sub \dots
\end{gather*}
However, this subtyping chain exploits covariant subtyping, which does
not apply to FGJ (but it would apply in the presence of wildcards).

Smolka's algorithm also fails sometimes in the absence of infinite
chains, although there is a unifier. 
For example, given $\texttt{nat} \sub \texttt{int}$ and the set
of subtyping constraints $\set{\mathtt{nat} \lessdot \mathtt{a},
  \mathtt{int} \lessdot \mathtt{a}}$, it returns the substitution
$\set{\mathtt{a} \mapsto \mathtt{nat}}$ generated from the first
constraint encountered. This substitution is not a solution
because $\set{\mathtt{int} \lessdot \mathtt{nat}}$ fails.
However, $\set{\mathtt{a}\mapsto \mathtt{int}}$ is a unifier, which
can be obtained by processing the constraints in a different order: from $\set{\mathtt{int} \lessdot \mathtt{a}, \mathtt{nat} \lessdot
  \mathtt{a}}$ the algorithm calculates the unifier 
$\set{\mathtt{a}\mapsto \mathtt{int}}$.

Hill and Topor  \cite{HiTo92} propose a polymorphically typed logic
programming language with subtyping. They restrict subtyping to type
constructors of the same arity,  which guarantees that all subtyping
chains are finite.
In this approach a \emph{most general type unifier (mgtu)} is
defined as an upper bound of different principal type unifiers. In
general, two type terms need not have an upper bound in the subtype ordering,
which means that there is no mgtu in the sense of Hill and Topor.
For example for  $\texttt{nat} \sub \texttt{int}$, $\texttt{neg} 
\sub \texttt{int}$, and the set of inequations $\set{\mathtt{nat} \lessdot
  \mathtt{a}$, $\mathtt{neg} \lessdot \mathtt{a}}$ the mgtu $\set{\mathtt{a} \mapsto \texttt{int}}$ is
determined. If the subtype ordering is extended by $\mathtt{int} \sub
\mathtt{index}$ and $\mathtt{int} \sub \mathtt{expr}$, then there are three
unifiers $\set{\mathtt{a} \mapsto \texttt{int}}$, $\set{\mathtt{a} \mapsto
  \mathtt{index}}$, and $\set{\mathtt{a} \mapsto 
\mathtt{expr}}$, but none of them is a mgtu \cite{HiTo92}.

The type system of \textsf{PROTOS-L} \cite{CB95} was
derived from \textsf{TEL} by disallowing any explicit subtype relationships
between polymorphic type constructors. 
Beierle \cite{CB95} gives a complete type unification algorithm, which can be extended to the
type system of Hill and Topor.
They also prove that the type unification problem is finitary.

Given the declarations  $\texttt{nat} \sub
\texttt{int}$, $\texttt{neg} \sub \texttt{int}$, $\mathtt{int} \sub
\mathtt{index}$, and $\mathtt{int} \sub \mathtt{expr}$, applying the
type unification algorithm of \textsf{PROTOS-L} to the set of
inequations $\set{\mathtt{nat} \lessdot
  \mathtt{a}$, $\mathtt{neg} \lessdot \mathtt{a}}$ yield three general
unifiers $\set{\mathtt{a} \mapsto \texttt{int}}$, $\set{\mathtt{a} \mapsto
  \mathtt{index}}$, and $\set{\mathtt{a} \mapsto \mathtt{expr}}$. 

Pl\"umicke \cite{plue09_1} realized that the type system of
\textsf{TEL} is related to subtyping in Java.
In contrast to \textsf{TEL}, where the ascending chain condition does
not hold,  Java with wildcards violates the descending chain
condition. For example, given $\exptypett{myLi}{b,a} \olsub
\exptypett{List}{a}$ we find:

\smallskip
{\centering
$\ldots \ \olsub\
\exptypett{myLi}{\exptypett{?\,$\extends$\,myLi}{\exptypett{\textrm{{\tt ?}}\,$\extends$\,List}{a},a},a}
\ \olsub\ \exptypett{myLi}{\exptypett{?\,$\extends$\,List}{a},a} \ \olsub\  \exptypett{List}{a}$\\}

\smallskip
Pl\"umicke \cite{plue09_1} solved the open problem of infinite chains
posed by Smolka \cite{GS89}.
He showed that in any infinite chain there is a finite number of elements such that
all other elements of the chain are instances of them. The resulting type
unification algorithm can be used for type inference of Java~5 with
wildcards \cite{Plue07_3}. As FGJ has no wildcards, we based our
algorithm on an earlier work \cite{Plue04_1}.
In contrast to that work, which only infers generic methods with
unbounded types, our algorithm  infers bounded generics.
To this end, we do not expand constraints
of the form $\tv{a} \lessdot \itype{N}$, where $\tv{a}$ is type variable and $\itype{N}$ is is a
non-variable type, but convert them to bounded type parameters of the form
\texttt{X extends N}. This change results in a significant reduction
of the number of solutions of the type
unification algorithm without restricting the generality of typings of
FGJ-programs. Unfortunately, constraints of the form $\itype{N} \lessdot \tv{a}$ have
to be expanded as FGJ (like Java) does not permit lower bounds for
generic parameters. If lower bounds were permitted  (as in Scala), the
number of solutions could be reduced even further.

%%% Local Variables:
%%% mode: latex
%%% TeX-master: "TIforGFJ"
%%% End:

\section{Conclusions}
\label{sec:conclusions}

This paper presents a global type inference algorithm applicable to
Featherweight Generic Java (FGJ). To this end, we define a language
\FGJGT that characterizes FGJ programs amenable to type inference: its
methods carry no type annotations and it does not permit polymorphic recursion.
This language corresponds to a strict subset of FGJ.
The inference algorithm is constraint based and is able to infer
method types with bounded generic types. 

\if0
We presented in this paper a global type infrerence algorithm for Featherweight
Generic Java (FGJ). FGJ is a formal calculus of Java which reduces the method
declaring block to a single expression. In
\cite{DBLP:journals/toplas/IgarashiPW01} syntax, including typing rules, and semantics of FGJ is
declared. FGJ type system corresponds with the type system of Generic Java, which is a Java with generics but without wildcards. We extended FGJ to TFGJ by adding typing rules for
programs with method without any type declaration. The principle idea of the
new typing rules is: Any type correct FGJ program, erasing the type
annotations is a correct type TFGJ program. There is one
restriction. Polymorphic recusive programs are excluded as for these programs
type inference is only semi-decidable. This means that any correct typeless TFGJ program
has an equivalent FGJ program. Therefore we leave the semantics of FGJ unchanged.

%type unification
We reduced the global type inference algorithm to a type unification algorithm
which resolves the generated constraints. The type unification algorithm is not
unitary but finitary. This means that the algorithm terminates but in general
there is more than one solution. The multiplicity of solutions are induced
either by overloading, expressed by so-called \textbf{OrConstraint}, and on the over hand by constraints of the form $a
\lessdot ty$ and $ty \lessdot a$ where $a$ is a type variable und $ty$ is a non
type variable type. In this paper we improved the type unification algorithm
such that constraints of the form $a \lessdot ty$ are not longer resolved but
transferred to bounded type parameters \texttt{a extends ty}. This reduced the
number of solutions enormously, without restricting the generality of the typing
of the TFGJ programs. Unfortunately constraints of the form $ty \lessdot a$ have
to be resolved as Java allows no lower bounds of type parameters. If we would
transfer our type inference algorithm to Scala then the solution of these
constraints could be disclaimed as Scala allows lower bounds of type
parameters. This would speed up our type inference algorithm.
\fi

In future work, we plan to extend \FGJGT to a calculus with wildcards
inspired by Wild~FJ \cite{TEP05}. 
We also plan to extend the formal calculus with lambda expressions
(cf.\ \cite{BBDGV18}), but using true function types in place of
interface types.

%%% Local Variables:
%%% mode: latex
%%% TeX-master: "TIforGFJ"
%%% End:

%\bibliographystyle{splncs04}
\bibliography{peter,martin}

\appendix

\section{Unify Soundness Proof}\label{chapter:unifySoundnessProof}
\begin{proof}
    We show theorem \ref{theo:unifySoundness} by going backwards over every step of the algorithm.
    Let $\sigma = \set {a_1 \mapsto T_1, \ldots , a_n \mapsto T_n}$ and $\set{\ol{Y} <: \ol{P}}$ be the result of a $\unify{}(\consSet, \Delta)$ call.
    We show for every constraint in the input set $(a \lessdot b) \in \consSet_{in}$ and $(c \doteq d) \in \consSet_{in}$:
    $\Delta, \ol{Y} <: \ol{P} \vdash \sigma(a) \olsub \sigma(b)$ and $\sigma(c) = \sigma(d)$\\
    
    We now consider each step of the \unify{} algorithm
    which transforms the input set of constraints $C$ to a set $C'$
    If $\sigma$ is an unifier of $C'$, then $\sigma$ is an unifier of $C$, too.
    
    \begin{description}
    \item[Step 6] The last step does not change the constraint set.
    \item[Step 5]
    A unifier which is correct for $a \doteq b$ is also correct for $a \lessdot b$.
    The transformation $C' = [a/b]C$ does not change this.
    
    \item[Step 4]
    The constraint sets are not altered here.
    
    \item[Step 3]
    An unifier $\sigma$ that is correct for a constraint set
    $[\itype{T}/a]C \cup \set{a \doteq \itype{T}}$ is also correct for
    the set $C \cup \set{a \doteq \itype{T}}$.
    From the constraint $(a \doteq \itype{T})$ it follows that $\sigma(a) = \itype{T}$.
    This means that $\sigma(C) = \sigma([\itype{T}/a]C)$,
    because every occurence of $a$ in $C$ will be replaced by $\itype{T}$ anyways when using the unifier $\sigma$.
    
    \item[Step 2]
    This step transforms constraints of the form $\exptype{C}{\ol{X}} \lessdot a$ and
    $\set{a \lessdot \exptype{C}{\ol{X}}, a \lessdot^* b}$ jinto sets of or-constraints.
    We can show that if there is a resulting set of constraints which has $\sigma$ as its correct unifier
    then $\sigma$ also has to be a correct unifier for the constraints before this transformation.
    
    We look at each transformation done in step 2:
    \begin{description}
    \item[$\set{\exptype{C}{\ol{T}} \lessdot a} \in C \to \set{a \doteq [\ol{T}/\ol{X}]N} \in C'$:]
    If $\exptype{C}{\ol{X}} <: N$ and $\sigma$ is correct for $(a \doteq [\ol{T}/\ol{X}]N)$
    then $\sigma$ is also correct for $(\exptype{C}{\ol{T}} \lessdot a))$.
    When substituting $a$ for $[\ol{T}/\ol{X}]N$ we get 
    $(\exptype{C}{\ol{T}} \lessdot [\ol{T}/\ol{X}]N)$
    , which is correct because $\exptype{C}{\ol{X}} <: \exptype{C}{\ol{Y}}$
    (see \texttt{S-CLASS} rule).
    \item[$\set{a \lessdot \exptype{C}{\overline{T}},\ a \lessdot^* b} \in C \to \set{T \lessdot b, a \lessdot \exptype{C}{\overline{T}},\ a \lessdot^* b} \in C'$]
    obviously.
    \item[$\set{a \lessdot \exptype{C}{\overline{T}},\ a \lessdot^* b} \in C \to \set{a \doteq [\ol{T}/\ol{X}]N, a \lessdot \exptype{C}{\overline{T}},\ a \lessdot^* b} \in C'$]
    This is the same as in the first transformation.
    Here we can also show correctness via the \texttt{S-CLASS} rule.
    
    \end{description}
    
    \item[Step 1]
    \begin{description}
    \item[erase-rules] remove correct constraints from the constraint set.
    A unifier $\sigma$ that is correct for the constraint set $C$
    is also correct for $C \cup \set{\theta \doteq \theta}$
    and $C \cup \set{\theta \lessdot \theta'}$, when $\theta \leq \theta'$.
    \item[swap-rule] does not change the unifier for the constraint set.
    $\doteq$ is a symmetric operator and parameters can be swapped freely.
    \item[match] The subtype relation is transitive, so if there is a correct solution for
    $a \lessdot \exptype{C}{\ol{X}}, \exptype{C}{\ol{X}} \lessdot \exptype{D}{\ol{Y}}$
    then this solution would also apply for $a \lessdot \exptype{C}{\ol{X}} \lessdot \exptype{D}{\ol{Y}}$
    or $a \lessdot \exptype{D}{\ol{Y}}$.
    \item[adopt] An unifier which is correct for $C \cup \set{a \lessdot \exptype{C}{\ol{X}}, b \lessdot^* a, b \lessdot \exptype{D}{\ol{Y}}, b \lessdot \exptype{C}{\ol{X}}}$
    is also correct for $C \cup \set{a \lessdot \exptype{C}{\ol{X}}, b \lessdot^* a, b \lessdot \exptype{D}{\ol{Y}}}$.
    \item[adapt] If there is a $\sigma$ which is a correct unifier for a set
    $C \cup \set{ \exptype{C}{[\ol{A}/\ol{X}]\ol{Y}} \doteq \exptype{C}{\ol{B}}}$ then it is also
    a correct unifier for the set $C \cup \set{ \exptype{D}{\ol{A}} \lessdot \exptype{C}{\ol{B}}}$,
    if there is a subtype relation $\exptype{D}{\ol{X}} \leq^* \exptype{C}{\ol{Y}}$.
    To make the set $C \cup \set{ [\ol{A}/\ol{X}]\exptype{C}{\ol{Y}} \doteq \exptype{C}{\ol{B}}}$ the unifier 
    $\sigma$ must satisfy the condition $\sigma([\ol{A}/\ol{X}]\ol{Y}) = \sigma(\ol{B})$.
    By substitution we get $C \cup \set{ \exptype{D}{\ol{A}} \lessdot \exptype{C}{[\ol{A}/\ol{X}]\ol{Y}}}$
    which is correct under the \texttt{S-CLASS} rule.
    \item[reduce] The \texttt{reduce} rule is obviously correct under the FJ typing rules.
    \end{description}
    
    \item[OrConstraints]
    If $\sigma$ is a correct unifier for one of the constraint sets in $C_{set}$
    then it is also a correct unifier for the input set $\consSet_{in}$.
    When building the cartesian product of the \textbf{OrConstraints} every possible
    combination for $\consSet_{in}$ is build.
    No constraint is altered, deleted or modified during this step.
    \end{description}
    \end{proof}
    
    \hfill $\square$

\section{Unify Completeness Proof}\label{chapter:unifyCompletenessProof}
\begin{proof}

We proof theorem \ref{theo:unifyCompleteness} by assuming there exists a general unifier $\sigma = \set {a_1 \mapsto T_1, \ldots , a_n \mapsto T_n}$.
We then look at every step of the algorithm, which alters the set of constraints $C$.
We show that if $\sigma$ is general unifer for the input then $\sigma$ is a
general unifer for the altered set of constraints. This means that no solution
is excluded.

%Assume there is a unifier and the Unify algorithm finds it.
%Then no rule makes this unifier impossible / removes this unifier.

\begin{description}
\item[Step 1:]
The first step applies the seven rules from figure
\ref{fig:fgjreduce-rules} and \ref{fig:fgjerase-rules}.

\textbf{erase-rule:} The constraint $\tv{a} \doteq \tv{a}$ is true for every unifier and can be removed.

\textbf{swap-rule:} $\doteq$ is a symmetric operator and parameters can be swapped freely.
This operation does not change the meaning of the constraint set.

\textbf{match-rule:}
If there is a solution for $a \lessdot \exptype{C}{\ol{\itype{T}}}, a \lessdot \exptype{D}{\ol{\itype{U}}}$,
this is also a solution for $a \lessdot \exptype{C}{\ol{\itype{T}}}, \exptype{C}{\ol{\itype{T}}} \lessdot \exptype{D}{\ol{\itype{U}}}$.
A correct unifier $\sigma$ has to find a type for $a$, which complies with $a \lessdot \exptype{C}{\ol{\itype{T}}}$ and $a \lessdot \exptype{D}{\ol{\itype{U}}}$.
Due to the subtyping relation being transitive this means that $\sigma(a) \lessdot \exptype{C}{\ol{\itype{T}}} \lessdot \exptype{D}{\ol{\itype{U}}}$.

\textbf{adopt-rule:} Subtyping in FJ is transitive,
which allows us to apply the adopt rule without excluding any possible unifier.

\textbf{adapt-rule:} Every solution which is correct for the constraints
$Eq \cup \set{ \exptype{C}{[\ol{\itype{A}}/\ol{X}]\ol{\itype{T}}} \doteq \exptype{C}{\ol{\itype{U}}}}$ is also
a correct solution for the set $Eq \cup \set{ \exptype{D}{\ol{\itype{A}}} \lessdot \exptype{C}{\ol{\itype{U}}}}$.
According to the FGJ \texttt{S-CLASS} rule there can only be a possible solution for 
$\exptype{C}{[\ol{\itype{A}}/\ol{X}]\ol{\itype{T}}} \doteq \exptype{C}{\ol{\itype{U}}}$
if $\ol{\itype{U}} = [\ol{\itype{A}}/\ol{X}]\ol{\itype{T}}$.
Therefore this transformation does not remove any possible solution from the constraint set.

\textbf{reduce-rule:}
%The constraint is not altered

If $\sigma$ is a unifier of $\exptype{D}{\ol{\itype{T}}} \lessdot
\exptype{D}{\ol{\itype{U}}}$
then $\sigma$ is a unifier of $\ol{\itype{T}} = \ol{\itype{U}}$.
Therefore this step does not remove a possible solution.

\textbf{equals-rule:}
This rule removes a circle in the constraints.
This does not remove a solution.

\item[Step 2:]
%The second step builds multiple constraint sets of all possible type combinations for the $\lessdot$-constraints.
The second step of the algorithm eliminates $\lessdot$-constraints
by replacing them with $\doteq$-constraints.
For each $(\exptype{C}{\ol{\itype{T}}} \lessdot b), (b \lessdot \exptype{C}{\ol{\itype{U}}})$ constraint the algorithm builds a set with every
possible supertype of $\exptype{C}{\ol{\itype{T}}}$.
So if there is a correct unifier $\sigma$ for the constraints before this conversion there will be at least one set of
constraints for which $\sigma$ is a correct unifier.

Additionally this step resolves constraints of the form $\tv{a} \lessdot \exptype{C}{\ol{\itype{T}}}, \tv{a} \lessdot^* \tv{b}$.
We generate an or-constraint with every possible combination for $\tv{b}$.
This includes every possible solution for $\tv{b}$ and therefore does not remove a possible solution.
This is due to the fact that \TFGJ does not allow lower bounds for generic variables.
%Every type variable left of a $\lessdot$-constraint will get a type assigned by \

\item[Step 3:]
In the third step the \textbf{substitution}-rule is applied.
If there is a constraint $\tv{a} \doteq \type{N}$ then there is no other way to fulfill the constraint set
than replacing $\tv{a}$ with $\type{N}$.
This does not remove a possible solution.

\item[Step 4:]
None of the constraints get modified.

\item[Step 5:]
If the algorithm advances to this step we further only work on constraint sets in solved form.
This means there are only four kinds of constraints left:
($a \doteq \itype{T}$), ($a \lessdot \itype{T}$), ($a \doteq b$) and ($a \lessdot b$) with $a$ and $b$ as type variables.

%We can set all TVs equal, because we allow only same TVs when having circles in a method call.
%This still will lead to the principal type.

The FGJ language does not allow subtype constraints for generic types.
A constraint like $(a \lessdot b)$ in a solution could be inserted as the typing shown in the example below.
But this is not allowed by the syntax of FGJ.
That is why we can treat this constraint as $(a \doteq b)$.

%TODO: This does not alter the outcome because the solution set is not modified anymore. All other TVs alread have a type like A =. Typ

\textit{Example:}
This would be a valid Java program but is not allowed in FGJ:
\begin{lstlisting}
class Example {
    <A extends Object, B extends A> A id(B a){
    return a;
    }
}
\end{lstlisting}

By replacing all ($a \lessdot b$) constraints with ($a \doteq b$) we
    do not remove the general unifier $\sigma$ as $a$ and $b$ are not substituted
    in $\sigma$. 

\item[Step 6:]
In the last step all the constraint sets, which are in solved form, are converted to unifiers.

We see that only a constraint set which has no unifier does not reach solved form.
We showed that in none of the steps of the \unify{} algorithm we exclude a possible unifier.
Also we showed that after we reach step 5 only constraint sets with a correct unifier are in solved form.
By removing all constraint sets which are not in solved form the algorithm does not
remove a possible correct unifier.

If we assume that there is a possible general unifier $\sigma$ for the input set $\consSet_{in}$
and the \unify{} algorithm does not exclude any of the possible unifiers,
then the result \unify{} contains the general unifier.
\hfill $\square$
\end{description}
\end{proof}

\section{Unify Termination Proof}\label{chapter:unifyTerminationProof}

The \unify{} algorithm gets called with a set of input constraints.
After resolving the \textbf{OrConstraints} we end up with multiple $Eq$ sets.
Afterwards the algorithm iterates over each of those sets (see Chapter \ref{sec:unify}).
We will show that \unify{} terminates on each of those sets by showing,
that each step of the algorithm removes at least one type variable
until the finishing state is reached.
The finishing state for a constraint set is reached when step 3 is not able to substitute a type variable.
This is checked by step 4 of the algorithm.
Then the $Eq$ set is either in solved form or determined to be unsolvable.

\textit{Proof:}
The \unify{} algorithm reduces the amount of type variables with every iteration.
No step adds a new type variable to the constraint set.
Additionally we have to show that the first step of the algorithm also terminates on every finite input set.

\begin{description}
\item[Step 1] 
Step 1 of the algorithm always terminates. \textit{Proof:}
Every rule either removes a $\lessdot$ constraint or reduces a $\exptype{C}{\ol{X}}$ to $\ol{X}$ inside a constraint.
None of the rules add a new $\lessdot$ constraint or a $\exptype{C}{\ol{X}}$ type to the constraint set.
Step 1 has to come to a stop once there are no more $\lessdot$ constraints or $\exptype{C}{\ol{X}}$ types to reduce.

The rule \textbf{match} seems to generate a new $\exptype{C}{\ol{X}}$ constraint,
but the $\exptype{C}{\ol{X}} \lessdot \exptype{D}{\ol{Y}}$ constraint added by \texttt{match}
will be changed immidiatly into a $\doteq$ constraint by the \texttt{adapt} rule.
Afterwards the \texttt{reduce1} rule will remove this freshly added $\exptype{C}{\ol{X}}$ type.
So effectively a $\doteq$ constraint is removed by this rule in combination with \texttt{adapt} and \texttt{reduce1}.

The \textbf{adopt} rule seems to generate a new $\lessdot$ constraint.
But the \texttt{adopt} rule triggers two other rules. The \texttt{match} and the \texttt{adapt} rule.
\begin{enumerate}
  \item We start with the \texttt{adopt} rule: \\
   $
  \begin{array}[c]{ll}
      \begin{array}[c]{l}
          Eq \cup \, \set{a \lessdot
          \exptype{C}{\ol{X}},
          b \lessdot^* a, b \lessdot \exptype{D}{\ol{Y}}} \\ 
          \hline
          \vspace*{-0.4cm}\\
          Eq \cup \set{
          a \lessdot
          \exptype{C}{\ol{X}},
          b \lessdot^*
          a
          , b \lessdot \exptype{C}{\ol{X}}
          , b \lessdot \exptype{D}{\ol{Y}}
          }
      %Eq \cup \set{\theta_1 \doteq \lambda'_1 \ldo \theta_n \doteq \lambda'_n}
      \end{array}
      \end{array}
      $
  \item We can now apply the \texttt{match} rule to the two resulting $(b \lessdot \ldots)$-constraints.
  If this is not possible due to type \texttt{C} not being a subtype of \texttt{D} or vice versa,
  then the $Eq$ set has no possible solution and \unify{} would terminate as fail $Uni = \emptyset$: \\
  $
  \begin{array}[c]{ll}
  \begin{array}[c]{l}
      Eq \cup \, \set{b \lessdot
      \exptype{C}{\ol{X}},
      b \lessdot
      \exptype{D}{\ol{Y}}} \\ 
      \hline
      \vspace*{-0.4cm}\\
      Eq \cup \set{b \lessdot \exptype{C}{\ol{X}}
      , \exptype{C}{\ol{X}} \lessdot \exptype{D}{\ol{Y}}}
  %Eq \cup \set{\theta_1 \doteq \lambda'_1 \ldo \theta_n \doteq \lambda'_n}
  \end{array}
  & \exptype{C}{\ol{Z}} <: \exptype{D}{\ol{N}} 
  \end{array}
      $\\
  \item The constraint added by the \texttt{match} rule fits the \texttt{adapt} rule, which we apply in the next step: $
  \begin{array}[c]{ll}
  \begin{array}[c]{l}
     Eq \cup \, \set{\exptype{C}{\ol{X}} \lessdot
      \exptype{D}{\ol{Y}}} \\ 
    \hline
    \vspace*{-0.4cm}\\
    Eq \cup \set{\exptype{D}{[ \ol{X} / \ol{Z} ]\ol{N}}
    \doteq \exptype{D}{\ol{Y}}}
  %Eq \cup \set{\theta_1 \doteq \lambda'_1 \ldo \theta_n \doteq \lambda'_n}
  \end{array}
  & \exptype{C}{\ol{Z}} <:\ \exptype{D}{\ol{N}}
  \end{array}
  $
  \end{enumerate}
In the end we have the conversion:\\
\begin{align*}\ddfrac{
  Eq \cup \, \set{a \lessdot
  \exptype{C}{\ol{X}},
  b \lessdot^* a, b \lessdot \exptype{D}{\ol{Y}}}
}{
  Eq \cup \set{a \lessdot
  \exptype{C}{\ol{X}},
  b \lessdot^* a, b \lessdot \exptype{D}{\ol{Y}}, \exptype{D}{[ \ol{X} / \ol{Z} ]\ol{N}}
  \doteq \exptype{D}{\ol{Y}}}
}\end{align*}

We can see now, that only a $\doteq$ constraint is added.
The \texttt{adopt} alone adds a $\lessdot$ constraint,
but due to the fact that it is always used together with \texttt{match} and \texttt{adapt} it effectively just adds a $\doteq$ constraint.

\item[Step 2] This step does not add new type variables to the constraint set.
\item[Step 3] The third step of the \unify{} algorithm removes at least one type variable
from the constraint set or otherwise does not alter $Eq$ at all.
If $Eq$ is not altered the algorithm terminates in the next step.
The type variable is not completely removed but stays inside $Eq$ only in one $a \doteq N$ constraint.
All other occurences are replaced by $N$.
The \texttt{subst} step can therefore only be executed once per type variable.
\end{description}

We see that with each iteration over the steps 1-3 at least one type variable is removed from the constraint set.
Due to the fact that there is never added a fresh type variable during the \unify{} algorithm,
the algorithm will terminate for any given finite set of constraints. \hfill $\square$

\section{Soundness, Completeness and Complexity Proofs}\label{chapter:soundness-completenessProof}
We show soundness and completeness by a case analysis over the type rules given by \TFGJ{}.
We will show that the constraints generated by \fjtype{} mirror
the \TFGJ{} type rules.
The \unify{} algorithm fullfils the constraints generated by \fjtype{} and
the \unify{} algorithm is sound and complete (see theorem \ref{theo:unifySoundness} and \ref{theo:unifyCompleteness}).
Therefore the $\fjtypeinference$ algorithm is sound and
complete, if the constraints generated by \fjtype{} mirror
the \TFGJ{} type rules.

As the \unify{} algorithm determines the set of general unifers it
  holds true that the types of $\mv{\Pi}'$ are
  instances of $\mv{\Pi}''$.

Now we show that the constraints generated by {\fjtype{}}
represent the type rules given in chapter \ref{chapter:type-rules}.
%For most of the expression typing rules this is trivial.
%Such as \texttt{GT-VAR}, \texttt{GT-FIELD} and \texttt{GT-NEW}.

%The \unify{} algorithm is complete, so every correct type is included in the solution set.
%We only have to choose the right type out of those solutions.
%When compiling multiple classes the problem arises,
%that only one of the type solutions calculated by \unify{} is correct
%in respective to the other classes that will be compiled afterwards.

%All types that are possible under the FGJ typing rules, plus our additional assumptions,
%also comply with the generated constraints.

%We match every generated constraint with the respective type rule to show soundness and completeness of our \textbf{FJTYPE} algorithm.
%This shows that none of the generated constraints remove a type which otherwise would be possible under the \TFGJ typing rules.
%The constraints are generated on expression statements.

The constraint generation starts with generating the method assumptions for the current class $\overline{\methodAssumption}$.
The constraints in $\consSet_o$ ensure valid overriding.
The constraints in $\consSet_m$ ensure that every type parameter has a bound.

\if0
To ensure well-formedness of types we have to ensure that every type variable generated by \fjtype{} has a bound assigned.
%The biggest part of the type rules can be expressed with the constraints generated by the \fjtype{} function.
The \texttt{WF-CLASS} rule checks types for well-formedness in a recursive fashion.
The types $\ol{T}$ used to instantiate the class $\exptype{C}{\ol{X}}$ need to follow the bounds given by the class declaration $\ol{T} <: [\ol{T}/\ol{X}]\ol{N}$, but also have to be well-formed themselfes $\ol{N}\ \texttt{ok}$.
\begin{lemma}[Well-formedness]\label{lemma:well-formedness}
  Given method assumptions $\Pi$ and a class \normalfont{$\texttt{class}\ \exptype{C}{\ol{X} \triangleleft \ol{N}}$} with
  \normalfont{$\fjtypeinference(\mv{\Pi}, \texttt{class}\ \exptype{C}{\ol{X} \triangleleft \ol{N}}) = \mv{\Pi}'$},
  then the following statements holds:\\
  \normalfont $\forall (\exptype{C}{\ol{X} \triangleleft \ol{N}}.\texttt{m} : \exptype{}{\ol{Y} \triangleleft \ol{P}}\ \ol{\type{T}} \to \itype{T}) \in \mv{\Pi}': \quad \ol{X} <: \ol{N}, \ol{Y} <: \ol{P} \vdash \type{T}, \ol{T}, \ol{P}, \ol{N}\ \texttt{ok}$
\end{lemma}
\textbf{Well-formedness Proof:}
Well-formedness of types already declared in the input class can be easily checked before the start of the \fjtype{} algorithm.
We therefore assume all types and type variables given in the input to be well-formed and are not checking this during the type inference step.
%This applies to the bounds $\ol{N}$ of class definitions $\texttt{class}\ \exptype{C}{\ol{X} \triangleleft \ol{N}}$
%as well as field types and types given in cast expressions.
%This also means every generic variable used in the input to be correctly declared according to the \texttt{WF-VAR} rule.
So assuming $\ol{X} <: \ol{N}, \ol{Y} <: \ol{P} \vdash \ol{N}, \ol{P}\ \texttt{ok}$, we only have to show that every type inserted by our type inference algorithm is well-formed.

Using theorem \ref{theo:unifySoundness} we can show that $\ol{X} <: \ol{N}, \ol{Y} <: \ol{P} \vdash \type{T}, \ol{T}, \ol{P}, \ol{N}\ \texttt{ok}$.
There is only two ways our type inference algorithm is able to produce a type, which is not well-formed.
Either by violating a generic bound (see \texttt{WF-CLASS} rule)
or by using a generic variable, which is not in scope (see \texttt{WF-VAR} rule).
Every type variable generated by \fjtype{} is used in atleast one constraint ensuring compliance with the \texttt{WF-CLASS} rule.
The \texttt{WF-VAR} rule is satisfied, because the \unify{} function returns a type environment $\gamma$ under which the types are correct.
This type environment is used as generic type parameters in every method (see \fjtypeinference{}). 

%The \texttt{WF-CLASS} rule is always satisfied, because the \fjtype function generates $\lessdot$ constraints for every type used as a generic instantiation.
%%The exception to this is the type variable $\tv{a}$ generated by \typeExpr{} for method calls.
%%Here a $\doteq$-constraint is generated with $\tv{a}$ on the left side.
%%The \unify{} function will assign a well-formed type to $\tv{a}$, because all other types used are well-formed and the 
%Due to lemma \ref{lemma:unify-exhaustive} we can be sure that each type variable either gets a type assigned or a generic variable, which is correctly added to the method type by $\fjtypeInsert$.
\fi

We compare the constraints generated by the \typeExpr{} function with the appropriate type rule from \TFGJ:
\begin{description}
  \item [Local var]
  No constraints are generated.
  \item[Method invocation]
By direct comparison we show that each of the generated constraints applies the same restrictions than the \texttt{GT-INVK} rule.
The \texttt{GT-INVK} rule states the condition $\textit{mtype}(m, \textit{bound}_\triangle(T_0), \mtypeEnvironment) = \exptype{}{\ol{Y} \triangleleft \ol{P}}\ \ol{U} \to U$.
The constraint $\type{R} \lessdot \exptype{C}{\ol{\tv{a}}}$ assures that the type of the expression $e_0$ contains the method \texttt{m}.
The type variables $\ol{\tv{b}}$ represent instances of $\ol{Y}$.
When calling a local method the generic variables $\ol{Y} \triangleleft \ol{P}$ are not present and $\ol{\tv{b}}$ is empty.
This is correct, because the \TFGJ type rules prevent local method calls to be polymorphic.

We generate constraints according to the \texttt{GT-INVK} rule:\\
\begin{small}
\begin{tabularx}{\linewidth}{lX|Xl}
  \textbf{\TFGJ{} Type rule} &&& \textbf{Constraints} \\
  $\mathtt{\environmentvdash e_0 : T_0 }$ &&&
    $({R}, \consSet_R) = \typeExpr(({\mtypeEnvironment} ;\overline{\localVarAssumption}), \texttt{e})$\\ 
  $\mathtt{\mathit{bound}_\Delta (T_0)}$ &&& $\type{R} \lessdot \exptype{C}{\ol{\tv{a}}}$ \\
  $\mathtt{\mathit{mtype}(m, \mathit{bound}_\Delta (T_0), \Pi)}$ &&& Lookup in the assumptions \\
 %$\textit{mtype}(m, \textit{bound}_\triangle(T_0)) = \ol{U} \to U$ & $\type{R} \doteq cl$\\
 $\environmentvdash \ol{e} : \ol{S}$ &&& $\forall \texttt{e}_i \in \ol{e} : (R_i, \consSet_i) = \typeExpr(({\mtypeEnvironment} ;
 \overline{\localVarAssumption}), \texttt{e}_i)$\\
 $\Delta \vdash \ol{S} \subeq  [\ol{V}/\ol{Y}]\ol{U}$ &&& $ \overline{R} \lessdot [\ol{\tv{b}}/\ol{Y}][\ol{\tv{a}}/\ol{X}]\overline{T}$\\
 $\environmentvdash \mathtt{e_0.m(\overline{e}) : [\ol{V}/\ol{Y}]U }$ &&& $ \tv{a} \doteq [\ol{\tv{b}}/\ol{Y}][\ol{\tv{a}}/\ol{X}]{T}$ \\
 $\Delta \vdash \ol{V} \subeq  [\ol{V}/\ol{Y}]\ol{P}$ &&& $\ol{\tv{b}} \lessdot [\ol{\tv{b}}/\ol{Y}][\ol{\tv{a}}/\ol{X}]\ol{P}$ \\
\end{tabularx}
\end{small}

No constraint is needed to ensure $\ol{V}\ \texttt{ok}$.
% If we call a method, which comes with generic parameters $\ol{Y} \triangleleft \ol{P}$,
% the $\ol{\tv{b}} \lessdot [\ol{\tv{b}}/\ol{Y}][\ol{\tv{a}}/\ol{X}]\ol{P}$ ensure well-formedness, knowing that $\ol{P}\ \texttt{ok}$.
% When the method call is to a local method, $\ol{\tv{b}}$ stays empty.
% It is possible for the local method to get equipped with generic parameters after the \unify{} step.
% If this is the case, every local method will get the same generic types $\gamma$ given by the \unify{} function.
% Under this type environment every type is well-formed (see lemma \ref{lemma:well-formedness}).
% The $\ol{\tv{a}} \lessdot [\overline{\tv{a}}/\ol{X}]\ol{N}$ constraints guarantee the well-formedness of the fresh type variables $\ol{\tv{a}}$.

\item[Field access]
The constraint generation behaves mostly the same as method invocation.
We also generate or-constraints in the case of multiple classes containing a field with the same name.
The field types are already given in the input and need not to be inferred.
%We assume they are well-formed and set the type of the field access to the same type as the field $\tv{a} \doteq
%[\overline{\tv{a}}/\ol{X}]\type{T}$

\if0
 \begin{tabular}{l|l}
   \textbf{FGJ Type rule} & \textbf{Constraints} \\
   $\Gamma \vdash e_0:T_0$ & $(\type{R}, ConS) = \typeExpr(Ass, e_r)$\\ 
   $\quad \mathit{fields}(\mathit{bound}_\triangle(T_0)) = \overline{T} \ \overline{f}$ & $\type{R} \doteq \exptype{C}{\textbf{fresh}(\ol{X})}$ \\
  %$\textit{mtype}(m, \textit{bound}_\triangle(T_0)) = \ol{U} \to U$ & $\type{R} \doteq cl$\\
  $\triangle; \Gamma \vdash \ol{e} : \ol{S}$ & $\forall e_i \in \overline{e} : (pt_i, ConS_i) = \typeExpr(Ass, e_i)$\\
  $\triangle \vdash \ol{S} <: \ol{U}$ & $ \bigcup_{T_i \in \overline{T}} (pt_i \lessdot \textbf{fresh}(T_i))$\\
  $\triangle; \Gamma \vdash \mathtt{e_0.m(\overline{e}) : U }$ & $a \doteq \textbf{fresh}(T)$ \\
 \end{tabular}
 \fi
 \item[Constructor] We generate constraints according to the \texttt{GT-NEW} rule:\\
 \begin{tabularx}{\linewidth}{lX|Xl}
  \textbf{\TFGJ{} Type rule} &&& \textbf{Constraints} \\
  $\environmentvdash \ol{e} : \ol{S}$ &&& $\forall \texttt{e}_i \in \overline{\texttt{e}} : (R_i, \consSet_i) = \typeExpr(({\mtypeEnvironment} ;
  \overline{\localVarAssumption}), \texttt{e}_i)$\\
  $\environmentvdash \ol{S} <: \ol{T}$ &&& $\overline{R} \lessdot \overline{T}$\\
  $\type{N}\ \texttt{ok}$ &&& $\overline{\tv{a}} \lessdot
  [\overline{\tv{a}}/\ol{X}]\ol{N}$\\
  $\mathtt{\mathit{bound}_\Delta (T_0)}$ &&& ${R} \lessdot \exptype{C}{\ol{\tv{a}}}$ \\
  $\mathtt{\mathit{fields}(\mathit{bound}_\Delta (T_0))}$ &&& $\tv{a} \doteq [\overline{\tv{a}}/\ol{X}]\type{T}$
\end{tabularx}

\item[Cast]
A cast can either be an upcast or a downcast or a so called stupid cast.
See respective type rules \texttt{GT-UCAST}, \texttt{GT-DCAST}, \texttt{GT-SCAST} in chapter \ref{chapter:type-rules}.
%Every type of cast is allowed in \TFGJ, therefore no restrictions in form of constraints are needed.
%The check if the cast type $\mathtt{N}$ is well formed ($\mathtt{N}\ \text{ok}$) is done in advance to the \textbf{FJTYPE} algorithm.
We assume that each given type in our input set is well-formed.
Therefore the cast type $\mathtt{N}$ is well formed too.
So every possible type of cast is allowed in \TFGJ, therefore no restrictions in form of constraints are needed.

\end{description}

\hfill $\square$

\section{NP-Hard Complexity Proof}\label{section:np-hard}
This section will show this by reducing the boolean satisfiability problem (SAT) to the \fjtypeinference{} algorithm.

\begin{figure}
\begin{lstlisting}
  class True extends Object{
  }
  class False extends Object{
  }

  class Nand1 extends Object{
    False nand(True a, True b){ return new False(); }
  }
  class Nand2 extends Object{
    True nand(False a, True b){ return new True(); }
  }
  class Nand3 extends Object{
    True nand(True a, False b){ return new True(); }
  }
  class Nand4 extends Object{
    True nand(False a, False b){ return new True(); }
  }

  class SATExample extends Object{
    True f;

    sat(v1, v2, v3, o1, o2){
      return o1.nand(v1, o2.nand(v2, v3));
    }

    forceSATtoTrue(v1, v2, v3, o1, o2){
      return new SATExample(this.sat(v1, v2, v3, o1, o2));
    }
  }
\end{lstlisting}

\caption{Representation for a SAT problem in FJ code}
\label{fig:fjSATcode}
\end{figure}

Any given boolean expression $B$ can be transformed to a typeless FJ program.
A type inference algorithm finding a possible typisation of this FJ program also solves the boolean expression $B$.
Figure \ref{fig:fjSATcode} shows an example of this.
The classes \texttt{True}, \texttt{False} and \texttt{Operations} always stay the same.
Here we assume that the boolean expression only consists out of $\neg \land$ (NAND) operators.
Now any boolean expression $B_\text{in} = v_1 \land \neg (v_2 \land v_3) \land \ldots$ can be expressed as a Java method.
The example in figure \ref{fig:fjSATcode} represents the problem $B_\text{in} = \neg(v_1 \land \neg (v_2 \land v_3))$.
Additionally we force the return type of the \texttt{sat} method to have the type \texttt{True}
by instancing the \texttt{SATExample} class, which requires the type \texttt{True}.
When using the \fjtypeinference{} algorithm on the generated FJ code it will
assign each parameter of the \texttt{sat} method with either the type \texttt{True} or \texttt{False}.
This represents a valid assignment for the expression $B_\text{in}$.
If \fjtypeinference{} fails to compute a solution the $B_\text{in}$ has no possible solution.
A correct solution for the \texttt{sat} method in figure \ref{fig:fjSATcode} would be:\\
\texttt{True sat(False v1, True v2, True v3, Nand4 o1, Nand1 o2)}

Any SAT problem can be transferred in polynomial time to a typeless FJ program.
Every literal $v$ in the SAT problem becomes a method parameter of the \texttt{sat} method, as well as every instance of a NAND operator used.

This reduction of SAT to our type inference algorithm proofs that its
complexity is at least NP-Hard.
\hfill $\square$

\section{NP-Complete Complexity Proof}\label{section:np-complete}
We know the algorithm is NP-hard (see \ref{theo:np-completeness}).
To proof NP-Completeness we have to show that it is possible to verify a solution in polynomial time.
The verification of a type solution is the FJ typecheck.

It is easy to see that the expression typing rules can be checked in polynomial time as long as subtyping between two types is verifiable in polynomial time.

Subtyping is also solvable in polynomial time in \TFGJ{}.
Assume $\exptype{C}{\ol{X}} \leq \exptype{D}{\ol{Y}}$ with the number of generics $\ol{X}$ and $\ol{Y}$ less or equal $n$.
Also the number of classes in the subtyperelation is less or equal to $n$.
With $n$ classes the \texttt{S-TRANS} rule can be applied a maximum of $n$ times.
Each time the \texttt{S-CLASS} rules is applied which sets in the variables $\ol{X}$ into the supertype.
This operations also runs in polynomial time, so the subtyping relation is decidable in polynomial time.

This shows that the time complexity of the GFJ type check is at least
polynomial or better. 
\hfill $\square$

\end{document}